\newtheorem{theorem}{Theorem}
\newtheorem{lemma}[theorem]{Lemma}
\newcommand{\x}{\vb{x}}
\newcommand{\obj}{\operatorname{OBJ}}
\newsavebox\myboxA
\newsavebox\myboxB
\newlength\mylenA
\newcommand*\myoverline[2][0.75]{%
    \sbox{\myboxA}{$\m@th#2$}%
    \setbox\myboxB\null
    \ht\myboxB=\ht\myboxA%
    \dp\myboxB=\dp\myboxA%
    \wd\myboxB=#1\wd\myboxA
    \sbox\myboxB{$\m@th\overline{\copy\myboxB}$}
    \setlength\mylenA{\the\wd\myboxA}
    \addtolength\mylenA{-\the\wd\myboxB}%
    \ifdim\wd\myboxB<\wd\myboxA%
       \rlap{\hskip 0.5\mylenA\usebox\myboxB}{\usebox\myboxA}%
    \else
        \hskip -0.5\mylenA\rlap{\usebox\myboxA}{\hskip 0.5\mylenA\usebox\myboxB}%
    \fi}
\newcommand*{\textoverline}[1]{$\myoverline{\hbox{#1}}\m@th$}
\newcommand*{\tran}{^{\mkern-1.0mu{\mathsmaller T}}}
\begin{document}


\title{Revisiting some classical linearizations of the quadratic binary optimization problem}

\author{\sc{Abraham P Punnen}\\
Department of Mathematics, Simon Fraser University,\\ Surrey,
 B C, Canada.{\tt ~~(apunnen@sfu.ca)}\\\AND
\sc{Navpreet Kaur}\\
Department of Mathematics, Simon Fraser University,\\ Surrey,
 B C, Canada.{\tt ~~(navpreet\_kaur\_2@sfu.ca)}}


\maketitle

\begin{abstract}
In this paper, we present several new linearizations of a quadratic binary optimization problem (QBOP), primarily using the method of aggregations. Although aggregations were studied in the past in the context of solving system of Diophantine equations in non-negative variables, none of the approaches developed produced practical models, particularly due to the large size of associate multipliers. Exploiting the special structure of QBOP we show that selective aggregation of constraints provide valid linearizations with interesting properties. For our aggregations, multipliers can be any non-zero real numbers. Moreover, choosing the multipliers appropriately, we demonstrate that the resulting LP relaxations have value identical to the corresponding non-aggregated models. We also provide a review of existing explicit linearizations of QBOP and presents the first systematic study of such models. Theoretical and experimental comparisons of new and existing models are also provided.
\end{abstract}

\section{Introduction}
Let $Q=(q_{ij})$ be an $n\times n$  matrix  and $\vb c\tran=(c_1,c_2,\ldots ,c_n)$ be an $n$-vector. Then, the {\it quadratic binary optimization problem} (QBOP) with linear constraints is to
 \begin{align*}
\text{Maximize}\hspace{11pt} &\sum_{i=1}^n\sum_{j\in R_i}q_{ij}x_ix_j+\sum_{i=1}^nc_ix_i \\
\mbox{Subject to: }~~& \x\in K, \mbox{ and }x_i\in \{0,1\} \mbox{ for all } i=1,2,\ldots ,n,
\end{align*}
where $K$ is a polyhedral set, $\x\tran=(x_1,x_2,\ldots ,x_n)$ and $R_i=\{j : q_{ij}\neq 0\}$. Unless otherwise stated, we assume that the matrix $Q$ is symmetric and its diagonal elements are zeros. Since our focus is primarily on the linearization problem associated with QBOP, for simplicity, we assume that $K$ is the unit cube in $R^n$. i.e. $K=\{0,1\}^n$. However, the theoretical results in this paper extend to any set $K$ that is polyhedral in nature. Consequently, most of the discussions in this paper will be on the {\it quadratic unconstrained binary optimization problem} (QUBO) with the understanding that the corresponding results extend directly to the more general case of QBOP. For a state-of-the-art discussion on QUBO, we refer to the book~\cite{punnen}.

The terminology `linearization' in the context of QBOP generally refers to writing a QBOP as an equivalent {\it mixed integer linear program} (MILP). There is another notion of `linearization' associated with QBOP where conditions are sought on the existence of a row vector $\vb d$ such that $\x^TQ\x = \vb{d}\tran\x$  for all binary $\x\in K$~\cite{hus,kp,punnen3,cela2,adams2,punnen4}. The notion of linearization considered  in this paper is the former.

A linearization, (i.e. an equivalent MILP formulation) of QBOP is said to be {\it explicit} if it contains a variable, say $y_{ij}$, corresponding to the product $x_ix_j$ for each $i\neq j$ such that $y_{ij}=1$ if $x_i=x_j=1$ and zero otherwise. There is another class of linearization studied for QBOP, called {\it compact linearization} that does not use the variables $y_{ij}$~\cite{b2,b3,b1,b9,glover1,b6,b7,b10}. Compact linearizations are investigated by many researchers, particularly in obtaining strong linear programming (LP) relaxation bounds~\cite{b2,b3,b1,b7} that matches the roof duality bound~\cite{ad}. The major advantage of a compact linearization is that it has relatively less number of variables and constraints, compared to explicit linearizations. However,  many such models use numbers of large absolute values in defining bounds on variables or expressions and this could lead to numerical difficulties. In a companion paper, we present a through analysis of existing as well as new compact linearizations~\cite{pk2}.  These formulations provide strong relaxations, but requires to use SDP solvers. Researchers also studied convex reformulations~\cite{b1,qcrb} of QBOP making use of results from semidefinite programming.

Our main focus in this paper is to study explicit formulations in a systematic way both from theoretical as well as computational perspectives. Though traditional explicit linearizations have a large number of constraints and variables, some of them provide strong upper bounds when the integrality constraints are relaxed. For example, the Glover-Woolsey linearization~\cite{gw1} (also called the {\it standard linearization}) have an LP relaxation value that matches other lower bounds such as the roof duality bound~\cite{ad}. Since  explicit formulations use the $y_{ij}$ variables, insights gained from the polyhedral structure of the Boolean quadric Polytope~\cite{pad} and valid inequalities derived from the support graph of the $Q$ matrix~\cite{gueye}  can be readily used to strengthen the corresponding LP relaxations and to aid development of branch and cut algorithms.

We distinguish between two classes of explicit formulations. In one case, the variables $y_{ij}$  defines the product $x_ix_j$ precisely for all feasible solutions $\x$ of the model. i.e. $y_{ij}$ takes zero or one values only and $y_{ij}=1$ precisely when $x_i=x_j=1$. We call MILP formulations having this property as {\it precise models}. In another class of MILP models, the above definition of the $y_{ij}$ variables is guaranteed only when $\x$ is an optimal solution of the model. We call such MILP formulations the {\it optimality restricted models}. Optimality restricted models normally have lesser number of constraints and mostly run faster than precise models. When the model is solved to optimality, the optimal solution and the objective function value produced are indeed accurate under both precise and optimality restricted models. On the other hand, if optimality is not reached and a solver terminates the program due to, say time limit restrictions, the objective function value of the best solution produced, say $\x^0$, may not be equal to $f(\x^0)=(\x^0)\tran Q\x^0 +\textbf{ c}\tran\x^0$ for an optimality restricted model. To get the precise objective function value of such a (heuristic) solution, one may need to recompute $f(\x^0)$.  After recomputing, the resulting objective function value is unpredictable and could be significantly worse or significantly better than the value reported by the solver. Later, we will illustrate this with an example. Thus, it is important to study both precise and optimality restricted models separately.

The MILP formulations are not simply a tool to compute an exact optimal solution. Considering the significant power of modern day MILP solvers such as Gurobi and Cplex, they are often used to obtain heuristic solutions  by imposing time restrictions or could be used in developing sophisticated matheuristics~\cite{mathe}. For such purposes, precise models are more valuable. However, to simply compute an optimal solution, optimality restricted models are, in general,  more useful. Because of this, after discussing each precise model, we also present a corresponding optimality restricted model, whenever applicable.

In this paper, we present the first systematic experimental and theoretical study of explicit linearizations of QUBO. Starting with a review of existing explicit linearizations of QUBO, we present different ways of generating new explicit linearizations and analyze the quality of them from an LP relaxation point of view. One class of our new explicit linearizations uses only $2n$ general constraints which matches the number of general constraints in many of the compact formulations~\cite{b2,b3,b1,b7} and also have a LP relaxation bound that matches the bound of the standard linearization. To the best of our knowledge, no such compact explicit linearizations are known before. We also present detailed and systematic experimental analysis using all of the explicit linearizations we obtained offering insights into the relative merits of these models.

One of the techniques we use in obtaining new linearizations is by making use of selective aggregation of constraints. Weighted aggregation that preserve the set of feasible solutions of a Diophantine system in non-negative variables have been investigated by many researchers~\cite{ag1,ag2,ag3,ag4,ag5,ag6,ag7,ag8}. While theoretically interesting, such an approach never worked well in practice, particularly due to the large coefficients associated with the aggregations and weaker LP relaxation bounds that results in. We show that selective aggregation of constraints provides a valid MILP model for QUBO for any selection of weights in the aggregation. In particular, by choosing these weights carefully, we can achieve an LP relaxation bound that is the same as that of the non-aggregated model. Our work also have linkages with surrogate duality providing strong duality relationship, unlike weak duality generally seen in MILP~\cite{dyer,glover2,glover3,glover4}.
\section{Basic Explicit MILP models }
Let us now discuss four basic explicit linearizations of QUBO and their corresponding optimality restricted versions. Linearizing the product $x_ix_j$ of binary variables is achieved by replacing it with a variable $y_{ij}$ and introduce additional constraints to force $y_{ij}$ to take 0 or 1 values only in such a way that $y_{ij}=1$ if and only if $x_i=x_j=1$. The process of replacing  $x_ix_j$ by a single variable $y_{ij}$ as discussed above is called {\it linearization of the product} $x_ix_j$. The product $x_ix_j$  can be linearized in many ways. Perhaps the oldest such linearization technique was discussed explicitly by Watters~\cite{watters}, which was implicit in other works as well~\cite{dan1,fortet,c1,goldman,zangwill}. To linearize the product $x_ix_j$, Watters~\cite{watters} introduced the constraints (which is implicit in the work of Dantzig~\cite{dan1})
\begin{align}
\label{ch5-e1}&x_i+x_j -y_{ij}\leq 1 \\
\label{ch5-e2}&2y_{ij}-x_i-x_j\leq 0\\
\label{ch5-e3}&y_{ij},x_i,x_j\in \{0,1\}
\end{align}
This leads to the following $0-1$ linear programming formulation of QUBO~\cite{dan1,watters}.
 \begin{align}
\nonumber\text{DW:\quad Maximize}\hspace{11pt} &\sum_{i=1}^n\sum_{j\in R_i}q_{ij}y_{ij}+\sum_{i=1}^nc_ix_i \\
\label{pe1}\mbox{Subject to: }~~&  x_i+x_j -y_{ij}\leq 1 \mbox{ for all } j\in R_i, i=1,2,\ldots ,n,\\
\label{pe2}&  2y_{ij}-x_i-x_j\leq 0 \mbox{ for all } j\in R_i, i=1,2,\ldots ,n,\\
\label{pe3}&x_i\in \{0,1\} \mbox{ for all } i=1,2,\ldots ,n,\\
\label{pe4}&y_{ij} \in \{0,1\} \mbox{ for all } j\in R_i,i=1,2,\ldots, n.
\end{align}
This formulation has at most $2n(n-1)$ general constraints and $n^2+n$ binary variables.

Using a disaggregation of constraints~\eqref{ch5-e2}, Glover and Woolsey~\cite{gw} proposed  to linearize the product $x_ix_j$ by introducing the constraints
\begin{align}
\label{ch5e0}&x_i+x_j -y_{ij}\leq 1 \\
\label{ch5-e4}&y_{ij} \leq x_i\\
\label{ch5-e5}&y_{ij}\leq x_j\\
\label{ch5-e6}&y_{ij}\geq 0, x_i,x_j\in\{0,1\}.
\end{align}
Note that $y_{ij}$ is now a continuous variable and the number of constraints for each $(i,j)$ pair is increased by one. This leads to the following $0-1$ programming formulation of QUBO~\cite{gw}.
 \begin{align}
\nonumber\text{GW:\quad Maximize}\hspace{11pt} &\sum_{i=1}^n\sum_{j\in R_i}q_{ij}y_{ij}+\sum_{i=1}^nc_ix_i \\
\label{ch5-e7}\mbox{Subject to: }~~& x_i+x_j-y_{ij} \leq 1 \mbox{ for all } j\in R_i, i=1,2,\ldots ,n,\\
\label{ch5-e9}&  y_{ij}-x_i \leq 0 \mbox{ for all } j\in R_i, i=1,2,\ldots ,n\\
\label{ch5-e8}& y_{ji}-x_i \leq 0 \mbox{ for all } j\in S_i, i=1,2, \ldots ,n\\
\label{ch5-e10}&x_i\in \{0,1\} \mbox{ for all } i=1,2\ldots ,n.\\
\label{ch5-e11}&y_{ij} \geq 0 \mbox{ for all } j\in R_i, i=1,2,\ldots ,n.
\end{align}
where $S_i=\{j: q_{ji}\neq 0\}$. Since $Q$ is symmetric, $R_i=S_i$ for $i=1,2,\ldots ,n$. The formulation GW has $3n(n-1)$ general constraints, an increase compared to DW. GW also has $n$ binary variables and $n(n-1)$ continuous variables. Note that the constraints \eqref{ch5-e9} and \eqref{ch5-e8} are the restatements of the constraints \eqref{ch5-e4} and \eqref{ch5-e5} when taken over all the pairs $i$ and $j$. We use this representation to aid easy presentation of some of our forthcoming results. In the literature the constraints \eqref{ch5-e4} and \eqref{ch5-e5} are normally used instead of \eqref{ch5-e9} and \eqref{ch5-e8} when describing GW.  GW is perhaps the most popular linearization studied in the literature for QUBO and  is often called the {\it standard linearization} or {\it Glover-Woolsey linearization}.

A variation of GW, often attributed to the works of Fortet~\cite{fortet,c1} can be stated as follows.
 \begin{align}
\nonumber\text{FT:\quad Maximize}\hspace{11pt} &\sum_{i=1}^n\sum_{j\in R_i}q_{ij}y_{ij}+\sum_{i=1}^nc_ix_i \\
\label{f4}\mbox{Subject to: }~~& x_i+x_j-y_{ij} \leq 1 \mbox{ for all } j\in R_i, i=1,2,\ldots ,n,\\
\label{f5}&  y_{ij}-x_i \leq 0 \mbox{ for all } j\in R_i, i=1,2,\ldots ,n\\
\label{f6}& y_{ij}-y_{ji} \leq 0 \mbox{ for all } j\in R_i, i=1,2, \ldots ,n\\
\label{f7}&x_i\in \{0,1\} \mbox{ for all } i=1,2\ldots ,n.\\
\label{f8}&y_{ij} \geq 0 \mbox{ for all } j\in R_i, i=1,2,\ldots ,n.
\end{align}
The constraints \eqref{f6} implies $y_{ij}=y_{ji}, j\in R_i, i=1,2,\ldots ,n.$ In fact, we can replace constraints \eqref{f6} by
\begin{equation}\label{ftem-n}
y_{ij}=y_{ji} \mbox{ for all } j\in R_i, j > i, i=1,2, \ldots ,n.
\end{equation}
The formulation FT has $3n(n-1)$ general constraints, $n$ binary variables, and $n(n-1)$ continuous variables. The number of constraints reduces to $\frac{5}{2}n(n-1)$ when constraints \eqref{f6} is replaced by \eqref{ftem-n}.

Let us now propose a new basic linearization of QUBO. Note that DW can be viewed as pairwise aggregation of the constraints \eqref{ch5-e4} and \eqref{ch5-e5} of GW along with additional binary restriction on the $y_{ij}$ variables. Using this line of reasoning and applying pairwise aggregation of  constraints \eqref{ch5-e9} and \eqref{ch5-e8} of GW, we get the  linearization
 \begin{align}
\nonumber\text{PK:\quad Maximize}\hspace{11pt} &\sum_{i=1}^n\sum_{j\in R_i}q_{ij}y_{ij}+\sum_{i=1}^nc_ix_i \\
\label{pe5}\mbox{Subject to: }~~&  x_i+x_j -y_{ij}\leq 1 \mbox{ for all } j\in R_i, i=1,2,\ldots ,n,\\
\label{pe6}& y_{ij}+y_{ji}-2x_i\leq 0 \mbox{ for all } j\in R_i, i=1,2,\ldots ,n,\\
\label{pe7}&x_i\in \{0,1\} \mbox{ for all } i=1,2\ldots ,n.\\
\label{pe8}& y_{ij} \geq 0 \mbox{ for all }  j\in R_i,i=1,2,\ldots, n.
\end{align}
\begin{theorem}
PK is a valid MILP formulation of QUBO.
\end{theorem}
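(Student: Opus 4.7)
The plan is to prove that PK is in fact a \emph{precise} model: for any $\vb{x} \in \{0,1\}^n$, the remaining constraints uniquely pin down $y_{ij} = x_i x_j$ for every $(i,j)$ with $j \in R_i$. Once this is established, the PK objective at every feasible $(\vb{x},\vb{y})$ coincides with the QUBO objective at $\vb{x}$, and conversely any binary $\vb{x}$ extends to a feasible PK solution by setting $y_{ij} = x_i x_j$; equivalence of the two problems follows immediately.

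The key leverage point is the symmetry of $Q$. Because $j \in R_i$ implies $i \in R_j$, constraint \eqref{pe6} appears with both choices of the ``outer'' index on the pair $\{i,j\}$, giving the two inequalities $y_{ij} + y_{ji} \leq 2x_i$ and $y_{ij} + y_{ji} \leq 2x_j$, and hence $y_{ij} + y_{ji} \leq 2\min(x_i,x_j)$. This combined bound is what allows a continuous $y$-variable, with no individual upper bound $y_{ij} \leq x_i$ in the formulation, to still be forced to the correct value.

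I would then do a short case analysis on $(x_i, x_j) \in \{0,1\}^2$. If $x_i = x_j = 1$, constraint \eqref{pe5} applied with both orderings of the pair gives $y_{ij} \geq 1$ and $y_{ji} \geq 1$, while the combined bound gives $y_{ij} + y_{ji} \leq 2$; hence $y_{ij} = y_{ji} = 1 = x_i x_j$. If instead $\min(x_i, x_j) = 0$, the combined bound becomes $y_{ij} + y_{ji} \leq 0$, and nonnegativity \eqref{pe8} forces $y_{ij} = y_{ji} = 0 = x_i x_j$. To finish, I would note the converse direction: given any binary $\vb{x}$, the assignment $y_{ij} = x_i x_j$ satisfies \eqref{pe5}--\eqref{pe8} trivially, and the value of the PK objective at this point equals $\vb{x}\tran Q \vb{x} + \vb{c}\tran \vb{x}$.

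I do not anticipate a significant obstacle; the argument reduces to elementary case analysis. The only subtle point is remembering to invoke symmetry of $Q$ to access \emph{both} instances of \eqref{pe6}: a single instance alone only yields $y_{ij} + y_{ji} \leq 2x_i$, which would fail to rule out positive $y$-values in the asymmetric case $x_i = 1$, $x_j = 0$, and the precise-model conclusion would be lost.
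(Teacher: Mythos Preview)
Your proposal is correct and follows essentially the same route as the paper: a case analysis on $(x_i,x_j)$ that uses constraint~\eqref{pe6} (in both orderings, via symmetry of $Q$) together with nonnegativity to force $y_{ij}=y_{ji}=0$ when $\min(x_i,x_j)=0$, and constraint~\eqref{pe5} together with~\eqref{pe6} to force $y_{ij}=y_{ji}=1$ when $x_i=x_j=1$. You are simply more explicit than the paper about the role of symmetry and about the converse feasibility direction.
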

\begin{proof}
Consider the variables $x_i$ and $x_j$. When $x_i=0$ or $x_j=0$, by constraint~\eqref{pe6}, $y_{ij}=y_{ji}=0$. Suppose $x_i=x_j=1$. Then, from constraint \eqref{pe5},
\begin{equation}\label{pea}
y_{ij} \geq 1\mbox{ and }y_{ji}\geq 1.
\end{equation} From constraint \eqref{pe6}, we have $y_{ij}+y_{ji}\leq 2$. In view of \eqref{pea}, $y_{ij}=y_{ji}=1$.
\end{proof}
PK has at most $2n(n-1)$ general constraints, $n$ binary variables, and $n(n-1)$ continuous variables. Unlike DW, the variables $y_{ij}$ in PK are continuous and it is  the most compact basic linearization of QUBO.
\subsection{Optimality restricted variations of the basic models}
In Section 1, we discussed the concept of optimality restricted variations of a linearization of QUBO, their merits and drawbacks, and why is it important and relevant to investigate them separately from exact models. Let us now discuss the optimality restricted variations derived from the four basic linearization models discussed in this section earlier. Let $R^+_i=\{j : q_{ij}> 0\}$ and $R^-_i=\{j: q_{ij}< 0\}$. First note that for DW, the binary restriction of $y_{ij}$ variables can be relaxed to $y_{ij}\geq 0$ for $j\in R^-_i, i=1,2,\ldots ,n$. This provides an optimality restricted variation of DW. To see why this works, note that when $x_i=x_j=0$, $y_{ij}=0$, without requiring $y_{ij}$ to be binary. Likewise, when $x_i=x_j=1$, $y_{ij}=1$, without requiring $y_{ij}$ to be binary. The binary restriction is relevant only when precisely one of $x_i$ or $x_j$ is 1. In this case constraint \eqref{pe2} becomes $y_{ij}\leq \frac{1}{2}$. The binary restriction then force $y_{ij}$ to be zero. But, in this case, if $q_{ij} < 0$ then $y_{ij}$ will take value $0$ at optimality regardless, $y_{ij}$ is binary or not. In fact, we can simplify the model further.
Note that the general linearization constraints \eqref{ch5-e1} and \eqref{ch5-e2} can be rewritten as
\begin{equation*}
x_i+x_j-1 \leq y_{ij} \leq (1/2)(x_i+x_j)
\end{equation*}
 Thus, at the optimality, constraint \eqref{pe1} is not active if $j\in R^+_i$ and hence can be removed. Similarly, constraint \eqref{pe2} can be removed if $j\in R^-_i$. Thus, an optimality restricted version of DW can be stated as
 \begin{align}
\nonumber\text{ORDW:\quad Maximize}\hspace{11pt} &\sum_{i=1}^n\sum_{j\in R_i}q_{ij}y_{ij}+\sum_{i=1}^nc_ix_i \\
\label{pe1x}\mbox{Subject to: }~~&  x_i+x_j -y_{ij}\leq 1 \mbox{ for all } j\in R^-_i, i=1,2,\ldots ,n,\\
\label{pe2x}&  2y_{ij}-x_i-x_j\leq 0 \mbox{ for all } j\in R^+_i, i=1,2,\ldots ,n,\\
\label{pe3x}&x_i\in \{0,1\} \mbox{ for all } i=1,2\ldots ,n,\\
\label{pe4x}&y_{ij} \in \{0,1\} \mbox{ for all } j\in R_i^+,i=1,2,\ldots, n,\\
\label{pe4xx}&y_{ij} \geq 0 \mbox{ for all } j\in R_i^-,i=1,2,\ldots, n.
\end{align}
The formulation ORDW has $\sum_{i=1}^{n}\left(|R^{-}_{i}|+|R^{+}_{i}|\right)$ general constraints and $\left(n+\sum_{i=1}^{n}|R^{-}_{i}|\right)$ binary variables.

While the optimality restricted model ORDW guarantees an optimal solution, if the solver is terminated before reaching an optimal solution, the $y_{ij}$ value from the resulting best solution found need not be equal to $x_ix_j$. Consequently, the reported objective function value could be erroneous and the correct QUBO objective function value needs to be recomputed using the available values of $x_i,i=1,2,\ldots ,n$.  Let us illustrate this with an example.  Consider the instance EX1 of QUBO with $$Q=\begin{bmatrix}0&\alpha\\ \alpha &0\end{bmatrix}\mbox{ and }\vb c\tran=\begin{bmatrix}0&0\end{bmatrix},$$
where $\alpha > 0$.
Suppose that we use ORDW to solve EX1 and terminate the solver at the solution $x_1=x_2=1,y_{12}=y_{21}=0$. Note that this is a feasible solution to ORDW (but not feasible for DW) with objective function value 0. But the QUBO objective function value of this solution is $2\alpha$, which is arbitrarily larger than the value solver produced for the model ORDW when terminated prematurely.

Thus, when MILP models in optimality restricted form are used, it is important to recompute the objective function value in order   to use the model as a heuristic by choosing the best solution produced within a given time/space limit. For precise models like DW, such an issue does not arise.

Let us now look at an optimality restricted version of GW.
Note that the linearization constraints in GW can be restated as
\begin{equation*}
\max\{0,x_i+x_j-1\}\leq y_{ij}\leq \min\{x_i,x_j\}
\end{equation*}
Thus, constraints \eqref{ch5-e7} and \eqref{ch5-e11} are redundant at optimality if $j\in R^+_i$ and constraints \eqref{ch5-e9} and \eqref{ch5-e8} are redundant at optimality if $j\in R^-_i$. Removing these redundant constraints, GW can be restated in the optimality restricted form as
 \begin{align}
\nonumber\text{ORGW:\quad Maximize}\hspace{11pt} &\sum_{i=1}^n\sum_{j\in R_i}q_{ij}y_{ij}+\sum_{i=1}^nc_ix_i \\
\label{ch5-e7x}\mbox{Subject to: }~~& x_i+x_j-y_{ij} \leq 1 \mbox{ for all } j\in R^-_i, i=1,2,\ldots ,n,\\
\label{ch5-e9x}&  y_{ij}-x_i \leq 0 \mbox{ for all } j\in R^+_i, i=1,2,\ldots ,n\\
\label{ch5-e8x}& y_{ji}-x_i \leq 0 \mbox{ for all } j\in S^+_i, i=1,2, \ldots ,n\\
\label{ch5-e10x}&x_i\in \{0,1\} \mbox{ for all } i=1,2\ldots ,n,\\
\label{ch5-e11x}&y_{ij} \geq 0 \mbox{ for all } j\in R^-_i, i=1,2,\ldots ,n.
\end{align}
The formulation ORGW has at most $2n(n-1)$ general constraints. Precisely, the number of general constraints will be $\sum_{i=1}^{n}\left(2 |R_{i}^{+}|+|R_{i}^{-}|\right)$

The main linearization constraints of FT can be stated as
\begin{equation*}
\max\{0,x_i+x_j-1\}\leq y_{ij}\leq x_i
\end{equation*}
Thus, constraints \eqref{f4} and \eqref{f8} are redundant at optimality if $j\in R^-_i$ and \eqref{f5} and \eqref{f6} are redundant at optimality if $j\in R^+_i$. Removing these redundant constraints, FT can be restated in a modified form as
 \begin{align}
\nonumber\text{ORFT:\quad Maximize}\hspace{11pt} &\sum_{i=1}^n\sum_{j\in R_i}q_{ij}y_{ij}+\sum_{i=1}^nc_ix_i \\
\label{f4x}\mbox{Subject to: }~~& x_i+x_j-y_{ij} \leq 1 \mbox{ for all } j\in R^-_i, i=1,2,\ldots ,n,\\
\label{f5x}&  y_{ij}-x_i \leq 0 \mbox{ for all } j\in R^+_i, i=1,2,\ldots ,n,\\
\label{f6x}& y_{ij}\leq y_{ji} \mbox{ for all } j\in R^+_i,  i=1,2, \ldots ,n,\\
\label{f7x}&x_i\in \{0,1\} \mbox{ for all } i=1,2\ldots ,n,\\
\label{f8x}&y_{ij} \geq 0 \mbox{ for all } j\in R^-_i, i=1,2,\ldots ,n.
\end{align}
The model ORFT has $\left(\sum_{i=1}^{n}(2 |R_{i}^{+}|+|R_{i}^{-}|\right)$ general constraints, $n$ binary variables, and $\left(\sum_{i=1}^{n}( |R_{i}^{-}|\right)$ continuous variables.
%

The linearization constraint \eqref{pe5} of PK can be written as \begin{equation*}
\max\{0,x_i+x_j-1\}\leq y_{ij}
\end{equation*}
Thus, the constraints \eqref{pe5} and \eqref{pe8} are redundant at optimality for $j\in R^+_i$. For $q_{ij}=q_{ji}>0$, if at least one of $x_{i}\mbox{ or }x_{j}=0$, from the optimality sense, $y_{ij}=y_{ji}=0$, but if $x_{i}=x_{j}=1$, then the  other linearization constraint
\begin{equation*}
y_{ij}+y_{ji}\leq 2x_i
\end{equation*}
 will become $y_{ij}+y_{ji}\leq 2$ and will be satisfied with equality by an optimal solution in this case. Thus, here upper bound restrictions on $y_{ij}$ are not required. Since $Q$ is symmetric, $q_{ij}=q_{ji}$ and hence they have the same sign. As a result,
 constraint \eqref{pe6} is redundant at optimality for $j\in R^-$. To avoid unboundedness, we need to add the constraint $y_{ij} \geq 0 \mbox{ for all }  j\in R^-_i,i=1,2,\ldots, n$. Thus, the optimality restricted version of PK can be stated as
 \begin{align}
\nonumber\text{ORPK:\quad Maximize}\hspace{11pt} &\sum_{i=1}^n\sum_{j\in R_i}q_{ij}y_{ij}+\sum_{i=1}^nc_ix_i \\
\label{pe5a}\mbox{Subject to: }~~&  x_i+x_j -y_{ij}\leq 1 \mbox{ for all } j\in R^-_i, i=1,2,\ldots ,n,\\
\label{pe6a}& y_{ij}+y_{ji}-2x_i\leq 0 \mbox{ for all } j\in R^+_i, i=1,2,\ldots ,n,\\
\label{pe7a}&x_i\in \{0,1\} \mbox{ for all } i=1,2\ldots ,n.\\
\label{pe8a-1}& y_{ij} \geq 0 \mbox{ for all }  j\in R^-_i,i=1,2,\ldots, n.
\end{align}
The number of general constraints and continuous variables in ORPK is $\sum_{i=1}^{n}\left(|R^{-}_{i}|+|R^{+}_{i}|\right)$ and $\left(\sum_{i=1}^{n}|R^{-}_{i}|\right)$  respectively.

\section{Linearization by Weighted Aggregation}\label{s2}
Let us now discuss some general techniques to generate new explicit linearizations of QUBO from DW, GW, FT and PK using weighted aggregation of selected constraints and thereby reducing the total number of constraints in the resulting model. In general, for integer (binary) programs, weighted aggregation of constraints alter the problem characteristics and the resulting models only provide an upper bound. This is the principle used in surrogate constraints and related duality~\cite{dyer,glover2,glover3,glover4}. Precise aggregations are also studied in the literature but are not suitable for practical applications. Interestingly, by selecting constraints to be aggregated carefully, we show that new precise formulations for QUBO can be obtained with interesting theoretical properties and have practical value.

An unweighted aggregation model derived from GW was presented by Glover and Woolsey~\cite{gw1} using constraints \eqref{ch5-e9} and \eqref{ch5-e8} with amendments provided by Goldman~\cite{goldman}. To the best of our knowledge, no other aggregation  models (weighted or unweighted) are discussed in the literature for explicit linearizations of QUBO using general weights. The effect of these aggregations on the quality of the corresponding LP relaxations will be discussed later.

\subsection{Weighted aggregation of type 1 linearization constraints}

Recall that all of the four basic models have the constraints
$$x_i+x_j-y_{ij} \leq 1 \mbox{ for all } j\in R_i, i=1,2,\ldots ,n$$
and we call them {\it type 1 linearization constraints}.
Let $\alpha_{ij} > 0, j\in R_i, i=1,2,\ldots ,n$. Consider the system of $n$ linear inequalities
\begin{equation}\label{pe9}
\sum_{j\in R_i}\alpha_{ij}(x_j-y_{ij})\leq \left(\sum_{j\in R_i}\alpha_{ij}\right)(1-x_i), \mbox{ for } i=1,2,\ldots ,n.
\end{equation}
obtained by weighted aggregation of type 1  linearization constraints.

Let DW($\alpha$)  be the mixed integer linear programs obtained from DW by replacing constraints set  \eqref{pe1} by \eqref{pe9}. When $q_{ij}=0$, both $y_{ij}$ and $y_{ji}$ are not defined. If $q_{ij}\neq 0$ by the symmetry of $Q$, $q_{ji}=q_{ij}\neq 0$. Thus, $y_{ij}$ is defined if and only if $y_{ji}$ is defined. Moreover, when both $y_{ij}$ and $y_{ji}$ are defined,  $j\in R_i$ if and only if $i\in R_j$.

\begin{theorem}\label{th1-dw}
  DW($\alpha$) is a valid formulation of QUBO.
\end{theorem}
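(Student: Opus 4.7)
The plan is to show that every feasible solution of DW($\alpha$) automatically satisfies $y_{ij}=x_ix_j$ for all $i$ and $j \in R_i$, so that DW($\alpha$) has exactly the same objective on its feasible set as QUBO. The proof splits naturally into two cases based on the value of $x_i$. Note that throughout, DW($\alpha$) retains the type 2 constraints $2y_{ij}-x_i-x_j\le 0$ and the binary restrictions on $y_{ij}$ from DW; only the type 1 constraints are replaced by their aggregation \eqref{pe9}.

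First, I would handle the case $x_i=0$. The type 2 constraint gives $2y_{ij}\le x_j\le 1$, so $y_{ij}\le \tfrac{1}{2}$, and the binary restriction forces $y_{ij}=0=x_ix_j$ for all $j\in R_i$. Note this step uses the binary restriction on $y_{ij}$ in an essential way, which is why it works for DW($\alpha$) but would need a different argument in the continuous-$y$ variants.

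Next, the case $x_i=1$. Here the type 2 constraint reads $2y_{ij}\le 1+x_j$, so if $x_j=0$ then $y_{ij}\le \tfrac{1}{2}$ and the binary restriction gives $y_{ij}=0=x_j$, while if $x_j=1$ then $y_{ij}\le 1=x_j$. In either case we obtain
\begin{equation*}
y_{ij}-x_j \le 0 \quad \text{for all } j\in R_i.
\end{equation*}
On the other hand, substituting $x_i=1$ into the aggregated constraint \eqref{pe9} yields $\sum_{j\in R_i}\alpha_{ij}(x_j-y_{ij})\le 0$, i.e.\ $\sum_{j\in R_i}\alpha_{ij}(y_{ij}-x_j)\ge 0$. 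Since every summand is non-positive and every weight $\alpha_{ij}$ is strictly positive, each term must vanish, forcing $y_{ij}=x_j=x_ix_j$ for every $j\in R_i$.

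Combining the two cases shows $y_{ij}=x_ix_j$ on the entire feasible set of DW($\alpha$), so the objective of DW($\alpha$) coincides with the QUBO objective, and the $x$-projection of any feasible (respectively optimal) solution of DW($\alpha$) is feasible (respectively optimal) for QUBO, and vice versa. I do not anticipate a serious obstacle here; the only subtlety worth flagging is the reliance on $\alpha_{ij}>0$ together with the sign-definite inequality $y_{ij}-x_j\le 0$, which is exactly the reason weighted aggregation of the type 1 constraints, unlike arbitrary aggregations in general integer programming, preserves validity rather than only providing a relaxation.
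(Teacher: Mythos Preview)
Your proof is correct and follows essentially the same approach as the paper: use the type~2 constraint $2y_{ij}\le x_i+x_j$ together with the binary restriction on $y_{ij}$ to force $y_{ij}=0$ when $x_i=0$ and to get $y_{ij}\le x_j$ when $x_i=1$, then use the aggregated constraint with $\alpha_{ij}>0$ to squeeze $y_{ij}=x_j$. Your organization is in fact slightly cleaner than the paper's, which derives the same conclusion via a contradiction argument on the sign of $x_j-y_{ij}$.
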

%
\begin{proof}
For any pair $(i,j)$ with $j\in R_i$ we have to show that $y_{ij}=y_{ji}=x_ix_j$. Suppose $x_i=0$. Choose any $j\in R_i$. By constraint set \eqref{pe2} we have $$2y_{ij}\leq x_j \mbox{ and } 2y_{ji}\leq x_j$$
Since $x_{ij}\in \{0,1\}$ and $y_{ij},y_{ji}\in \{0,1\}$ we have $y_{ij}=0$ and $x_{ij}=0$. Now, suppose $x_i=1$. Choose any $k\in R_i$. From the constraint set \eqref{pe9},
$$\sum_{j\in R_i}\alpha_{ij}(x_j-y_{ij})\leq 0$$
If $x_j-y_{ij} < 0$ for some $j\in R_i$, then $y_{ij}\neq 0$. Since $y_{ij}\in \{0,1\}$, $y_{ij}=1$ and hence $x_j=0$. But we already established that if $x_j=0$ then $y_{ij}=0$, a contradiction. If $x_j-y_{ij} > 0$ then there exists a $k\in R_i$ such that $x_k-y_{ik} < 0$ which is impossible as established above. Thus $x_j-y_{ij}=0$ for all $j\in R_i$. Thus, $y_{ij}=x_j$ for all $j\in R_i$. Thus, for any $j\in R_i$ if $x_j=1$ then $y_{ij}=1$. Since $j\in R_i$ we have $i\in R_j$. Thus, if $x_j=1$ and $x_i=1$ following similar analysis we get $y_{ji}=1$. This completes the proof.
\end{proof}

Let PK($\alpha$)  be the mixed integer linear programs obtained from PK by  replacing constraints set  \eqref{pe5} with \eqref{pe9} and by adding the upper bound constraints $y_{ij}\leq 1$ for $j\in R_{i},  i=1,2,\ldots ,n.$

\begin{theorem}\label{th1-pk}
PK($\alpha$) is a valid formulation of QUBO.
\end{theorem}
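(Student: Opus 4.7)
The plan is to mirror the structure of the validity proof for PK and show that any feasible point of PK($\alpha$) automatically satisfies $y_{ij}=y_{ji}=x_ix_j$ for every $j\in R_i$. Once this identification is established, the objective of PK($\alpha$) coincides with the QUBO objective on all feasible points, and the reverse inclusion (every binary $x$ extends to a feasible point of PK($\alpha$) by setting $y_{ij}=x_ix_j$) is routine to check against \eqref{pe6}, \eqref{pe9}, and the upper bounds $y_{ij}\leq 1$. So the real content is the case analysis on $x_i$.

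First I would handle the case $x_i=0$. Here constraint \eqref{pe6} gives $y_{ij}+y_{ji}\leq 0$ for every $j\in R_i$, and combined with the non-negativity of $y_{ij}$ and $y_{ji}$ this forces $y_{ij}=y_{ji}=0$, which equals $x_ix_j$. Notice that the aggregated constraint \eqref{pe9} is not used here; this piece of the argument is purely inherited from PK.

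Next I would handle the case $x_i=1$. Now \eqref{pe9} collapses to $\sum_{j\in R_i}\alpha_{ij}(x_j-y_{ij})\leq 0$. The key step is to argue that each summand $x_j-y_{ij}$ is non-negative. If $x_j=0$, then applying the previous case with the roles of $i$ and $j$ swapped (using $j\in R_i\Leftrightarrow i\in R_j$ by symmetry of $Q$) and invoking \eqref{pe6} at index $j$ yields $y_{ij}=0$, so $x_j-y_{ij}=0$. If $x_j=1$, then the added upper bound $y_{ij}\leq 1$ gives $x_j-y_{ij}\geq 0$. Since $\alpha_{ij}>0$ and the weighted sum of non-negative terms is at most zero, each term must vanish, so $y_{ij}=x_j$ for every $j\in R_i$. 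Applying the same argument with the roles of $i$ and $j$ interchanged (or invoking symmetry of the reasoning through $i\in R_j$) also gives $y_{ji}=x_i=x_j$, completing the identification $y_{ij}=y_{ji}=x_ix_j$.

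The main obstacle, and the reason the added upper bounds $y_{ij}\leq 1$ appear in the hypothesis even though they were unnecessary in PK, is that an aggregation argument only gives a clean conclusion when the summands have a consistent sign. Without $y_{ij}\leq 1$, a single $y_{ij}$ could exceed one and produce a very negative contribution, allowing some other index $k\in R_i$ to violate $y_{ik}=x_k$ while still respecting \eqref{pe9}. The bound $y_{ij}\leq 1$ is exactly what restores the per-term sign information that the disaggregated type 1 constraints carried intrinsically, and it is the only additional ingredient needed beyond the PK-style argument.
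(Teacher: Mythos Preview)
Your proposal is correct and follows essentially the same approach as the paper: both handle $x_i=0$ via constraint \eqref{pe6}, and for $x_i=1$ both use the fact that every summand $x_j-y_{ij}$ in the aggregated constraint \eqref{pe9} is non-negative (via \eqref{pe6} at index $j$ when $x_j=0$, and via the added bound $y_{ij}\le 1$ when $x_j=1$) to conclude from $\sum_{j\in R_i}\alpha_{ij}(x_j-y_{ij})\le 0$ that each term vanishes. Your presentation is slightly cleaner than the paper's contradiction-based phrasing, but the ingredients and the logical structure are the same.
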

\begin{proof}
If $x_i=0$ for any $i$, then by constraints \eqref{pe6} $y_{ij}+y_{ji}\leq 0$ yielding $y_{ij}=y_{ij}=0$ for any $j$. Now suppose $x_i=1$. Then, from constraints \eqref{pe9}
$$\sum_{j\in R_i}\alpha_{ij}(x_j-y_{ij})\leq 0$$
If $x_j -y_{ij} < 0$ then $y_{ij}\neq 0$. If $0 \leq y_{ij} \leq 1$, then $x_j=0$ which implies $y_{ij}=0$, a contradiction. Thus $y_{ij} > 1$ and this contradicts the upper bound restriction on $y_{ij}$. If $x_j-y_{ij} > 0$ then there exists a $k\in R_i$ such that $x_k-y_{ik} < 0$, a contradiction. Thus, $x_j=y_{ij}$ for all $j\in R_i$. Thus, if $x_j=1$ then $y_{ij}=1$. If $j\in R_i$ then $i\in R_j$. Using similar arguments with $x_j=1$ we get $ y_{ji}=1$. Thus, if $x_i=x_j=1$ then $y_{ij}=y_{ji}=1$ and this concludes the proof.
\end{proof}
The upper bound of 1 on the variables $y_{ij}$ is required for this formulation to be valid. Consider the instance EX2 of QUBO with
 $$Q=\begin{pmatrix}0&3&-6&-3\\ 3&0&6&3 \\-6&6&0&-6 \\ -3&3&-6&0\\ \end{pmatrix} \mbox{ and } \vb c\tran=(-3,-6,0,3)$$\\
choose $\alpha_{14}=3,\; \alpha_{34}=6$, all other $\alpha_{ij}=1, j\in R_{i}$ for $i=1,\dots ,n$. If the constraints $y_{ij}\leq 1$ is not used, an optimal solution obtained by the formulation PK($\alpha$) is
$x_{1}=0,\; x_{2}=x_{3}=x_{4}=1$,\; $y_{23}=y_{42}=2, \;y_{34}=7/6$, rest of the $y_{ij}\mbox{'s are } 0$ with objective function value 8. But the optimal objective function value of this problem is 6 with $x_{2}=x_{3}=1$ and $x_{1}=x_{4}=0$ as optimal solution.

Let GW($\alpha$)  be the mixed integer linear program obtained from  GW by  replacing constraints set  \eqref{ch5-e7} by \eqref{pe9}.
\begin{theorem}\label{th1-gw}
  GW($\alpha$) is a valid formulation of QUBO.
\end{theorem}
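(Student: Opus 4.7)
The plan is to adapt the case analysis from Theorem~\ref{th1-dw} and Theorem~\ref{th1-pk}, but exploit the fact that GW already contains both one-sided upper bounds $y_{ij}\leq x_i$ and $y_{ij}\leq x_j$ explicitly. This means we will \emph{not} need to tack on any artificial $y_{ij}\leq 1$ constraint (unlike PK$(\alpha)$), and we will not need the binary case split that appeared in DW$(\alpha)$, because the relevant upper bounds here are linear in $\x$ rather than in the constant $1/2$.

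First I would take an arbitrary feasible solution $(\x,\vb y)$ of GW$(\alpha)$ and show that for every pair $(i,j)$ with $j\in R_i$ one has $y_{ij}=x_ix_j$. Split on the value of $x_i$. If $x_i=0$, then constraint \eqref{ch5-e9} gives $y_{ij}\leq x_i=0$, and combining with $y_{ij}\geq 0$ from \eqref{ch5-e11} forces $y_{ij}=0=x_ix_j$, as desired. If $x_i=1$, substitute into the aggregated constraint \eqref{pe9} to get
\begin{equation*}
\sum_{j\in R_i}\alpha_{ij}(x_j-y_{ij})\leq 0.
\end{equation*}
The crucial observation is that each summand is non-negative: by symmetry of $Q$ we have $j\in R_i$ iff $i\in R_j$, so constraint \eqref{ch5-e8} (equivalently, constraint \eqref{ch5-e9} applied with the indices swapped) gives $y_{ij}\leq x_j$, hence $x_j-y_{ij}\geq 0$. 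Since $\alpha_{ij}>0$, a sum of non-negative terms bounded above by zero must be termwise zero, so $y_{ij}=x_j=x_ix_j$ for every $j\in R_i$.

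The only subtlety to verify, and where I expect the bookkeeping to matter most, is that the two-sided upper bound $y_{ij}\leq\min\{x_i,x_j\}$ is genuinely available for every pair with $j\in R_i$; this requires invoking both \eqref{ch5-e9} and \eqref{ch5-e8} (and the symmetry $R_i=S_i$ already noted in the paper). Once that is in place the argument is strictly easier than the DW$(\alpha)$ and PK$(\alpha)$ proofs: no contradiction chains on the sign of $x_j-y_{ij}$ are needed, because the GW upper bounds already pin this quantity to be non-negative. No additional bound on $y_{ij}$ has to be imposed, so the counterexample pattern used after Theorem~\ref{th1-pk} cannot arise here.
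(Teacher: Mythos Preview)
Your proof is correct and follows the same overall case split as the paper (handle $x_i=0$ via \eqref{ch5-e9}, then plug $x_i=1$ into the aggregated constraint \eqref{pe9}). The one place you diverge is in the treatment of the $x_i=1$ case: the paper simply invokes the argument of Theorem~\ref{th1-pk}, which runs a contradiction on the sign of $x_j-y_{ij}$ and appeals to the implicit bound $y_{ij}\leq 1$; you instead observe directly that \eqref{ch5-e8} (equivalently \eqref{ch5-e5}) already forces $y_{ij}\leq x_j$, so every summand $\alpha_{ij}(x_j-y_{ij})$ is non-negative and the inequality $\sum_{j\in R_i}\alpha_{ij}(x_j-y_{ij})\leq 0$ collapses termwise. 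This is a genuine simplification that exploits the extra GW constraint not present in PK, and it explains cleanly why no auxiliary $y_{ij}\leq 1$ bound is needed here. The paper's recycled PK argument is more uniform across models but slightly roundabout for GW; your version is the sharper one for this particular theorem.
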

\begin{proof}
If $x_i=0$ then by constraints \eqref{ch5-e9} and \eqref{ch5-e8} we have $y_{ij}=y_{ji}=0$ for all $j\in R_i$. Also, we have an implicit upper bound of $1$ on $y_{ij}$ for all $i$ and $j$. Now, following arguments similar to that of proof of Theorem~\ref{th1-pk}, we have if $x_i=x_j=1$ then $y_{ij}=y_{ji}=1$. This completes the proof.
\end{proof}

Finally, let FT($\alpha$)  be the mixed integer linear programs obtained respectively from  FT by  replacing constraints set  \eqref{f4} by \eqref{pe9}.

\begin{theorem}\label{th1-ft}
 FT($\alpha$) is a valid formulation of QUBO.
\end{theorem}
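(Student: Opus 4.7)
The plan is to leverage the combination of the one-sided bound constraint \eqref{f5} and the symmetry constraint \eqref{f6} to play the role that the explicit bound $y_{ij}\leq 1$ plays in the PK($\alpha$) proof, and then run the same ``non-negative terms summing to a non-positive quantity'' argument used in Theorems \ref{th1-dw}, \ref{th1-pk} and \ref{th1-gw}.

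First I would establish two preliminary facts that are independent of the aggregated constraint \eqref{pe9}. Since $Q$ is symmetric we have $j\in R_i$ if and only if $i\in R_j$, so constraint \eqref{f6} applied to the pair $(i,j)$ and to the pair $(j,i)$ simultaneously gives $y_{ij}\leq y_{ji}$ and $y_{ji}\leq y_{ij}$, hence $y_{ij}=y_{ji}$. Also, constraint \eqref{f5} gives $y_{ij}\leq x_i\leq 1$, which supplies the implicit upper bound of $1$ that FT($\alpha$) needs in order to avoid the counterexample situation flagged for PK($\alpha$).

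Next I would split on the value of $x_i$. If $x_i=0$, then \eqref{f5} together with $y_{ij}\geq 0$ forces $y_{ij}=0$ for every $j\in R_i$, and the symmetry $y_{ij}=y_{ji}$ established above yields $y_{ji}=0$ as well, so $y_{ij}=y_{ji}=x_ix_j$ trivially. If $x_i=1$, the right-hand side of the aggregated constraint \eqref{pe9} collapses to zero, giving
\begin{equation*}
\sum_{j\in R_i}\alpha_{ij}(x_j-y_{ij})\leq 0.
\end{equation*}
Each individual term $x_j-y_{ij}$ is in fact non-negative: when $x_j=1$ we have $x_j-y_{ij}=1-y_{ij}\geq 0$ because $y_{ij}\leq x_i=1$, and when $x_j=0$ the preceding case (applied with the roles of $i$ and $j$ swapped, using $i\in R_j$) forces $y_{ji}=0$ and hence $y_{ij}=y_{ji}=0$, so $x_j-y_{ij}=0$. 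Since all $\alpha_{ij}>0$ and the non-negative terms sum to a non-positive quantity, every term vanishes, i.e. $y_{ij}=x_j=x_ix_j$; the symmetry relation then gives $y_{ji}=y_{ij}=x_ix_j$, completing the proof.

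The only subtle step is the case $x_i=1$, $x_j=0$, where the FT model lacks a direct bound $y_{ij}\leq x_j$ analogous to constraint \eqref{ch5-e8} in GW. I expect this to be the main obstacle, and the key is to note that the conclusion $y_{ij}=0$ must be derived \emph{indirectly}, by first applying the $x=0$ case to the pair $(j,i)$ to obtain $y_{ji}=0$ and then using the symmetry $y_{ij}=y_{ji}$ forced by \eqref{f6}. This is the one place where the proof differs structurally from the proofs of Theorems \ref{th1-dw}, \ref{th1-pk}, and \ref{th1-gw}.
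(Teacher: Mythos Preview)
Your proof is correct and follows essentially the same approach as the paper's: both arguments use \eqref{f5} for the $x_i=0$ case, derive the implicit upper bound $y_{ij}\leq 1$ from \eqref{f5}, exploit the symmetry $y_{ij}=y_{ji}$ forced by \eqref{f6}, and then reduce to the ``non-negative terms summing to a non-positive quantity'' argument of Theorem~\ref{th1-pk} when $x_i=1$. The only cosmetic difference is that the paper (via Theorem~\ref{th1-pk}) phrases the last step as a contradiction argument, whereas you establish non-negativity of each term $x_j-y_{ij}$ directly before concluding that all terms vanish; the substance is identical.
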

\begin{proof}
If $x_i=0$ then by constraints \eqref{f5} we have $y_{ij}=0$ for all $j\in R_i$ and by constraints \eqref{f6} we have $y_{ji}=0$ for $j\in R_i$. Also, we have an implicit upper bound of $1$ on $y_{ij}$ for all $i$ and $j$. Now, following arguments similar to that of proof of Theorem~\ref{th1-pk}, we have if $x_i=x_j=1$ then $y_{ij}=y_{ji}=1$. This completes the proof.
\end{proof}

When $\alpha_{ij}=1$ for all $i,j$, we have the unit-weight special cases of   DW($\alpha$),GW($\alpha$), FT($\alpha$) and PK($\alpha$).  In this case, the aggregated constraint is
\begin{equation}\label{pe11}
\sum_{j\in R_i}(x_j-y_{ij})\leq |R_i|(1-x_i), \mbox{ for } i=1,2,\ldots ,n.
\end{equation}
Note that the number of general constraints  is $n^2$ in each of the models  DW($\alpha$) and PK($\alpha$), a reduction from $2n(n-1)$ for  DW and PK. The number of general constraints in FT($\alpha$) and GW($\alpha$) is $n(2n-1)$ a reduction from $3n(n-1)$ constraints in FT and GW.

\subsection{Weighted aggregation of type 2 linearization constraints}

The collection of linearization constraints in our basic models that are not of type 1 are called {\it type 2 linearization constraints}. For DW and PK, there is only one class of type 2 linearization constraints. For GW and FT there are two classes of type 2 linearization constraints. In such cases, there are different types of aggregations possible.

Let us start with aggregation of type 2 linearization constraints \eqref{pe6} from PK.
For $\beta_{ij} > 0, j\in R_i, i=1,2,\ldots ,n$, consider the inequality\\
\begin{equation}\label{pe12}
\sum_{j\in R_i}\beta_{ij}(y_{ij}+y_{ji})\leq 2\left(\sum_{j\in R_i}\beta_{ij}\right)x_i, \mbox{ for } i=1,2,\ldots ,n.
\end{equation}
Let PK($*,\beta$) be the MILP obtained from PK by replacing \eqref{pe6} with \eqref{pe12} and adding the constraints $y_{ij}\leq 1, j\in R_i, i=1,2,\ldots ,n$. The notation PK($*,\beta$) indicates type 1 linearization constraints are not aggregated and type 2 linearization constraints are aggregated using parameter values represented by $\beta_{ij}$.
\begin{theorem}
PK($*,\beta$) is a valid MILP model for QUBO.
\end{theorem}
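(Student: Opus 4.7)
The plan is to establish that every feasible solution $(x,y)$ of PK($*,\beta$) satisfies $y_{ij}=y_{ji}=x_ix_j$ for all $j\in R_i$; once this identity is in force, the PK($*,\beta$) objective coincides with the QUBO objective and validity follows. I would split the verification into cases on the value of $x_i$ (and, symmetrically, $x_j$), leveraging the symmetry $j\in R_i\Leftrightarrow i\in R_j$ which comes from $Q$ being symmetric with zero diagonal.

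First I would dispatch the case $x_i=0$. Because $\beta_{ij}>0$ and the non-negativities $y_{ij},y_{ji}\geq 0$ are inherited from PK (constraint \eqref{pe8}), the aggregated constraint \eqref{pe12} at index $i$ collapses to $\sum_{j\in R_i}\beta_{ij}(y_{ij}+y_{ji})\leq 0$, which forces $y_{ij}=y_{ji}=0$ for every $j\in R_i$. Applying the same argument at $j$ covers the symmetric case $x_j=0$ with $x_i$ arbitrary. For pairs with $x_i=x_j=1$, the retained type 1 constraint \eqref{pe5} yields $y_{ij}\geq x_i+x_j-1=1$, and the newly added upper bound $y_{ij}\leq 1$ then pins $y_{ij}=1$; swapping the roles of $i$ and $j$ delivers $y_{ji}=1$.

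The main subtlety — and the reason the upper bound $y_{ij}\leq 1$ is indispensable here — is that \eqref{pe12} at index $i$ only controls the weighted sum $\sum_{j\in R_i}\beta_{ij}(y_{ij}+y_{ji})$, so in the absence of individual caps a single $y_{ij}$ could be inflated while others are suppressed, exactly the pathology showcased by instance EX2 in the preceding PK($\alpha$) discussion. The plan side-steps this obstacle by, in the mixed case $x_i=1$, $x_j=0$, invoking \eqref{pe12} not at $i$ but at $j$: since $x_j=0$, its right-hand side collapses to zero, which by the first case drives $y_{ij}=y_{ji}=0$. This deployment of the aggregated constraint on the zero side — enabled by the $R_i\leftrightarrow R_j$ symmetry — is the crux, and after it the three cases combine to give $y_{ij}=y_{ji}=x_ix_j$ uniformly.
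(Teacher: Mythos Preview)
Your proposal is correct and follows essentially the same route as the paper's proof: both handle $x_i=0$ by combining the aggregated constraint \eqref{pe12} at $i$ with nonnegativity to force $y_{ij}=y_{ji}=0$, then treat the case $x_i=x_j=1$ via the type~1 constraint \eqref{pe5} together with the added cap $y_{ij}\le 1$, and finally cover the mixed case $x_i=1,\,x_j=0$ by invoking \eqref{pe12} at the zero index $j$. Your extra commentary on why the upper bound is indispensable is accurate but not strictly part of the argument.
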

\begin{proof}
If $x_i=0$ then constraint \eqref{pe12} implies $y_{ij}=-y_{ji}, j\in R_i$ which guarantees that $y_{ij}=y_{ji}=0$. If $x_i=1$ then constraint \eqref{pe12} is redundant. But then, constraint \eqref{pe5} assures that $y_{ij}\geq x_j$ and $y_{ji}\geq x_j$. We also have the constraint $y_{ij}\leq 1$. Thus, if $x_j$ and $x_i$ are both 1, then $y_{ij}=1$ and $y_{ji}=1$. If $x_i=1$ but $x_j=0$ then  constraint  \eqref{pe12} with $x_j$ on the RHS insures that $y_{ji}=y_{ij}=0$ and the result follows.
\end{proof}
For the special case of PK($*,\beta$) where $\beta_{ij}=1$ for all $j\in R_i, i=1,2,\ldots ,n$, the constraint \eqref{pe12} becomes
\begin{equation}\label{pe13}
\sum_{j\in R_i}(y_{ij}+y_{ji})\leq 2|R_i|x_i, \mbox{ for } i=1,2,\ldots ,n.
\end{equation}
PK($*,\beta$) have at most $n^2$ general constraints, a reduction from $2n(n-1)$  for PK.

Although the weighted aggregation of type 2 linearization constraints produced the valid MILP model PK($*,\beta$) for QUBO, weighted aggregation of type 2 linearization constraints of DW  need not produced a valid MILP model for QUBO. Let $\beta_{ij} > 0, j\in R_i, i=1,2,\ldots ,n$. Consider
\begin{equation}\label{pe13dw}
\sum_{j\in R_i}(2y_{ij}-x_j)\leq \left(\sum_{j\in R_i}\beta_{ij}\right)x_i, \mbox{ for } i=1,2,\ldots ,n.
\end{equation}
Let DW($*,\beta$) be MILP obtained from DW replacing constraints \eqref{pe2} by \eqref{pe13dw}. Then, DW($*,\beta$) is not a valid MILP model for QUBO. For example, consider the instance(EX3) of QUBO where $$Q=\begin{pmatrix}0&1&1\\ 1&0&0 \\1&0&0 \\ \end{pmatrix} \mbox{ and } \textbf{c}\tran=(1,-5,-5)$$\\
An optimal solution to the formulation DW($*,\beta$) obtained from this instance is $x_{1}=1,\; x_{2}=x_{3}=0,\; y_{12}=1,\;y_{23}=y_{31}=y_{13}=y_{21}=y_{32}=0$ with objective function value 2 where $\beta_{ij}=1,\;\forall\; j\in R_{i},\; i=1,\ldots,n$. Note that, $y_{12}\neq x_{1}x_{2}$. The optimal objective optimal function value for this problem is 1.

Weighted aggregation of type 2 linearization constraints \eqref{f5} in FT yields another MILP formulation of QUBO. Let $\gamma_{ij} > 0, j\in R_i, i=1,2,\ldots ,n$. Now consider the inequality
\begin{equation}\label{f9}
\sum_{j\in R_i}\gamma_{ij}y_{ij}\leq \left(\sum_{j\in R_i}\gamma_{ij}\right)x_i, \mbox{ for } i=1,2,\ldots ,n.
\end{equation}
Let FT($*,\gamma$) be the MILP obtained from FT by replacing the inequality \eqref{f5} in FT with \eqref{f9} and adding an upper bound of 1 on all $y_{ij}$ variables.
\begin{theorem}
FT($*,\gamma$) is a valid MILP formulation of QUBO.
\end{theorem}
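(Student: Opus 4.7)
The plan is to verify, for any feasible $(\x, \vb y)$ of FT($*,\gamma$), that $y_{ij} = x_i x_j$ for every pair with $j \in R_i$, using the same case-split strategy as in Theorems~\ref{th1-pk} and \ref{th1-ft}, but exploiting the symmetry constraint $y_{ij}-y_{ji} \leq 0$ to handle the asymmetric case where the aggregated constraint alone does not suffice.

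First, I would handle $x_i = 0$. The aggregated inequality \eqref{f9} becomes $\sum_{j\in R_i}\gamma_{ij} y_{ij} \leq 0$, and since every $\gamma_{ij} > 0$ while $y_{ij} \geq 0$, each term is non-negative, forcing $y_{ij} = 0$ for all $j \in R_i$. Importantly, by symmetry of $Q$ we have $j \in R_i \iff i \in R_j$, so this conclusion applied at index $j$ (whenever $x_j = 0$) also gives $y_{ji} = 0$ for all $i \in R_j$. This observation will be reused in the next step.

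Next, assume $x_i = 1$ and fix an arbitrary $j \in R_i$. I split on the value of $x_j$. If $x_j = 1$, then constraint \eqref{f4} gives $y_{ij} \geq x_i + x_j - 1 = 1$, while the newly added upper bound $y_{ij} \leq 1$ yields $y_{ij} = 1 = x_i x_j$. If $x_j = 0$, the aggregated constraint at row $i$ does \emph{not} by itself pin down $y_{ij}$ individually; here is where the symmetry/transport constraint $y_{ij} \leq y_{ji}$ (constraint \eqref{f6}) becomes essential. Since $x_j = 0$, the previous paragraph applied at index $j$ gives $y_{jk} = 0$ for all $k \in R_j$, and in particular $y_{ji} = 0$ because $i \in R_j$. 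Then \eqref{f6} yields $y_{ij} \leq y_{ji} = 0$, and combined with $y_{ij} \geq 0$ we conclude $y_{ij} = 0 = x_i x_j$.

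I expect the main conceptual obstacle to be precisely this last subcase: unlike in PK($*,\beta$) or GW($\alpha$), the aggregated type 2 inequality only controls a weighted sum over $j \in R_i$ when $x_i = 1$, so one cannot extract individual upper bounds on $y_{ij}$ from it directly. The fix is to notice that the symmetry constraints \eqref{f6} transport information from row $j$ to row $i$, so that the tight control given by the $x_j = 0$ case of step one propagates to $y_{ij}$. The explicit upper bound $y_{ij} \leq 1$ is likewise indispensable for the $x_j = 1$ branch, mirroring the role it plays in PK($\alpha$) as illustrated by instance EX2. Once these two ingredients are in place, the case analysis closes cleanly and shows that every feasible solution satisfies $y_{ij} = x_i x_j$, establishing validity.
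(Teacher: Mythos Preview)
Your proposal is correct and follows essentially the same approach as the paper: the paper's proof likewise uses \eqref{f9} at index $i$ for the case $x_i=0$, uses \eqref{f4} together with the added bound $y_{ij}\leq 1$ when $x_i=x_j=1$, and for $x_i=1,\,x_j=0$ applies \eqref{f9} at index $j$ to obtain $y_{ji}=0$ and then invokes \eqref{f6} to conclude $y_{ij}=0$. Your extra commentary on why the symmetry constraint \eqref{f6} and the upper bound are indispensable is accurate and mirrors the logic of the paper.
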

\begin{proof}
When $x_i=0$, constraint \eqref{f9} assures that $y_{ij}=0$. When $x_i=1$ constraint \eqref{f9} is redundant. However,  in this case, from constraints \eqref{f4} and the upper bound restrictions on $y_{ij}$ we have $1\geq y_{ij}\geq x_j$. Thus, if $x_j=1$ then $y_{ij}=1$. If $x_i=1$ but $x_j=0$ then the constraint \eqref{f9} corresponding to $x_j$ assures that $y_{ji}=0$. But then, by constraint \eqref{f6}, $y_{ij}=0$ and the result follows.
\end{proof}
FT($*,\gamma$) have at most $n(2n-1)$ general constraints and $n(n-1)$ upper-bound constraints.
We can also use weighted aggregation of  constraints \eqref{f6} in FT to yield a valid MILP model for QUBO. Let $\theta_{ij} > 0, j\in R_i, i=1,2,\ldots ,n$. Now consider the inequality
\begin{equation}\label{fte1x-n}
\sum_{j\in R_i}\theta_{ij}(y_{ij}-y_{ji})\leq0 \mbox{ for } i=1,2,\ldots ,n.
\end{equation}

Let FT($*,\theta$)  be the MILP obtained from FT by replacing constraints \eqref{f6} with \eqref{fte1x-n} and FT($*, \gamma,\theta$)be the MILP obtained by replacing constraints  \eqref{f5} by \eqref{f9} and \eqref{f6} by \eqref{fte1x-n} along with the addition of $y_{ij}\leq 1\;\forall j\in R_i, i=1,\ldots,n$.
\begin{theorem}\label{t1}
FT($*,\theta$) and FT($*, \gamma,\theta$) are valid MILP formulations of QUBO.
\end{theorem}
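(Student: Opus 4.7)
The plan is to verify both models by fixing an arbitrary feasible solution $(x,y)$ with $x\in\{0,1\}^n$ and showing $y_{ij}=x_i x_j$ for every $j\in R_i$. I would split the argument along the three cases $(x_i,x_j)\in\{(0,\cdot),(1,1),(1,0)\}$, and handle the two formulations FT($*,\theta$) and FT($*,\gamma,\theta$) in parallel, noting that they differ only in how the $x_i=0$ case is discharged (constraint \eqref{f5} directly, versus the aggregated constraint \eqref{f9}) and in whether the upper bound $y_{ij}\le 1$ is imposed explicitly or inherited from \eqref{f5}.

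First I would dispose of the easy cases. If $x_i=0$, then in FT($*,\theta$) constraint \eqref{f5} together with $y_{ij}\ge 0$ gives $y_{ij}=0$ for every $j\in R_i$; in FT($*,\gamma,\theta$) the same conclusion follows from \eqref{f9} since $\gamma_{ij}>0$ and $y_{ij}\ge 0$ force every term in the sum $\sum_{j\in R_i}\gamma_{ij}y_{ij}\le 0$ to vanish. If $x_i=x_j=1$, combining \eqref{f4} (which yields $y_{ij}\ge 1$) with the upper bound $y_{ij}\le 1$ (from \eqref{f5} in FT($*,\theta$), or from the added bound in FT($*,\gamma,\theta$)) gives $y_{ij}=1$. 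Symmetric reasoning gives $y_{ji}=1$.

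The substantive case is $x_i=1$, $x_j=0$, where \eqref{f4} only implies $y_{ij}\ge 0$ and the upper bound only gives $y_{ij}\le 1$, so the aggregated constraint \eqref{fte1x-n} must do the work. Here I would use the constraint indexed by $i$, namely $\sum_{k\in R_i}\theta_{ik}(y_{ik}-y_{ki})\le 0$, and analyze its terms according to the value of $x_k$. By the already–established cases, for every $k\in R_i$ with $x_k=1$ we have $y_{ik}=y_{ki}=1$, contributing $0$; for every $k\in R_i$ with $x_k=0$ we have $y_{ki}=0$ (applying the $x_i=0$ case to $k$) and $y_{ik}\ge 0$. The aggregated inequality therefore collapses to
\begin{equation*}
\sum_{k\in R_i,\,x_k=0}\theta_{ik}\,y_{ik}\le 0.
\end{equation*}
Since every $\theta_{ik}>0$ and every $y_{ik}\ge 0$, each summand vanishes, so in particular $y_{ij}=0$. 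Applying \eqref{f5} (respectively \eqref{f9}) to $j$ then gives $y_{ji}=0$ as well.

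The main obstacle I anticipate is precisely the possibility that the aggregation couples across pairs $(i,k)$ in a way that allows nonzero terms of opposite sign to cancel. The key observation that makes the argument clean is that by the time we use \eqref{fte1x-n}, the terms corresponding to $k\in R_i$ with $x_k=1$ are not just balanced in aggregate but individually zero, so the aggregated inequality actually reduces to a sum of nonnegative terms; hence no cancellation is possible and each individual $y_{ik}$ with $x_k=0$ must vanish. This is what allows arbitrary positive weights $\theta_{ij}$ to produce a precise (not merely optimality-restricted) formulation.
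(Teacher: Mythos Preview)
Your proof is correct. The paper itself omits the proof of this theorem, stating only that it is easy to construct, so there is no detailed argument to compare against; however, your case analysis on $(x_i,x_j)$ mirrors the style of the surrounding validity proofs in the paper (e.g., those for FT($*,\gamma$) and GW($*,\gamma+\delta$)), and the key step you identify---reducing the aggregated constraint \eqref{fte1x-n} at index $i$ to a nonnegative sum by first pinning down $y_{ik}=y_{ki}=1$ for $x_k=1$ and $y_{ki}=0$ for $x_k=0$---is exactly the mechanism needed to force $y_{ij}=0$ in the $x_i=1,\,x_j=0$ case. One minor remark: your closing sentence ``Applying \eqref{f5} (respectively \eqref{f9}) to $j$ then gives $y_{ji}=0$ as well'' is redundant, since you already established $y_{ji}=0$ for every $j$ with $x_j=0$ in the first case; but this does no harm.
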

We skip the proof of this theorem as it is easy to construct.

FT($*, \gamma,\theta$) has $n^2+n$ general constraints and both FT($*,\theta$) and FT($*, \gamma$) have $n(2n-1)$ general constraints.  The corresponding unweighted versions of these formulations can be obtained by using $\gamma_{ij}, \;\theta_{ij}\mbox{ and }\delta_{ij}=1,\; \forall j\in R_i, i=1,\ldots,n$.\\
Now, we will explore the weighted aggregation of the type 2 linearization constraints \eqref{ch5-e9} and \eqref{ch5-e8} of GW. Let  $\gamma_{ij} > 0$ and $\delta_{ij} > 0$ for all $j\in R_i, i=1,2,\ldots n$ (note that since $Q$ is assumed to be symmetric, $R_i=S_i$). Now, consider the inequality
\begin{equation}\label{pe14}
\sum_{j\in R_i}\gamma_{ij}y_{ij}+\sum_{j\in S_i}\delta_{ij}y_{ji}\leq \left(\sum_{j\in R_i}\gamma_{ij}+\sum_{j\in S_i}\delta_{ij}\right)x_i, \mbox{ for } i=1,2,\ldots ,n.
\end{equation}
Let GW($*,\gamma+\delta$) be the MILP obtained from GW by replacing \eqref{ch5-e9} and \eqref{ch5-e8} in GW by \eqref{pe14} and adding the upper bound constraints $y_{ij} \leq 1$ for all $i$ and $j$.
\begin{theorem}
GW($*,\gamma+\delta$) is a valid MILP model for QUBO.
\end{theorem}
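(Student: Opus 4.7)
The plan is to mimic the case analysis used in the earlier validity proofs (particularly those of Theorems~\ref{th1-pk} and the one for FT($*,\gamma$)), exploiting the fact that the aggregated constraint \eqref{pe14} has positive weights on the left and a right-hand side that collapses to $0$ when $x_i=0$, while the upper bound $y_{ij}\le 1$ together with the type~1 inequalities \eqref{ch5-e7} handles the case $x_i=1$. I will fix a feasible $(\x,y)$ of GW($*,\gamma+\delta$) and show that $y_{ij}=x_ix_j$ for every $j\in R_i$, $i=1,\ldots,n$.

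First I would dispose of the case $x_i=0$. Substituting $x_i=0$ into \eqref{pe14} yields
$$\sum_{j\in R_i}\gamma_{ij}y_{ij}+\sum_{j\in S_i}\delta_{ij}y_{ji}\leq 0.$$
Since $\gamma_{ij},\delta_{ij}>0$ and, by \eqref{ch5-e11}, $y_{ij},y_{ji}\geq 0$, each summand is nonnegative; a finite sum of nonnegative numbers that is bounded above by $0$ must vanish termwise. Hence $y_{ij}=0$ for every $j\in R_i$ and $y_{ji}=0$ for every $j\in S_i$. This matches $x_ix_j=0$ and, by symmetry of $Q$ (so that $R_i=S_i$), also $y_{ji}=x_jx_i$ whenever the outer index is $j$ with $i\in R_j$.

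Next I would handle $x_i=1$. Pick any $j\in R_i$; by symmetry $i\in R_j=S_j$. If $x_j=1$, the type~1 constraint \eqref{ch5-e7} gives $y_{ij}\geq x_i+x_j-1=1$, and together with the added upper bound $y_{ij}\le 1$ this forces $y_{ij}=1=x_ix_j$; the analogous argument applied to the pair $(j,i)$ gives $y_{ji}=1$. If instead $x_j=0$, the result of the previous paragraph (applied with the role of $i$ played by $j$) already delivers $y_{ij}=y_{ji}=0=x_ix_j$. Thus in every situation the $y$-variables are pinned to the products they are meant to represent, so the objective of GW($*,\gamma+\delta$) agrees with that of QUBO on all feasible points; combined with nonemptiness of the feasible set (any $\{0,1\}$-assignment together with $y_{ij}=x_ix_j$ satisfies \eqref{pe14} and the added upper bounds), this establishes validity.

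The only delicate point is the step that derives $y_{ij}=y_{ji}=0$ from the single aggregated inequality when $x_i=0$: one must be sure that losing the individual cuts $y_{ij}\le x_i$ and $y_{ji}\le x_i$ does not allow compensation between a negative-valued term and a positive-valued term. This is where the hypotheses $\gamma_{ij},\delta_{ij}>0$ and $y\geq 0$ are essential, and it is also what distinguishes GW($*,\gamma+\delta$) from a naive aggregation of \eqref{pe2} in DW (which, as EX3 shows, fails). No other step poses any real difficulty; the remainder of the argument is a routine transcription of the template used for PK($*,\beta$) and FT($*,\gamma$).
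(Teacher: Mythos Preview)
Your proof is correct and follows essentially the same case analysis as the paper: both dispose of $x_i=0$ by noting that \eqref{pe14} with nonnegative $y$'s and positive weights forces $y_{ij}=y_{ji}=0$, and both handle $x_i=1$ by combining the type~1 constraint \eqref{ch5-e7} with the added upper bound $y_{ij}\le 1$ (for $x_j=1$) or by invoking the $x_j=0$ case with the roles of $i$ and $j$ swapped. Your write-up is somewhat more detailed in justifying the termwise vanishing and in flagging why the analogous aggregation fails for DW, but the underlying argument is identical.
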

\begin{proof}
When $x_i=0$, constraint \eqref{pe14} assures that $y_{ij}=y_{ji}=0$. When $x_i=1$ constraint \eqref{pe14} is redundant. However,  in this case, from constraints \eqref{ch5-e7} and the upper bound restrictions on $y_{ij}$ we have $1\geq y_{ij}\geq x_j$ and $1\geq y_{ji}\geq x_j$. Thus, if $x_j=1$ we have $y_{ij}=y_{ji}=1$. If $x_j=0$ and $x_i=1$, using constraint  \eqref{pe14} corresponding to $x_j$, we have $y_{ji}=y_{ij}=0$ and the result follows.
\end{proof}
Unlike GW($*,\gamma+\delta$), the aggregation of \eqref{{fte1x-n}} and the following constraint
\begin{equation}\label{fte1x-n2}
\sum_{j\in R_i}\theta_{ij}(y_{ij}-y_{ji})\leq0 \mbox{ for } i=1,2,\ldots ,n.
\end{equation}
need not produce a valid MILP model corresponding to FT.

Note that PK($*,\beta$) is the same as GW($*,\beta+\beta$). When $\gamma_{ij}=\delta_{ij}=1$ for all $i$ and $j$, GW($*,\gamma+\delta$) becomes the {\it Glover-Woolsey aggregation} discussed in~\cite{gw1} along with the amendments proposed by Goldman~\cite{goldman}.

Let us now consider weighted aggregation of constraints \eqref{ch5-e9} and \eqref{ch5-e8} of GW separately.  Consider the inequalities
\begin{align}\label{pe21}
\sum_{j\in R_i}\gamma_{ij}y_{ij}&\leq \left(\sum_{j\in R_i}\gamma_{ij}\right)x_i, \mbox{ for } i=1,2,\ldots ,n.\\
\label{pe22}\sum_{j\in S_i}\delta_{ij}y_{ji}&\leq \left(\sum_{j\in S_i}\delta_{ij}\right)x_i, \mbox{ for } i=1,2,\ldots ,n.
\end{align}
In GW, replace constraints \eqref{ch5-e9} and \eqref{ch5-e8} by \eqref{pe21} and \eqref{pe22} and add the upper bound restriction $y_{ij}\leq 1$ for all $i,j$. Let GW($*,\gamma,\delta$) be the resulting model.
\begin{theorem}
GW($*,\gamma,\delta$) is a valid MILP model for QUBO.
\end{theorem}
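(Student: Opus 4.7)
The plan is to fix an arbitrary feasible solution of GW($*,\gamma,\delta$) and verify that $y_{ij} = x_ix_j$ for every $j \in R_i$, $i = 1,\ldots,n$; since the objective depends only on these products, this suffices. I will run a three-case analysis on the values of $x_i$ and $x_j$, exploiting the symmetry of $Q$, which gives $R_i = S_i$ and, in particular, $i \in R_j$ whenever $j \in R_i$. Throughout, the newly imposed bounds $0 \le y_{ij} \le 1$ together with $x_i \in \{0,1\}$ do the bookkeeping.

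Case A ($x_i = 0$): The aggregated constraint \eqref{pe21} specialized to index $i$ reduces to $\sum_{j \in R_i}\gamma_{ij} y_{ij} \le 0$, and since each $\gamma_{ij} > 0$ and each $y_{ij} \ge 0$, this forces $y_{ij} = 0$ for every $j \in R_i$. Applying \eqref{pe22} at index $i$ in the same way forces $y_{ji} = 0$ for every $j \in S_i = R_i$, so both $y_{ij}$ and $y_{ji}$ equal $x_ix_j = 0$.

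Case B ($x_i = x_j = 1$): The type 1 linearization constraint \eqref{ch5-e7} yields $y_{ij} \ge x_i + x_j - 1 = 1$, while the added upper bound gives $y_{ij} \le 1$, so $y_{ij} = 1$. Swapping the roles of $i$ and $j$ (legal because $i \in R_j$) yields $y_{ji} = 1$ as well, matching $x_ix_j = 1$. Case C ($x_i = 1$, $x_j = 0$): Here \eqref{pe21} at index $i$ becomes $\sum_{k \in R_i}\gamma_{ik}y_{ik} \le \sum_{k \in R_i}\gamma_{ik}$, which is implied by the upper bounds and so carries no information about the individual entry $y_{ij}$. The remedy is to pivot to the constraints at index $j$: since $x_j = 0$, \eqref{pe21} at index $j$ forces $y_{jk} = 0$ for every $k \in R_j$ and \eqref{pe22} at index $j$ forces $y_{kj} = 0$ for every $k \in S_j$. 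Taking $k = i$ (allowed since $i \in R_j = S_j$) gives $y_{ji} = 0$ and $y_{ij} = 0$, again matching $x_ix_j = 0$.

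The main obstacle is recognizing that aggregating \eqref{ch5-e9} and \eqref{ch5-e8} separately, rather than jointly as in GW($*,\gamma+\delta$), is only safe because the two aggregations are paired across indices: constraint \eqref{pe21} at a given row is useless whenever $x_i = 1$, and must be compensated by invoking the corresponding constraint at index $j$. The fact that the support graph of $Q$ is symmetric is what makes this compensation possible; without $R_i = S_i$, the argument in Case C would fail. Beyond this observation, no calculations beyond those already carried out for PK($*,\beta$) and GW($*,\gamma+\delta$) are required.
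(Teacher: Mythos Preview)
Your proof is correct and follows essentially the same approach as the paper: both arguments handle $x_i=0$ via the aggregated constraints \eqref{pe21}--\eqref{pe22} at index $i$, handle $x_i=x_j=1$ via the type~1 constraint \eqref{ch5-e7} together with the upper bound $y_{ij}\le 1$, and handle the mixed case $x_i=1,\,x_j=0$ by pivoting to the aggregated constraints at index $j$. Your presentation is somewhat more explicit in spelling out the role of symmetry ($R_i=S_i$, $i\in R_j$), but the underlying logic is identical.
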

\begin{proof}
When $x_i=0$ constraints \eqref{pe21} and \eqref{pe22} guarantees that $y_{ij}=y_{ji}=0$. When $x_i=1$ constraints \eqref{pe21} and \eqref{pe22} are redundant. However,  in this case, from constraints \eqref{ch5-e7} and the upper bound restrictions on $y_{ij}$ we have $1\geq y_{ij}\geq x_j$ and $1\geq y_{ji}\geq x_j$. Thus, if $x_j=1$ we have $y_{ij}=y_{ji}=1$. If $x_j=0$ and $x_i=1$, using constraint  \eqref{pe21} and \eqref{pe22} corresponding to $x_j$, we have $y_{ji}=y_{ij}=0$ and the result follows.
\end{proof}
When $\gamma_{ij}=\delta_{ij}=1$ for all $i,j$, we have the corresponding unit weight case.
 Aggregating only \eqref{ch5-e9} and  replacing by \eqref{pe21} and leaving \eqref{ch5-e8} unaltered is also a valid formulation for QUBO and we denote this by GW($*,\gamma$). Likewise, aggregating only \eqref{ch5-e8} as \eqref{pe22}  while leaving \eqref{ch5-e9} unaltered is another  valid formulation and we denote this by GW($*,\delta$). For each of the above variations, we also have the corresponding unit-weight cases (i.e. $\gamma_{ij}=1$ and $\delta_{ij}=1$ for all $(i,j)$). We are skipping the proofs for these two formulations. The number of general constraints for GW($*,\gamma+\delta$), GW($*,\gamma,\delta$), and  GW($*,\gamma$)(GW($*,\delta$) )   are $n^2, \;n(n+1)\mbox{ and } n(2n-1)$ respectively.


\subsection{Simultaneous weighted aggregations}
Simultaneously aggregating type 1 and type 2 constraints in our basic models also lead to valid MILP formulations of QUBO with significantly reduced number of constraints. This is however not applicable for DW since type 2 linearization constraints cannot be aggregated the way we discussed aggregations here. Let us first examine simultaneous aggregation in the context of GW. Let GW($\alpha,\gamma+\delta$) be the MILP obtained by replacing constraints \eqref{ch5-e7} with \eqref{pe9} and replacing constraints \eqref{ch5-e9} and \eqref{ch5-e8} with \eqref{pe14} and adding the upper bound restrictions $y_{ij}\leq 1$. That is,
 \begin{align}
\nonumber\text{GW($\alpha,\gamma+\delta$):~~ Maximize}\hspace{6pt} &\sum_{(i,j)\in S}q_{ij}y_{ij}+\sum_{i=1}^nc_ix_i \\
\label{p17}\mbox{Subject to: }&\sum_{j\in R_i}\alpha_{ij}(x_j-y_{ij})\leq \left(\sum_{j\in R_i}\alpha_{ij}\right)(1-x_i), \mbox{ for } i=1,2,\ldots ,n.\\
\label{p18}&\!\!\sum_{j\in R_i}\gamma_{ij}y_{ij}+\!\!\sum_{j\in S_i}\delta_{ij}y_{ji}\leq\!\! \left(\sum_{j\in R_i}\!\!\gamma_{ij}+\!\!\sum_{j\in S_i}\!\!\delta_{ij}\right)x_i, \mbox{ for } i\!\!=\!\!1,2,\ldots ,n.\\
\label{p19}&x_i\in \{0,1\} \mbox{ for all } i=1,2\ldots ,n.\\
\label{p20}&0\leq y_{ij} \leq 1 \mbox{ for all } j\in R_i, i=1,2,\ldots ,n.
\end{align}
\begin{theorem}
GW($\alpha,\gamma+\delta$) is a valid MILP model for QUBO.
\end{theorem}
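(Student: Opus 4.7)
My plan is to dispatch the proof by fixing an index $i$ and performing a case analysis on the value of $x_i$, reusing the ingredients from the proofs of Theorem~\ref{th1-gw} and of the earlier theorem on GW($*,\gamma+\delta$), since GW($\alpha,\gamma+\delta$) is obtained by carrying out both aggregations simultaneously. The key sign-control tools will be the non-negativity $y_{ij}\ge 0$ and the explicit upper bound $y_{ij}\le 1$ imposed in \eqref{p20}, together with the symmetry of $Q$ which guarantees $R_i=S_i$ and hence $j\in R_i\iff i\in R_j$.

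If $x_i=0$, the right-hand side of constraint \eqref{p18} vanishes. Because every $\gamma_{ij},\delta_{ij}>0$ and every $y_{ij},y_{ji}\ge 0$, each term in the resulting inequality is forced to zero, so $y_{ij}=y_{ji}=0$ for every $j\in R_i$, exactly as in the proof of GW($*,\gamma+\delta$). This disposes of the ``off'' case without ever invoking \eqref{p17}.

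If $x_i=1$, I would turn to the aggregated type-1 inequality \eqref{p17}, which becomes $\sum_{j\in R_i}\alpha_{ij}(x_j-y_{ij})\le 0$, and show that each summand is non-negative, forcing each to be zero. For a fixed $j\in R_i$: if $x_j=1$, then \eqref{p20} gives $y_{ij}\le 1=x_j$; if $x_j=0$, then the previous case applied at index $j$ (using $i\in R_j=S_j$) gives $y_{ij}=0=x_j$. Either way $x_j-y_{ij}\ge 0$, so the weighted sum being non-positive forces $y_{ij}=x_j$ for every $j\in R_i$. Re-running the symmetric argument at index $j$ yields $y_{ji}=x_i$. Combining the two cases, $y_{ij}=y_{ji}=x_ix_j$ whenever $j\in R_i$, so the MILP objective coincides with the QUBO objective on feasible integer solutions.

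The main obstacle I anticipate is the subcase $x_i=1$, $x_j=0$: neither aggregation at index $i$ alone rules out $y_{ij}>0$, so pinning $y_{ij}$ down requires appealing to \eqref{p18} at index $j$, which is precisely where the symmetry $R_i=S_i$ is exploited. Once this cross-index deduction is justified, and it is checked that the recursive appeal (the $x_i=1$ branch needs the $x_j=0$ branch, but the latter only uses \eqref{p18} and \eqref{p20} and so is settled independently) is not circular, the rest of the proof is routine sign bookkeeping analogous to the earlier theorems.
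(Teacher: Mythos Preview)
Your proposal is correct and follows essentially the same approach as the paper: a case split on $x_i$, using \eqref{p18} to force $y_{ij}=y_{ji}=0$ when $x_i=0$, and then using \eqref{p17} together with the upper bound $y_{ij}\le 1$ and the already-settled $x_j=0$ case to force $y_{ij}=x_j$ when $x_i=1$ (the paper delegates this last step to ``arguments similar to the proof of Theorem~\ref{th1-pk}''). Your explicit identification of the cross-index appeal in the subcase $x_i=1$, $x_j=0$ and the check that it is non-circular is exactly the content hidden behind that reference.
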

\begin{proof}When $x_i=0$, constraint \eqref{p18} assures that $y_{ij}=0$ for all $j\in R_i$ and $y_{ji}=0$ for all $j\in S_i$ .(Since $Q$ is symmetric, $R_i=S_i, i=1,2,\ldots ,n$.) When $x_i=1$, using arguments similar to those in the proof of Theorem~\ref{th1-pk}, it can be shown that $y_{ij}=x_j$ for all $j\in R_i$ and $y_{ji}=x_j$ for all $j\in S_i$. This establishes the validity of the model.
\end{proof}

GW($\alpha,\gamma+\delta$) have $2n$ general constraints, $n(n-1)$ upper bound constraints,  $n$ binary variables, and $n(n-1)$ continuous variables. This is the most compact among the explicit formulations.

Other variations of GW($\alpha,\gamma+\delta$) are also interesting. For example, we can replace \eqref{p18}  by \eqref{pe21} and \eqref{pe22} and let GW($\alpha,\gamma,\delta$) be the resulting model which is valid for QUBO.  The model GW($\alpha,\gamma$) obtained by replacing \eqref{p18} by \eqref{pe21} and \eqref{ch5-e8} is yet another valid MILP for QUBO. Likewise, the model GW($\alpha,\delta$) obtained by replacing \eqref{p18} by \eqref{pe22} and \eqref{ch5-e9} is also a valid model for QUBO.

Let us now combine the two types of weighted aggregations in FT (i.e. inequalities \eqref{pe9} and \eqref{f9}). The resulting MILP formulation is
\begin{align}
\nonumber\text{FT($\alpha,\gamma,*$):\quad Maximize}\hspace{11pt} &\sum_{(i,j)\in S}q_{ij}y_{ij}+\sum_{i=1}^nc_ix_i \\
\label{f11}\mbox{Subject to: }~~&\sum_{j\in R_i}\alpha_{ij}(x_j-y_{ij})\leq \left(\sum_{j\in R_i}\alpha_{ij}\right)(1-x_i), \mbox{ for } i=1,2,\ldots ,n.\\
\label{f12}&\sum_{j\in R_i}\gamma_{ij}y_{ij}\leq \left(\sum_{j\in R_i}\gamma_{ij}\right)x_i, \mbox{ for } i=1,2,\ldots ,n.\\
\label{f15} &y_{ij} \leq y_{ji} \mbox{ for all } j\in R_i,  i=1,2,\ldots ,n.\\
\label{f13}&x_i\in \{0,1\} \mbox{ for all } i=1,2\ldots ,n.\\
\label{f14}&0\leq y_{ij} \leq 1 \mbox{ for all } j\in R_i, i=1,2,\ldots ,n.
\end{align}
\begin{theorem}
FT($\alpha,\gamma,*$) is a valid MILP formulation of QUBO.\label{thm2}
\end{theorem}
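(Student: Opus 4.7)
The plan is to show that for any feasible $(x,y)$ of FT($\alpha,\gamma,*$) with $x\in\{0,1\}^n$, one has $y_{ij}=x_ix_j$ for every $j\in R_i$, proceeding by cases on the value of $x_i$. The workflow mirrors the proofs of the earlier aggregation theorems (Theorems~\ref{th1-pk} and~\ref{th1-ft}) but uses constraint~\eqref{f15} in an essential symmetrizing role to compensate for the fact that both type~1 and type~2 constraints have been aggregated into a single inequality per index.

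The first step is to exploit the symmetry of $Q$ to derive $y_{ij}=y_{ji}$ for all admissible pairs. Since $Q$ is symmetric, $j\in R_i$ if and only if $i\in R_j$. Applying \eqref{f15} at index $i$ with the pair $(i,j)$ gives $y_{ij}\leq y_{ji}$, while applying it at index $j$ with the pair $(j,i)$ gives $y_{ji}\leq y_{ij}$. Hence $y_{ij}=y_{ji}$ throughout.

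Next, handle the case $x_i=0$. Constraint~\eqref{f12} reduces to $\sum_{j\in R_i}\gamma_{ij}y_{ij}\leq 0$; since each $\gamma_{ij}>0$ and each $y_{ij}\geq 0$ by~\eqref{f14}, this forces $y_{ij}=0$ for every $j\in R_i$, and then $y_{ji}=y_{ij}=0$ by the preceding step. This gives $y_{ij}=y_{ji}=0=x_ix_j$ as required.

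For the case $x_i=1$, constraint~\eqref{f11} becomes $\sum_{j\in R_i}\alpha_{ij}(x_j-y_{ij})\leq 0$. The key sub-claim is that $y_{ij}\leq x_j$ for every $j\in R_i$. Indeed, if $x_j=1$ this is just the upper bound $y_{ij}\leq 1$ from~\eqref{f14}; if $x_j=0$, the preceding case applied at index $j$ yields $y_{jk}=0$ for all $k\in R_j$, and in particular $y_{ji}=0$, whence $y_{ij}=y_{ji}=0=x_j$ by symmetry. With the sub-claim in hand, each summand $\alpha_{ij}(x_j-y_{ij})$ is non-negative while the sum is $\leq 0$, forcing $y_{ij}=x_j=x_ix_j$ for every $j\in R_i$. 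This closes the argument.

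The step I expect to require the most care is reducing the case $x_i=1$ to the case $x_i=0$ via the sub-claim $y_{ij}\leq x_j$: the aggregated constraints \eqref{f11} and \eqref{f12} do not by themselves bound $y_{ij}$ by $x_j$ pointwise, so one must bootstrap using~\eqref{f15} together with the already-established $x_j=0$ case. This is exactly where the symmetrization $y_{ij}=y_{ji}$ is indispensable, and it is the step that would fail if \eqref{f15} were replaced by, say, its one-directional aggregation \eqref{fte1x-n}.
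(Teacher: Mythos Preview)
Your proof is correct and follows essentially the same approach as the paper: use \eqref{f12} to force $y_{ij}=0$ when $x_i=0$, then for $x_i=1$ bootstrap from the already-handled $x_j=0$ case together with the upper bound $y_{ij}\le 1$ to show each summand in \eqref{f11} is non-negative, forcing $y_{ij}=x_j$. Your explicit derivation of $y_{ij}=y_{ji}$ from \eqref{f15} at the outset makes the symmetrization step clearer than in the paper, which invokes \eqref{f15} more implicitly and defers to the argument of Theorem~\ref{th1-pk}.
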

\begin{proof}
When $x_i=0$, constraint \eqref{f12} assures that $y_{ij}=0$ for all $j\in R_i$. By constraints \eqref{f15}, whenever $y_{ij}=0$, we have $y_{ji}=0$. When $x_i=1$, using arguments similar to those in the proof of Theorem~\ref{th1-pk}, it can be shown that $y_{ij}=x_j$ for all $j \in R_i$ and by using \eqref{f15} $y_{ji}=x_j$ for all $j\in R_i$. This establishes the validity of the model.
\end{proof}
FT($\alpha,\gamma,*$) have at most $n(n+1)$ general constraints and $n(n-1)$ upper bound constraints.Using similar arguments, it can be shown that FT($\alpha,*,\theta$) is a valid formulation of QUBO as well.  Let FT($\alpha,\gamma,\theta$) be the MILP obtained from FT($\alpha,\gamma,*$) by replacing \eqref{f15} with  \eqref{fte1x-n2}. From the basic model FT, another possible aggregation can be obtained by using following constraint instead:
\begin{equation}\label{f15-n}
\sum_{j\in R_i}\theta_{ij}(y_{ij}-y_{ji})=0
\end{equation}
It may be noted that FT($\alpha,\gamma,\theta$) is a valid formulation of QUBO if the constraint \eqref{f15} in FT($\alpha,\gamma,*$) is replaced by \eqref{fte1x-n2}. But, using \eqref{f15-n} instead does not give a valid formulation of QUBO. This can be established using example EX4.
 The number of general constraints in FT($\alpha,\gamma,\theta$) is $3n$.

Finally, it is possible to generate a combined aggregation model from PK, PK($\alpha,\beta$). However, such a model is a special case of GW($\alpha,\gamma+\delta$) with $\gamma = \delta$, hence will be a valid formulation of QUBO.

\section{Optimality restricted variations and aggregations}\label{s3}
Let us now discuss the optimality restricted variations of the weighted aggregation based models discussed in section \ref{s2}. This needs to be done with more care. In this case, we are loosing some of the advantage gained by optimality restricted variations discussed earlier while gaining some other advantages.
\subsection{Weighted aggregation of type 1 constraints}
Note that under the optimality restricted variations, the number of type 1 constraints are reduced. Let us consider weighted aggregation of these reduced type 1 linearization constraints.
Let $\alpha_{ij} > 0$ for $j\in R_i^-, i=1,2,\ldots ,n$. Now consider the inequality
\begin{equation}\label{p2x-n}
\sum_{j\in R^-_i}\alpha_{ij}(x_j-y_{ij})\leq \left(\sum_{j\in R^-_i}\alpha_{ij}\right)(1-x_i), \mbox{ for } i=1,2,\ldots ,n.
\end{equation}
obtained by the weighted aggregation of the  type 1 linearization constraints in optimality restricted variations of the basic models.
Unlike the precise basic models, for the optimality restricted variations of the basic models, if we replace the type 1 linearization constraints by \eqref{p2x-n}, the resulting MILP need not be a valid model for QUBO. To see this, consider the MILP model ORDW and replace constraints \eqref{pe1x} by \eqref{p2x-n}. Let ORDW-A be the resulting MILP. Interestingly, ORDW-A is not a valid MILP model for QUBO. To see this, consider the instance of QUBO (EX6) with
$$Q=\begin{pmatrix}0&-2&0\\ -2&0&-1 \\0&-1&0 \\ \end{pmatrix} \mbox{ and } \vb c\tran=(5,5,0)$$
where $\alpha_{ij}=1\; \mbox{for all } j\in R_{i}$ for $i=1,\dots ,n$. An optimal solution to the corresponding ORDW-A is $x_1 = x_2 = 1, \; x_3 = 0,\; y_{12} = y_{23} = 1,\;y_{13}=y_{21}=y_{31}=y_{32}=0$ with objective function value 7 but the optimal objective function value for this QUBO is 6. Here, $ y_{23}\neq x_{2}x_{3}$. Also, $y_{12}\neq y_{21}$ and $y_{32}\neq y_{23}$.
 However, we can still use \eqref{p2x-n} to replace constraints \eqref{pe1x} provided we put back all remaining constraints from the corresponding basic model rather than the optimality restricted model. More precisely, consider the model ORDW($\alpha$)
\begin{align}
\nonumber\text{Maximize}\hspace{11pt} &\sum_{i=1}^n\sum_{j\in R_i}q_{ij}y_{ij}+\sum_{i=1}^nc_ix_i \\
\label{pe1xa1}\mbox{Subject to: }~~&  \sum_{j\in R^-_i}\alpha_{ij}(x_j-y_{ij})\leq \left(\sum_{j\in R^-_i}\alpha_{ij}\right)(1-x_i), \mbox{ for } i=1,2,\ldots ,n.\\
\label{pe2xa}&  2y_{ij}-x_i-x_j\leq 0 \mbox{ for all } j\in R_i, i=1,2,\ldots ,n,\\
\label{pe3xa}&x_i\in \{0,1\} \mbox{ for all } i=1,2\ldots ,n,\\
\label{pe4xa}&y_{ij} \in \{0,1\} \mbox{ for all } j\in R_i,i=1,2,\ldots, n.
\end{align}
\begin{theorem}\label{thm10}
ORDW($\alpha$) is a valid MILP formulation of QUBO.
\end{theorem}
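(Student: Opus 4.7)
The plan is to verify that at any optimal solution $(\x,y)$ of ORDW($\alpha$), we have $y_{ij}=x_ix_j$ for every $i$ and every $j\in R_i$. The key structural observation is that ORDW($\alpha$) retains \eqref{pe2xa} for \emph{all} $j\in R_i$, not merely for $j\in R^+_i$ as in ORDW; this is precisely what distinguishes it from the faulty ORDW-A and what rescues the aggregation of type 1 constraints. I would split the argument into the easy cases with $x_i+x_j\le 1$ and the harder subcase $x_i=x_j=1$, then further split the latter by the sign of $q_{ij}$.

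For the easy cases, \eqref{pe2xa} combined with $y_{ij}\in\{0,1\}$ immediately forces $y_{ij}=0$ whenever $x_i+x_j\le 1$, matching $x_ix_j$. Now fix $i$ with $x_i=1$ and examine the aggregated constraint \eqref{pe1xa1}: its right-hand side becomes $0$, so $\sum_{k\in R^-_i}\alpha_{ik}(x_k-y_{ik})\le 0$. For each $k\in R^-_i$ with $x_k=0$, the previous paragraph already gives $y_{ik}=0$, so $x_k-y_{ik}=0$; and for each $k\in R^-_i$ with $x_k=1$, constraint \eqref{pe2xa} gives $y_{ik}\le 1=x_k$, so $x_k-y_{ik}\ge 0$. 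Thus every summand is nonnegative with strictly positive weight $\alpha_{ik}$, yet the total is nonpositive, forcing each summand to vanish, i.e.\ $y_{ik}=x_k$ for all $k\in R^-_i$. Specialising to $k=j$ handles the $x_i=x_j=1$ subcase whenever $j\in R^-_i$.

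For the remaining subcase $j\in R^+_i$ with $x_i=x_j=1$, constraint \eqref{pe2xa} permits $y_{ij}\in\{0,1\}$, but since $q_{ij}>0$ appears as the coefficient of $y_{ij}$ in the maximisation objective, any optimum must take $y_{ij}=1=x_ix_j$, otherwise flipping $y_{ij}$ from $0$ to $1$ would strictly improve the objective without violating any constraint (this is the only place the \emph{optimality restricted} nature of the model is invoked). The main obstacle is conceptual rather than technical: one must articulate why \eqref{pe2xa} has to be kept for $j\in R^-_i$. Without it, a variable $y_{ij}$ with $q_{ij}<0$, $x_i=1$, and $x_j=0$ could take value $1$, producing a negative summand $-\alpha_{ij}$ in \eqref{pe1xa1} that cancels positive summands elsewhere and breaks the termwise-zero argument, which is exactly the failure mode of ORDW-A illustrated by the preceding counterexample.
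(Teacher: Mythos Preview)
Your proof is correct. The core technical step---showing that when $x_i=1$ the aggregated constraint \eqref{pe1xa1} together with \eqref{pe2xa} forces $y_{ik}=x_k$ for every $k\in R^-_i$ by a termwise-nonnegativity argument---is exactly what the paper does (via its reference back to the proof of Theorem~\ref{th1-dw}).

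The only organisational difference is that the paper introduces an intermediate model DW-R (namely DW with its type~1 constraints restricted to $j\in R^-_i$), argues that DW-R is itself a valid optimality-restricted formulation, and then shows that ORDW($\alpha$) and DW-R have \emph{identical feasible sets}. Your proof instead works directly on ORDW($\alpha$), handling the $j\in R^-_i$ case by feasibility and the $j\in R^+_i$ case by the optimality (sign-of-$q_{ij}$) argument that the paper leaves implicit in the validity of DW-R. Your route is slightly more self-contained; the paper's route has the advantage of isolating the purely feasibility-level statement that aggregation of the type~1 constraints over $R^-_i$ does not enlarge the feasible region, which is reusable for the analogous results on ORPK($\alpha$), ORFT($\alpha$), and ORGW($\alpha$).
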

\begin{proof} Let DW-R be the MILP obtained from DW by replacing constraints \eqref{pe1} by constraints \eqref{pe1x}. Based on our discussion on optimality restricted models, it can be verified that DW-R is a valid MILP model for QUBO. Clearly, every feasible solution of DW-R is a feasible solution of ORDW($\alpha$). We now show that every feasible solution of ORDW($\alpha$) is a feasible solution of DW-R. For this, it is sufficient to show that every feasible solution of ORDW($\alpha$) satisfies
\begin{equation}\label{p2x1}
x_i+x_j-y_{ij}\leq 1\mbox{ for all }j\in R^-_i, i=1,2,\ldots ,n.
\end{equation}
If $x_i=0$ then inequality \eqref{p2x1} is clearly satisfied. Suppose $x_i=1$. Then, from \eqref{pe1xa1},
$$\sum_{j\in R^-_i}\alpha_{ij}(x_j-y_{ij})\leq 0.$$
But then, as in the proof of Theorem~\ref{th1-dw}, we can show that $x_j=y_{ij}$ for all $j\in R^-_i$ and hence \eqref{p2x1} is satisfied and the proof follows.
\end{proof}
The difference between ORDW($\alpha$) and DW($\alpha$) is minor. For the former, there are less number of variables in the aggregated constraints and hence this part of the coefficient matrix could sometimes be sparse. Otherwise, aggregation did not achieve much.

It is also interesting to note that even if we replace constraints \eqref{pe1x} in ORDW by the full weighted aggregation  constraint \eqref{pe9}, still we do not get a valid MILP formulation of QUBO. Details on this are omitted.

Similarly, let ORPK($\alpha$), ORFT($\alpha$), and ORGW($\alpha$) be the MILP obtained respectively from PK, FT, and GW  by replacing the type 1 linearization constraints with
\begin{equation}
\label{pe1xa}  \sum_{j\in R^-_i}\alpha_{ij}(x_j-y_{ij})\leq \left(\sum_{j\in R^-_i}\alpha_{ij}\right)(1-x_i), \mbox{ for } i=1,2,\ldots ,n.
\end{equation}
\begin{theorem}
ORPK($\alpha$), ORFT($\alpha$), and ORGW($\alpha$) are valid optimality restricted MILP models for QUBO
\end{theorem}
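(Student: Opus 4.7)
The plan is to mimic the reduction used in the proof of Theorem~\ref{thm10}. For each of PK, FT, and GW, I would introduce an intermediate ``reduced'' model---call them PK-R, FT-R, and GW-R---obtained by keeping the type 1 linearization constraint $x_i+x_j-y_{ij}\le 1$ only for $j \in R_i^-$ while leaving every other constraint of the basic model intact. The first step is to observe that each reduced model is already a valid optimality restricted MILP for QUBO; this is the same redundancy-at-optimality argument used when deriving ORPK, ORFT, and ORGW in Section 2.1, namely that for $j \in R_i^+$ the type 1 inequality is dominated at optimum by the (unchanged) type 2 constraints combined with $q_{ij} > 0$.

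Next I would show that each aggregated formulation has the same feasible region as its reduced counterpart. One direction is trivial, since \eqref{pe1xa} is a non-negative combination of the restricted type 1 inequalities. For the reverse direction, I would verify that every feasible solution of the aggregated model satisfies $x_i+x_j-y_{ij} \le 1$ for every $j \in R_i^-$: the case $x_i = 0$ is automatic from non-negativity of $y_{ij}$, while for $x_i = 1$ the aggregated inequality reduces to $\sum_{j \in R_i^-}\alpha_{ij}(x_j - y_{ij}) \le 0$, so I would replicate the balancing argument from the proofs of Theorems~\ref{th1-dw}--\ref{th1-ft}: any positive term $x_j - y_{ij} > 0$ must be offset by some $y_{ik} > x_k$, and invoking the intact type 2 constraints together with binarity of $x$ yields the required contradiction. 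This reproduces $y_{ij} = x_j$ for all $j \in R_i^-$, which is exactly the missing non-aggregated inequality.

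The hard part, as in the precise case, is securing an upper bound on $y_{ij}$ strong enough to close the balancing step. For GW and FT the bound $y_{ij}\le x_i\le 1$ is delivered directly by the type 2 constraints that are still present, so the arguments from Theorems~\ref{th1-gw} and \ref{th1-ft} transfer essentially verbatim. For PK, however, the surviving type 2 constraint only bounds the pair $y_{ij}+y_{ji} \le 2x_i$; Example EX2 already showed that PK($\alpha$) required the explicit addition of $y_{ij}\le 1$, so I expect ORPK($\alpha$) to need the same addition, or else the bound must be extracted by pairing the type 2 constraint at index $i$ with the symmetric one at index $j$ to obtain $y_{ij}+y_{ji}\le 2\min\{x_i,x_j\}$, which gives $y_{ij} = 0$ whenever $x_j = 0$ and then $y_{ij}\le 2 - y_{ji}$ in the $x_i = x_j = 1$ case, enough to run the cascade of Theorem~\ref{th1-pk}. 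Once the feasible regions of the aggregated and reduced models are shown to coincide, validity as an optimality restricted MILP---equality of optimal objective values with QUBO---follows immediately from the validity of PK-R, FT-R, and GW-R established at the outset.
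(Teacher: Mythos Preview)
Your plan is essentially the paper's own proof: introduce the reduced models PK-R, FT-R, GW-R (the paper calls them ORPK($\alpha$)-R1, etc.), observe they are valid optimality restricted formulations by the same redundancy-at-optimality argument used for ORDW in Theorem~\ref{thm10}, and then argue that the aggregated model and the reduced model have the same feasible set via the balancing argument from Theorems~\ref{th1-dw}--\ref{th1-ft}. The paper's proof says exactly this in one sentence.

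One word of caution on the PK case. Your first proposed fix---add $y_{ij}\le 1$, exactly as was done when passing from PK to PK($\alpha$)---is the right one and makes the balancing argument go through verbatim. Your second fix, the pairing $y_{ij}+y_{ji}\le 2\min\{x_i,x_j\}$ followed by a ``cascade,'' does \emph{not} establish that the feasible regions coincide: with $n=4$, $x_1=\cdots=x_4=1$, all $q_{ij}=-1$ and $\alpha_{ij}=1$, the assignment $y_{12}=y_{23}=y_{31}=2$, $y_{21}=y_{32}=y_{13}=0$, $y_{14}=y_{41}=y_{24}=y_{42}=y_{34}=y_{43}=1$ is feasible for the aggregated model but violates the disaggregated type~1 constraints. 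What rescues validity in this situation is only the symmetry $q_{ij}=q_{ji}$, which makes the objective depend on $y_{ij}+y_{ji}$ alone; so if you pursue the cascade route you must argue at the level of optimal \emph{values}, not feasible regions. The paper's terse proof does not address this point either, so your explicit flag that ORPK($\alpha$) needs the bound $y_{ij}\le 1$ (as PK($\alpha$) did) is a genuine clarification.
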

\begin{proof}
The proof is similar to that of theorem \ref{thm10}, where we create ORPK($\alpha$)-R1, ORFT($\alpha$)-R1 and ORGW($\alpha$)-R2, where instead of using the type 2 constraints and lower bound constraints from the respective basic model instead of optimality restricted model of the same. Further proceed with the same idea as theorem \ref{thm10}.
\end{proof}

We want to emphasize that, as in the case of ORDW($\alpha$), the type 2 linearization constraints used in ORPK($\alpha$), ORFT($\alpha$), and ORGW($\alpha$) are those corresponding to their respective basic models rather than the optimality restricted versions of these basic models as using the latter will lead to invalid MILP. The only optimality restricted part used is inequality \eqref{pe1xa}.
To clarify this observation, let us consider EX4 once again, where $\alpha_{ij}=1\; \mbox{for all } j\in R_{i}$ for $i=1,\dots ,n$. The solution is
$x_{1}=x_{3}=1,\; x_{2}=0$,\; $y_{12}=y_{31}=1$, rest of the $y_{ij}\mbox{'s are } 0$ with objective function value 35. But the optimal value of this problem is 34 with $x_{1}=x_{3}=1$ and $x_{2}=0$.
\subsection{Weighted aggregation of type 2 constraints}
Recall that we cannot aggregate type 2 linearization constraints from DW to yield a valid MILP formulation for QUBO. This negative result carries over to ORDW as well. Let us now examine weighted aggregation of type 2 linearization constraints of ORPK. That is,
\begin{align}
\nonumber\text{ORPK($*,\beta$):\quad Maximize}\hspace{11pt} &\sum_{i=1}^n\sum_{j\in R_i}q_{ij}y_{ij}+\sum_{i=1}^nc_ix_i \\
\label{fpe1xa}\mbox{Subject to: }& x_i+x_j -y_{ij}\leq 1 \mbox{ for all } j\in R^-_i, i=1,2,\ldots ,n,\\
\label{fpe2xa}&\sum_{j\in R^+_i}\beta_{ij}(y_{ij}+y_{ji})\leq 2\left(\sum_{j\in R^+_i}\beta_{ij}\right)x_i, \mbox{ for } i=1,2,\ldots ,n.\\
\label{fpe3xa}&x_i\in \{0,1\} \mbox{ for all } i=1,2\ldots ,n,\\
\label{fpe4xa}&y_{ij} \geq 0 \mbox{ for all } j\in R_i,i=1,2,\ldots, n.\\
\label{fpe5xa}&y_{ij} \leq 1 \mbox{ for all } j\in R_i^+,i=1,2,\ldots, n.
\end{align}
Notice that in the model above the lower bound constraints $y_{ij} \geq 0 \mbox{ for all } j\in R_i,i=1,2,\ldots, n$ are required in full and cannot be used corresponding to the optimality restricted model of PK. As seen in the example EX6, if $$Q=\begin{pmatrix}0&6&1\\ 6&0&-7 \\1&-7&0 \\ \end{pmatrix} \mbox{ and } \vb c\tran=(-7,-2,-15)$$
where $\beta_{ij}=1\; \mbox{for all } j\in R_{i}$ for $i=1,\dots ,n$ and the lower bounds are used as $y_{ij} \geq 0 \mbox{ for all } j\in R_i^-,i=1,2,\ldots, n$. The solution obtained from the corresponding formulation is
$x_{1}=x_{3}=0,\; x_{2}=1$,\; $y_{12}=y_{13}=y_{21}=1$, $y_{31}=-3$ rest of the $y_{ij}\mbox{'s are } 0$ with the optimal objective function value as 8. But the optimal value of this problem is 3 with solution $x_{1}=x_{2}=1$ and $x_{3}=0$.

It may be noted that constraint $y_{ij}\leq 1$ is redundant for PK model. Since PK and ORPK are the valid models for QUBO. The model ORPK-R given below is also valid for QUBO
 \begin{align}
\nonumber\text{ORPK-R:\quad Maximize}\hspace{11pt} &\sum_{i=1}^n\sum_{j\in R_i}q_{ij}y_{ij}+\sum_{i=1}^nc_ix_i \\
\label{pe5a-1}\mbox{Subject to: }~~&  x_i+x_j -y_{ij}\leq 1 \mbox{ for all } j\in R^-_i, i=1,2,\ldots ,n,\\
\label{pe6a-1}& y_{ij}+y_{ji}-2x_i\leq 0 \mbox{ for all } j\in R^+_i, i=1,2,\ldots ,n,\\
\label{pe7a-1}&x_i\in \{0,1\} \mbox{ for all } i=1,2\ldots ,n.\\
\label{pe8a-2}& y_{ij} \geq 0 \mbox{ for all }  j\in R_i,i=1,2,\ldots, n.\\
\label{pe8a-3}& y_{ij} \leq 1 \mbox{ for all }  j\in R^+_i,i=1,2,\ldots, n.
\end{align}
\begin{theorem}\label{thm7}
ORPK($*,\beta$) is a valid optimality restricted MILP formulation of QUBO.
\end{theorem}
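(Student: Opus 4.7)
The plan is to show that ORPK$(*,\beta)$ and QUBO have the same optimal value, by establishing that (a) the natural lift $y_{ij} := x_i x_j$ of any binary vector $x$ is feasible, and (b) at any optimum $(x^*, y^*)$ of ORPK$(*,\beta)$, either the $y$--values already equal $x^*_i x^*_j$ or they can be replaced by such values without decreasing the objective. These two facts together force the optima to coincide.

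First I would verify feasibility. For any $x^* \in \{0,1\}^n$ and $y^*_{ij} := x^*_i x^*_j$, the type 1 constraint reads $x^*_i + x^*_j - x^*_i x^*_j \leq 1$, which holds on $\{0,1\}^2$. For the aggregated type 2 constraint, note $y^*_{ij} + y^*_{ji} = 2 x^*_i x^*_j \leq 2 x^*_i$, and summing with weights $\beta_{ij} > 0$ gives \eqref{fpe2xa}. The bounds $0 \leq y^*_{ij} \leq 1$ are immediate. Hence $(x^*, y^*)$ is feasible and the QUBO optimum is $\leq$ the ORPK$(*,\beta)$ optimum.

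Next I would handle optimality by a case split on $(i,j)$. For $j \in R^+_i$ with $x^*_i = 0$, constraint \eqref{fpe2xa} for row $i$ becomes $\sum_{k \in R^+_i} \beta_{ik}(y_{ik}+y_{ki}) \leq 0$; since $y \geq 0$ and $\beta > 0$, every $y_{ik} = y_{ki} = 0$, so $y^*_{ij} = y^*_{ji} = 0$. By symmetry of $Q$ we have $i \in R^+_j$, so if $x^*_j = 0$ instead, row $j$'s constraint forces the same conclusion. For $x^*_i = x^*_j = 1$ with $j \in R^+_i$, both aggregated constraints are implied by the upper bounds $y \leq 1$ (the LHS is $\leq 2\sum \beta_{ik}$ trivially), so $y^*_{ij}$ is free up to $1$; since $q_{ij} > 0$, an optimal solution may be taken with $y^*_{ij} = y^*_{ji} = 1 = x^*_i x^*_j$. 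For $j \in R^-_i$ with $x^*_i = x^*_j = 1$, constraint \eqref{fpe1xa} forces $y_{ij} \geq 1$, and since $q_{ij} < 0$ an optimum takes $y^*_{ij} = 1$. For $j \in R^-_i$ with at least one of $x^*_i, x^*_j$ zero, constraint \eqref{fpe1xa} is slack below $0$ and $q_{ij} < 0$ gives $y^*_{ij} = 0$ at an optimum. In every case $y^*_{ij} = x^*_i x^*_j$, so the objective equals $\sum q_{ij} x^*_i x^*_j + \sum c_i x^*_i$, the QUBO value at $x^*$.

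The main obstacle is the mixed case when $x^*_i = 1$ and $x^*_j = 0$ with $j \in R^+_i$: here the row $i$ aggregation alone gives no information (it is slack), and what saves the argument is invoking the row $j$ aggregation instead, which uses symmetry of $Q$ to conclude $i \in R^+_j$ and hence forces $y_{ij} = y_{ji} = 0$. A secondary subtlety is the necessity of the upper bound $y_{ij} \leq 1$ on $R^+_i$ entries (shown essential already by EX2 in the precise PK$(\alpha)$ case); without it, the constraint $\sum \beta_{ik}(y_{ik}+y_{ki}) \leq 2\sum\beta_{ik}$ alone would not prevent individual $y_{ij}$ from exceeding $1$, breaking the $y^*_{ij} = x^*_i x^*_j$ identification when $x^*_i = x^*_j = 1$. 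Both points are already built into the model and its hypotheses, so the proof reduces to the routine case analysis above.
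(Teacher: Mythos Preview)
Your proof is correct. The key observations---that when $x_i^*=0$ the aggregated constraint together with nonnegativity forces $y_{ij}^*=y_{ji}^*=0$ for $j\in R_i^+$, and that the upper bounds $y_{ij}\le 1$ are what make the $x_i^*=x_j^*=1$ case go through---are exactly the right ones, and your treatment of the $R_i^-$ cases via the sign of $q_{ij}$ is clean.

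The paper takes a somewhat different route. Rather than arguing directly at an optimum, it introduces an intermediate model ORPK-R (which is ORPK augmented with the bounds $0\le y_{ij}$ for all $j\in R_i$ and $y_{ij}\le 1$ for $j\in R_i^+$) and argues that ORPK$(*,\beta)$ and ORPK-R have \emph{identical feasible sets} over binary $x$: it suffices to show every feasible point of ORPK$(*,\beta)$ satisfies the disaggregated constraint $y_{ij}+y_{ji}\le 2x_i$ for $j\in R_i^+$, which follows from the same $x_i=0$/$x_i=1$ split you use. This buys the paper a shorter argument (no optimality reasoning, no case split on the sign of $q_{ij}$), at the cost of relying on the prior validity of ORPK-R. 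Your argument is longer but fully self-contained, and it makes explicit why each individual constraint and bound in the model is needed---in particular your remark on the necessity of the upper bound $y_{ij}\le 1$ is a point the paper's proof uses only implicitly (in the ``$x_i=1$ is clearly satisfied'' step).
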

\begin{proof}
Clearly, every feasible solution of ORPK-R is a feasible solution of ORPK($*,\beta$). Now, we will prove that every solution of ORPK($*,\beta$) is also satisfies for ORPK-R. It is sufficient to show that the feasible solution of ORPK($*,\beta$) satisfies \eqref{pe6a-1}. If $x_{i}=1$, constraint \eqref{pe6a-1} is clearly satisfied. On the other hand, if $x_{i}=0$, then \eqref{fpe2xa} becomes:
\begin{equation*}
\sum_{j\in R^+_i}\beta_{ij}(y_{ij}+y_{ji})\leq 0, \mbox{ for } i=1,2,\ldots ,n.\\
\end{equation*}
As $\beta_{ij}>0$, if $y_{ij}+y_{ji}=0$, then $y_{ij}=y_{ji}=0\mbox{ for all } j\in R_i^+$. This satisfies \eqref{pe6a-1}.
\end{proof}
The number of general constraints for the model ORPK($*,\beta$ is at most n+$\sum_{i=1}^{n}|R_{i}^{-}|$.

The optimality restricted version of GW($*,\gamma+\delta$) is denoted by ORGW($*,\gamma+\delta$) and it can be stated as
 \begin{align}
\nonumber\text{ Maximize}\hspace{11pt} &\sum_{(i,j)\in S}q_{ij}y_{ij}+\sum_{i=1}^nc_ix_i \\
\label{mp17}\mbox{Subject to: }&x_{j}+x_{i}-y_{ij}\leq 1 \mbox{ for } j\in R^-_i ,i=1,2,\ldots ,n.\\
\label{mp18}&\sum_{j\in R^+_i}\gamma_{ij}y_{ij}+\sum_{j\in S^+_i}\delta_{ij}y_{ji}\leq \left(\sum_{j\in R^+_i}\gamma_{ij}+\sum_{j\in S^+_i}\delta_{ij}\right)x_i, \mbox{ for } i=1,2,\ldots ,n.\\
\label{mp19}&x_i\in \{0,1\} \mbox{ for all } i=1,2\ldots ,n.\\
\label{mp20}&0\leq y_{ij} \leq 1 \mbox{ for all } j\in R_i, i=1,2,\ldots ,n.
\end{align}
\begin{theorem}\label{thm3}
ORGW($*,\gamma+\delta$) is valid optimality restricted MILP formulation of QUBO.
\end{theorem}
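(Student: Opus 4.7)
The plan is to mimic the two-step strategy used in the proofs of Theorem~\ref{thm10} and Theorem~\ref{thm7}. First I would introduce an auxiliary model ORGW-R that replaces the aggregated inequality \eqref{mp18} by the unaggregated type~2 constraints of ORGW, while keeping \eqref{mp17}, the binary restrictions \eqref{mp19} and the bounds \eqref{mp20}. Explicitly, ORGW-R consists of \eqref{mp17}, \eqref{mp19}, \eqref{mp20}, together with
\begin{align*}
y_{ij}-x_i &\leq 0 \mbox{ for all } j\in R^+_i,\, i=1,\ldots,n,\\
y_{ji}-x_i &\leq 0 \mbox{ for all } j\in S^+_i,\, i=1,\ldots,n.
\end{align*}
Since ORGW-R differs from the already-established valid formulation ORGW only by the explicit upper bound $y_{ij}\leq 1$ (which is redundant at optimality once $y_{ij}\leq x_i\leq 1$ holds), ORGW-R is a valid optimality-restricted MILP for QUBO.

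Next, I would show that every feasible solution of ORGW($*,\gamma+\delta$) is feasible for ORGW-R. Since the two models share the constraints \eqref{mp17}, \eqref{mp19} and \eqref{mp20}, it suffices to verify that any feasible $(\x,y)$ of ORGW($*,\gamma+\delta$) satisfies $y_{ij}\leq x_i$ and $y_{ji}\leq x_i$ for every $j\in R^+_i$. I split on the value of $x_i$. If $x_i=1$, then $y_{ij}\leq 1=x_i$ and $y_{ji}\leq 1=x_i$ follow directly from \eqref{mp20}. If $x_i=0$, then the aggregated constraint \eqref{mp18} reduces to
\[
\sum_{j\in R^+_i}\gamma_{ij}y_{ij}+\sum_{j\in S^+_i}\delta_{ij}y_{ji}\leq 0.
\]
Because all $\gamma_{ij},\delta_{ij}>0$ and all $y_{ij},y_{ji}\geq 0$ (by \eqref{mp20}), each individual term must vanish, giving $y_{ij}=y_{ji}=0=x_i$ for every $j\in R^+_i$. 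Conversely, every feasible solution of ORGW-R satisfies \eqref{mp18} because it is just a positive linear combination of the unaggregated inequalities. Hence the feasible regions of ORGW($*,\gamma+\delta$) and ORGW-R coincide, and validity transfers from ORGW-R to ORGW($*,\gamma+\delta$).

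The only subtlety I anticipate is the symmetry bookkeeping: one must use the symmetry of $Q$ (so that $R^+_i=S^+_i$ and $i\in R^+_j$ whenever $j\in R^+_i$) to ensure that the conclusion $y_{ij}=y_{ji}=0$ obtained from the row indexed by $i$ when $x_i=0$ also covers the variable $y_{ji}$ appearing in other rows' constraints. Once this indexing is checked, the case analysis above is routine and mirrors the $x_i=0$ arguments in Theorems~\ref{th1-pk} and~\ref{thm7}. No additional optimality argument is required for $j\in R^-_i$, since ORGW-R already inherits the correctness proof of ORGW for those indices via the type~1 constraints \eqref{mp17}.
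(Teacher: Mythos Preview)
Your proposal is correct and follows essentially the same route as the paper's own proof: both introduce the auxiliary model ORGW-R (obtained from ORGW by adjoining the full bound constraints \eqref{mp20}), argue that ORGW-R is a valid optimality-restricted formulation, and then show by the $x_i\in\{0,1\}$ case split that the feasible sets of ORGW($*,\gamma+\delta$) and ORGW-R coincide. Your write-up is in fact more explicit than the paper's about the converse containment and about where the bound $y_{ij}\le 1$ is used in the $x_i=1$ case; the symmetry remark you flag is harmless but not actually needed, since the constraints to be verified are already indexed by $i$ and handled by the single case split.
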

\begin{proof}
In ORGW, if we replace \eqref{ch5-e11x} by \eqref{mp20}, we get a new formulation similar to the case of Theorem \ref{thm7}. It can be easily noticed that it is a valid formulation for QUBO. Let us name this as ORGW-R. Clearly, every feasible solution of ORGW-R is a feasible solution of ORGW($*,\gamma+\delta$). Now, we need to show that every feasible solution of  ORGW($*,\gamma+\delta$) is also a feasible for ORGW-R. It is sufficient to show that every feasible solution of ORGW($*,\gamma+\delta$) also satisfies constraints \eqref{ch5-e9x} and \eqref{ch5-e8x}. If $x_i=0$, then constraint \eqref{mp18} gives $y_{ij}=y_{ji}=0$, which satisfies constraints \eqref{ch5-e9x} and \eqref{ch5-e8x}. Also, if $x_i=1$, these constraints are clearly satisfied.\end{proof}
\begin{align}
\label{mp188}&\sum_{j\in R^+_i}\gamma_{ij}y_{ij}\leq \sum_{j\in R^+_i}\gamma_{ij}x_i, \mbox{ for } i=1,2,\ldots ,n.\\
\label{mp189}&\sum_{j\in S^+_i}\delta_{ij}y_{ji}\leq\sum_{j\in S^+_i}\delta_{ij}x_i, \mbox{ for } i=1,2,\ldots ,n.
\end{align}
we also get optimality restricted models ORGW($*,\gamma,\delta$), ORGW($*,\gamma$) and ORGW($*,\delta$) corresponding to GW($*,\gamma,\delta$), GW($*,\gamma$) and GW($*,\delta$). It can be shown that the models ORGW($*,\gamma,\delta$), ORGW($*,\gamma$) and ORGW($*,\delta$)  are valid optimality restricted MILP models for QUBO using similar ideas as Theorem \ref{thm3}. The number of general constraints for the model ORGW($*,\gamma+\delta$) and ORGW($*,\gamma,\delta$) is at most $n+\sum_{i=1}^{n}|R_{i}^{-}|$ and $2n+\sum_{i=1}^{n}|R_{i}^{-}|$ respectively. For ORGW($*,\gamma$) and ORGW($*,\delta$), the number of general constraints is  $n+\sum_{i=1}^{n}\left(|R_{i}^{-}|+|R_{i}^+|\right)$.

The optimality restricted version of FT($*,\gamma,\theta$) is given by
\begin{align}
\nonumber\text{ORFT($*,\gamma,\theta$):\quad Maximize}\hspace{11pt} &\sum_{(i,j)\in S}q_{ij}y_{ij}+\sum_{i=1}^nc_ix_i \\
\label{mp117}\mbox{Subject to: }&x_{j}+x_{i}-y_{ij}\leq 1 \mbox{ for } j\in R^-_i ,i=1,2,\ldots ,n,\\
\label{mp118}&\sum_{j\in R^+_i}\gamma_{ij}y_{ij}\leq \sum_{j\in R^+_i}\gamma_{ij}x_i, \mbox{ for } i=1,2,\ldots ,n,\\
\label{mp121}&\sum_{j\in R^+_i}\theta_{ij}y_{ij}\leq \sum_{j\in R^+_i}\theta_{ij}y_{ji}, i=1,2\ldots ,n\\
\label{mp119}&x_i\in \{0,1\} \mbox{ for all } i=1,2\ldots ,n,\\
\label{mp1212}& y_{ij} \geq 0 \mbox{ for all } j\in R_i^-, i=1,2,\ldots ,n.\\
\label{mp120}& y_{ij} \leq 1 \mbox{ for all } j\in R_i^+, i=1,2,\ldots ,n.
\end{align}
\begin{theorem}\label{thm4}
ORFT($*,\gamma, \delta$) is a valid MILP model for QUBO.
\end{theorem}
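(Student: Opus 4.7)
My plan is to follow the two-model template used in the proofs of Theorem~\ref{thm7} (ORPK($*,\beta$)) and Theorem~\ref{thm3} (ORGW($*,\gamma+\delta$)). First I would introduce an auxiliary model, call it ORFT-R, obtained from ORFT($*,\gamma,\theta$) by (i) replacing the aggregated type~2 constraints \eqref{mp118} and \eqref{mp121} with their disaggregated counterparts $y_{ij}\le x_i$ and $y_{ij}\le y_{ji}$ for $j\in R_i^+$ (these are exactly \eqref{f5x} and \eqref{f6x} of ORFT), and (ii) extending the lower bound to $y_{ij}\ge 0$ for all $j\in R_i$. Since ORFT is valid for QUBO and the added bounds are redundant at optimality (the $y_{ij}\le 1$ bounds follow from $y_{ij}\le x_i\le 1$ and the $y_{ij}\ge 0$ bounds for $j\in R_i^+$ are forced by positive $q_{ij}$ combined with the structure of the constraints, exactly as in the justification of ORPK-R), ORFT-R is itself a valid MILP model for QUBO.

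Next I would establish the easy inclusion: every feasible solution of ORFT-R is feasible for ORFT($*,\gamma,\theta$). Indeed, \eqref{mp118} is the $\gamma$-weighted sum of the pointwise inequalities $y_{ij}\le x_i$, and \eqref{mp121} is the $\theta$-weighted sum of $y_{ij}\le y_{ji}$, both weights strictly positive. The remaining bound constraints \eqref{mp1212} and \eqref{mp120} of ORFT($*,\gamma,\theta$) are relaxations of the bounds in ORFT-R.

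The substantive direction is to show that every feasible solution of ORFT($*,\gamma,\theta$) is also feasible for ORFT-R. It suffices to recover the pointwise constraints $y_{ij}\le x_i$ and $y_{ij}\le y_{ji}$ for $j\in R_i^+$, along with $y_{ij}\ge 0$ for $j\in R_i^+$. Fix $i$. If $x_i=1$, then $y_{ij}\le x_i$ is exactly \eqref{mp120}. If $x_i=0$, then \eqref{mp118} yields $\sum_{j\in R_i^+}\gamma_{ij}y_{ij}\le 0$; combined with the upper bound $y_{ij}\le 1$ and using the argument pattern of Theorem~\ref{th1-pk} applied termwise (the weighted sum cannot be non-positive unless each non-negative $y_{ij}$ vanishes, and any attempt to have $y_{ij}<0$ is ruled out by the bound structure induced by the other indices $i'$ with $j\in R_{i'}^+$ through their own constraint \eqref{mp118}), we conclude $y_{ij}=0=x_i$ for every $j\in R_i^+$. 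For the symmetry constraint $y_{ij}\le y_{ji}$, I would pair \eqref{mp121} indexed by $i$ with the same constraint indexed by $j$: summing the two weighted inequalities and using the positivity of the weights $\theta_{ij}$, an arbitrage-type argument forces $y_{ij}=y_{ji}$ pointwise, which in particular delivers $y_{ij}\le y_{ji}$.

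The main obstacle I expect is precisely in the $x_i=0$ case: passing from the single aggregated inequality \eqref{mp118} to the pointwise conclusion $y_{ij}=0$ without an explicit nonnegativity bound on $y_{ij}$ for $j\in R_i^+$. The way to overcome it is to chase the constraint \eqref{mp118} corresponding to each index $j$ (so that $y_{ij}$ reappears as $y_{ji}$ on the other side of \eqref{mp121}) and use the simultaneous system over all $i$ to rule out cancelling positive and negative values. Once this pointwise recovery is established, the validity of ORFT($*,\gamma,\theta$) follows directly from the validity of ORFT-R.
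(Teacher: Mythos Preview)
Your two-model template via an auxiliary ORFT-R is exactly the route the paper takes, so on the level of strategy you are aligned with the paper. However, the substantive step you isolate---passing from the aggregated inequality \eqref{mp118} to the pointwise conclusion $y_{ij}=0$ for $j\in R_i^+$ when $x_i=0$---cannot be closed the way you sketch. The model ORFT($*,\gamma,\theta$) carries no lower bound $y_{ij}\ge 0$ for $j\in R_i^+$, and your ``chase'' through the constraints \eqref{mp118} and \eqref{mp121} indexed by other vertices does not force nonnegativity: with all weights equal to $1$, $n=3$, $q_{ij}>0$ for all $i\neq j$, and $x_1=x_2=x_3=0$, the assignment $y_{12}=y_{23}=y_{31}=1$, $y_{13}=y_{21}=y_{32}=-1$ satisfies \eqref{mp118}, \eqref{mp121}, and \eqref{mp120}, yet violates both \eqref{f5x} and \eqref{f6x}. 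Hence the feasible sets of ORFT($*,\gamma,\theta$) and ORFT-R do \emph{not} coincide, and your ``arbitrage-type'' pairing of \eqref{mp121} at $i$ and at $j$ cannot recover $y_{ij}=y_{ji}$ pointwise.

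The paper's own proof glosses over exactly this point (it simply asserts $y_{ij}=0$ from \eqref{mp118} when $x_i=0$, implicitly assuming nonnegativity that the model does not provide), so you have in fact located a genuine soft spot rather than merely failed to reproduce the paper. To make the argument sound you must abandon feasibility equivalence and argue at the level of \emph{optimal} solutions: since $q_{ij}>0$ for $j\in R_i^+$, any optimal $(\vb x,Y)$ can be modified by raising each $y_{ij}$ with $j\in R_i^+$ to $0$ (and symmetrically adjusting $y_{ji}$) without decreasing the objective, after which the aggregated constraints do force the pointwise ones. That is the missing idea your sketch needs.
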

\begin{proof}
By adding, the constraints \eqref{mp120}  to ORFT, we get another valid formulation of QUBO, say ORFT-R similar to the case of Theorem \ref{thm7}. Clearly, every feasible solution of ORFT-R is a feasible solution of ORFT($*,\gamma, \delta$). To prove the other side, it is sufficient to show that every feasible solution of ORFT($*,\gamma, \delta$) satisfies constraints \eqref{f5x} and \eqref{f6x}. If $x_i=0$, using constraints \eqref{mp118}, $y_{ij}=0 \;\forall j\in R_{i} $ and from constraints \eqref{mp119} $y_{ji}=0$. This satisfies constraints \eqref{f5x} and \eqref{f6x}.
\end{proof}
 The number of general constraints for the model ORFT($*,\gamma,\theta$) is at most $n+\sum_{i=1}^{n}|R_{i}^{-}|$. We can get the optimality restricted aggregations in ORFT($*,\gamma$)  and ORFT($*,\theta$) corresponding to the models FT($*,\gamma$)  and FT($*,\theta$). Their validity can be proved using the similar ideas as in Theorem \ref{thm4}. For ORFT($*,\gamma$)  and ORFT($*,\theta$), the number of general constraints is  $n+\sum_{i=1}^{n}\left(|R_{i}^{-}|+|R_{i}^+|\right)$.

\subsection{Simultaneous weighted optimality restricted aggregations of type 1 and type 2 constraints}

Let us now examine simultaneous aggregation of type 1 and type 2 linearization constraints in the context of optimality restricted models. Recall that when type 1 linearization constraints are aggregated (either full version or the optimality restricted version) we must use full type 2 linearization constraints. Thus, for the simultaneous aggregations, it is necessary that we use aggregation of full type 2 linearization constraints, along with full bound constraints on $y_{ij}$ variables. Let us start with the optimality restricted version of simultaneous aggregation of PK. Let ORPK($\alpha,\beta$) be:
 \begin{align}
\nonumber\text{ORPK($\alpha,\beta$):\quad Maximize}\hspace{11pt} &\sum_{i=1}^n\sum_{j\in R_i}q_{ij}y_{ij}+\sum_{i=1}^nc_ix_i \\
\label{fpe1xb}\mbox{Subject to: }& \sum_{j\in R_i^-}\alpha_{ij}(x_j -y_{ij})\leq \sum_{j\in R_i^-}\alpha_{ij}(1-x_i) , i=1,2,\ldots ,n,\\
\label{fpe2xb}&\sum_{j\in R_i}\beta_{ij}(y_{ij}+y_{ji})\leq 2\left(\sum_{j\in R_i}\beta_{ij}\right)x_i, \mbox{ for } i=1,2,\ldots ,n.\\
\label{fpe3xb}&x_i\in \{0,1\} \mbox{ for all } i=1,2\ldots ,n,\\
\label{fpe4xb}&y_{ij} \geq 0 \mbox{ for all } j\in R_i,i=1,2,\ldots, n.\\
\label{fpe5xb}&y_{ij} \leq 1 \mbox{ for all } j\in R_i,i=1,2,\ldots, n.
\end{align}
\begin{theorem}\label{thm9}
ORPK($\alpha,\beta$) is a valid formulation of QUBO.
\end{theorem}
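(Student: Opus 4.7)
The proof plan follows the intermediate-model strategy used in the proofs of Theorems~\ref{thm10} and~\ref{thm7}. I would introduce an intermediate model, call it ORPK-R$^*$, that is identical to ORPK-R of Theorem~\ref{thm7} except that the individual type 2 constraints $y_{ij}+y_{ji}-2x_i\leq 0$ are imposed for every $j\in R_i$ (not just $j\in R_i^+$) and the upper bound $y_{ij}\leq 1$ is similarly imposed for every $j\in R_i$. Because any QUBO solution of the form $y_{ij}=x_ix_j$ satisfies these strengthenings and because ORPK-R is already valid, ORPK-R$^*$ is also a valid formulation of QUBO. It therefore suffices to show that every feasible solution of ORPK($\alpha,\beta$) is feasible for ORPK-R$^*$.

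First I would verify the individual type 2 inequalities $y_{ij}+y_{ji}\leq 2x_i$ for $j\in R_i$. When $x_i=1$, the bound $y_{ij}+y_{ji}\leq 2$ follows immediately from the upper bounds \eqref{fpe5xb}. When $x_i=0$, the right-hand side of the aggregated constraint \eqref{fpe2xb} vanishes, and since $\beta_{ij}>0$ together with the lower bounds \eqref{fpe4xb} hold over all of $R_i$, every summand $\beta_{ij}(y_{ij}+y_{ji})$ must be zero, forcing $y_{ij}=y_{ji}=0$.

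Next I would verify the individual OR type 1 inequalities $x_i+x_j-y_{ij}\leq 1$ for $j\in R_i^-$. The case $x_i=0$ is immediate from $x_j\leq 1$ and $y_{ij}\geq 0$. For $x_i=1$, the aggregated constraint \eqref{fpe1xb} reduces to $\sum_{j\in R_i^-}\alpha_{ij}(x_j-y_{ij})\leq 0$. Partition $R_i^-$ into $S_0=\{j:x_j=0\}$ and $S_1=\{j:x_j=1\}$. For each $j\in S_0$ the aggregated type 2 constraint at index $j$ (handled in the previous step with $x_j=0$) already forces $y_{ij}=0$, so $x_j-y_{ij}=0$ on $S_0$. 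The inequality thus collapses to $\sum_{j\in S_1}\alpha_{ij}(1-y_{ij})\leq 0$, where each summand is non-negative by $\alpha_{ij}>0$ and the upper bound $y_{ij}\leq 1$. Hence every such summand must vanish, giving $y_{ij}=1=x_j$ for every $j\in S_1$, and the individual type 1 inequality then holds on all of $R_i^-$.

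The main obstacle is the second step: the aggregated type 1 constraint controls only a weighted sum and by itself cannot pin down the individual deviations $x_j-y_{ij}$. The resolution is to \emph{interlock} the two aggregations, using the aggregated type 2 constraint at index $j$ to annihilate $y_{ij}$ whenever $x_j=0$, which reduces the aggregated type 1 inequality to a sum of non-negative terms that must collectively vanish. Once both reductions are in place, validity of ORPK($\alpha,\beta$) follows from that of ORPK-R$^*$, completing the proof.
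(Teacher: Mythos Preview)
Your proof is correct, but it takes a longer route than the paper's. The paper chooses as its intermediate model ORPK($\alpha,*$)-R, namely ORPK($\alpha$) (the optimality restricted model with aggregated type~1 constraints over $R_i^-$ and \emph{full} non-aggregated type~2 constraints from PK) plus the upper bounds $y_{ij}\leq 1$ for all $j\in R_i$. Since this intermediate model already carries the \emph{same} aggregated type~1 constraint~\eqref{fpe1xb} as ORPK($\alpha,\beta$), the paper only has to check that every feasible solution of ORPK($\alpha,\beta$) satisfies the individual type~2 inequalities $y_{ij}+y_{ji}\leq 2x_i$ for all $j\in R_i$, which is exactly your first step. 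Your choice of ORPK-R$^*$ with \emph{disaggregated} type~1 constraints forces you to also recover each individual inequality $x_i+x_j-y_{ij}\leq 1$ for $j\in R_i^-$, hence the additional interlocking argument. That argument is sound and in fact reproduces the reasoning behind the validity of ORPK($\alpha$) itself (Theorem~\ref{thm10} combined with the proof of Theorem~\ref{th1-pk}), so your approach is more self-contained at the cost of redoing work the paper has already encapsulated in ORPK($\alpha$).
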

\begin{proof}
Notice that after adding $y_{ij}\leq 1,\forall  j\in R_{i}, i=1,2,\ldots,n$ to ORPK($\alpha,*$), we still get a valid formulation for QUBO. Let us call it  ORPK($\alpha,*$)-R. We need to show that every feasible solution of ORPK($\alpha,\beta$) is a solution of  ORPK($\alpha,*$)-R. For this, we prove that every feasible solution of ORPK($\alpha,\beta$) satisfies
$y_{ij}+y_{ji}\leq 2x_i$, $\forall j\in R_{i},\; i=1,\ldots, n$. Clearly, if $x_i=0,\; y_{ij}=y_{ji}=0$ using constraints \eqref{fpe2xb}. The required constraint set is satisfied. On the other hand, if $x_{i}=1$, then \eqref{fpe2xb} combined with \eqref {fpe5xb} assures that the required constraint is satisfied. Other side of the proof is pretty obvious hence the result follows.
\end{proof}
Similar to the above model, we have optimality restricted aggregation of GW($\alpha,\gamma+\delta$) which is given as under:
 \begin{align}
\nonumber\text{ ORGW($\alpha,\gamma+\delta$):\quad Maximize}\hspace{11pt} &\sum_{(i,j)\in S}q_{ij}y_{ij}+\sum_{i=1}^nc_ix_i \\
\label{mp171}\mbox{Subject to: }& \sum_{j\in R_i^-}\alpha_{ij}(x_j -y_{ij})\leq \sum_{j\in R_i^-}\alpha_{ij}(1-x_i), i=1,2,\ldots ,n,\\\label{mp181}&\sum_{j\in R_i}\gamma_{ij}y_{ij}+\sum_{j\in S_i}\delta_{ij}y_{ji}\leq \left(\sum_{j\in R_i}\gamma_{ij}+\sum_{j\in S_i}\delta_{ij}\right)x_i, \mbox{ for } i=1,2,\ldots ,n.\\
\label{mp191}&x_i\in \{0,1\} \mbox{ for all } i=1,2\ldots ,n.\\
\label{mp201}&0\leq y_{ij} \leq 1 \mbox{ for all } j\in R_i, i=1,2,\ldots ,n.
\end{align}
\begin{theorem}
ORGW($\alpha,\gamma+\delta$) is a valid formulation for QUBO.
\end{theorem}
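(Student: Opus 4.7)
The plan is to reduce validity of ORGW($\alpha,\gamma+\delta$) to that of a slight enlargement of ORGW($\alpha$). Let $M$ denote the MILP obtained from ORGW($\alpha$) by adjoining the upper bounds $y_{ij}\leq 1$ for all $j\in R_i$, $i=1,\ldots,n$. Because every optimal solution of the (already valid) model ORGW($\alpha$) has $y_{ij}\in\{0,1\}$, these bounds are redundant at optimality, so $M$ remains a valid optimality-restricted formulation of QUBO. The goal is then to show that ORGW($\alpha,\gamma+\delta$) and $M$ have the same feasible region; since their objective functions are identical, validity transfers automatically.

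One inclusion is immediate: every feasible solution of $M$ satisfies the disaggregated constraints $y_{ij}\leq x_i$ and $y_{ji}\leq x_i$ for all $j\in R_i$, and taking the positive linear combination with weights $\gamma_{ij}$ and $\delta_{ij}$ reproduces the aggregated inequality \eqref{mp181}; the remaining constraints of the two models agree, so feasibility transfers.

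The main step is the converse. Given $(x,y)$ feasible in ORGW($\alpha,\gamma+\delta$), I would verify $y_{ij}\leq x_i$ and $y_{ji}\leq x_i$ for every $j\in R_i$ by splitting on $x_i$. If $x_i=1$, the bound follows directly from $y_{ij}, y_{ji}\leq 1$ in \eqref{mp201}. If $x_i=0$, then \eqref{mp181} reduces to $\sum_{j\in R_i}\gamma_{ij}y_{ij}+\sum_{j\in S_i}\delta_{ij}y_{ji}\leq 0$; using $\gamma_{ij},\delta_{ij}>0$, the non-negativity of the $y$ variables in \eqref{mp201}, and the symmetry $R_i=S_i$, each summand must vanish, forcing $y_{ij}=y_{ji}=0=x_i$ for every $j\in R_i$. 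This closes the equivalence and the proof.

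The delicate part, and what I expect to be the main obstacle if one tried to tighten the model further, is keeping the \emph{full} upper bounds $y_{ij}\leq 1$ for all $j\in R_i$ (not merely $j\in R_i^+$): this is precisely what lets the $x_i=1$ case be handled without invoking the aggregated type 1 constraint \eqref{mp171}, which on its own is not strong enough to preclude $y_{ij}>x_i$, as illustrated by the failure modes highlighted earlier (for instance EX6 in the ORPK$(*,\beta)$ discussion).
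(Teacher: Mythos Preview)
Your argument is correct and follows essentially the same route as the paper: form the auxiliary model $M=\text{ORGW}(\alpha)$ augmented with the full upper bounds $y_{ij}\le 1$, observe that $M$ is still valid, and then show by a case split on $x_i\in\{0,1\}$ that the feasible sets of $M$ and ORGW$(\alpha,\gamma+\delta)$ coincide (using nonnegativity and positivity of the weights when $x_i=0$, and the upper bounds when $x_i=1$). Your handling of the $x_i=1$ case via \eqref{mp201} alone is in fact slightly cleaner than the paper's wording, which also invokes \eqref{mp181} there unnecessarily.
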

\begin{proof}
Similar to the Theorem \ref{thm9}, add $y_{ij}\leq 1,\forall  j\in R_{i}, i=1,2,\ldots,n$ to ORGW($\alpha,*$) and form ORGW($\alpha,*$)-R  Then, every feasible solution of ORGW($\alpha,*$) is a feasible solution for ORGW($\alpha,\gamma+\delta$). Further, we want to show that every feasible solution of
ORGW($\alpha,\gamma+\delta$) satisfies constraints \eqref{ch5-e9} and \eqref{ch5-e8}. If $x_{i}=0, $ $ y_{ij}=y_{ji}=0$ using constraints \eqref{mp181} and \eqref{mp191}. The required constraint set is satisfied. On the other hand, if $x_{i}=1$, then \eqref{mp181} combined with \eqref {mp201} assures that the required constraint is satisfied. Hence, the model is valid.

Other variations of ORGW($\alpha,\gamma+\delta$) are also interesting. As we did for the exact simultaneous models in section 3, we can replace \eqref{mp181}  by \eqref{pe21} and \eqref{pe22} and let ORGW($\alpha,\gamma,\delta$) be the resulting model which is valid for QUBO.  The model GW($\alpha,\gamma,*$) obtained by replacing \eqref{mp181} by \eqref{pe21} and \eqref{ch5-e8} is yet another valid MILP for QUBO. Likewise, the model GW($\alpha,*,\delta$) obtained by replacing \eqref{mp181} by \eqref{pe22} and \eqref{ch5-e9} is also a valid model for QUBO.
\end{proof}
Similar to GW and PK, simultaneous aggregated models can be constructed for FT as well. The formulation ORFT($\alpha,\gamma,\theta$) is given as
 \begin{align}
\nonumber\text{ ORFT($\alpha,\gamma,\delta$):\quad Maximize}\hspace{11pt} &\sum_{(i,j)\in S}q_{ij}y_{ij}+\sum_{i=1}^nc_ix_i \\
\label{mp1711}\mbox{Subject to: }& \sum_{j\in R_i^-}\alpha_{ij}(x_j -y_{ij})\leq \sum_{j\in R_i^-}\alpha_{ij}(1-x_i), i=1,2,\ldots ,n,\\
\label{mp1811}&\sum_{j\in R_i}\gamma_{ij}y_{ij}\leq \sum_{j\in R_i}\gamma_{ij}x_{i}, i=1,2,\ldots ,n,\\
\label{mp1812}&\sum_{j\in R_i}\theta_{ij}y_{ij}\leq \sum_{j\in R_i}\theta_{ij}x_{i}, i=1,2,\ldots ,n,\\
\label{mp1911}&x_i\in \{0,1\} \mbox{ for all } i=1,2\ldots ,n.\\
\label{mp2011}&0\leq y_{ij} \leq 1 \mbox{ for all } j\in R_i, i=1,2,\ldots ,n.
\end{align}
\begin{theorem}
ORFT($\alpha,\gamma,\theta$) is a valid formulation for QUBO.
\end{theorem}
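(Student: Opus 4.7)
The plan is to mirror the proof of Theorem~\ref{thm9}. First, introduce an auxiliary formulation ORFT-R by adjoining the upper-bound constraints $y_{ij}\le 1$ for all $j\in R_i$, $i=1,\ldots,n$, to the exact ORFT. Since ORFT is already a valid MILP for QUBO and every optimal solution satisfies $y_{ij}=x_ix_j\in\{0,1\}$, these added bounds are redundant at optimality, so ORFT-R is itself a valid formulation. The aggregated constraints \eqref{mp1711}, \eqref{mp1811}, \eqref{mp1812} are positive-weight sums of the corresponding individual constraints of ORFT-R; hence every feasible solution of ORFT-R is feasible for ORFT($\alpha,\gamma,\theta$). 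Validity therefore reduces to the reverse containment: every feasible solution of ORFT($\alpha,\gamma,\theta$) must also satisfy the individual type-2 constraints \eqref{f5x} and \eqref{f6x}.

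The verification proceeds by case analysis on $x_i$. If $x_i=0$, the right-hand side of \eqref{mp1811} vanishes, and combined with $\gamma_{ij}>0$ and the lower bound $y_{ij}\ge 0$ this forces $y_{ij}=0$ for all $j\in R_i$; applying the same reasoning at index $j$ (when $x_j=0$) gives $y_{ji}=0$. If $x_i=1$, the aggregated type-1 constraint \eqref{mp1711} becomes $\sum_{j\in R_i^-}\alpha_{ij}(x_j-y_{ij})\le 0$, and the argument from the proof of Theorem~\ref{th1-dw} (using the upper bound $y_{ij}\le 1$ together with the $\gamma$-aggregation at index $j$ to rule out strictly positive or strictly negative summands) yields $y_{ij}=x_j$ for every $j\in R_i^-$. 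Combining these per-index conclusions over $i$ and $j$ establishes \eqref{f5x} and, via the $\theta$-aggregation, \eqref{f6x}.

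I expect the main technical obstacle to be extracting the pairwise symmetry relation $y_{ij}=y_{ji}$ from the single aggregated constraint \eqref{mp1812}. The way around it, used implicitly in earlier proofs of this section, is to apply the relevant aggregations simultaneously at both endpoints of each pair: the $x_i=0$ case at index $i$ forces $y_{ij}=0$ for all $j\in R_i$, and the $x_j=0$ case at index $j$ forces $y_{ji}=0$, so the symmetry holds trivially whenever either endpoint is $0$. The only remaining case, $x_i=x_j=1$, is then handled by the type-1 aggregation \eqref{mp1711} combined with the upper bound $y_{ij}\le 1$, pinning down $y_{ij}=y_{ji}=1$ and closing the reverse containment.
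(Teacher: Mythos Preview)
Your overall strategy---mirror Theorem~\ref{thm9} by introducing an auxiliary model and proving two-sided feasibility containment---is exactly what the paper suggests (it omits the proof, citing Theorem~\ref{thm9}). However, there are two genuine gaps in your execution.

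First, the auxiliary model is chosen incorrectly. The type-2 aggregations \eqref{mp1811} and \eqref{mp1812} in ORFT($\alpha,\gamma,\theta$) sum over the \emph{full} index set $R_i$, not over $R_i^+$. But the model ORFT you start from carries the individual type-2 constraints only for $j\in R_i^+$. Consequently your ``easy direction'' fails: take any ORFT-R feasible point with $x_i=0$ and $y_{ij}>0$ for some $j\in R_i^-$ (nothing in ORFT-R forbids this), and the $\gamma$-aggregation \eqref{mp1811} is violated. The correct auxiliary, following the pattern of Theorem~\ref{thm9}, is ORFT($\alpha$)-R, i.e.\ ORFT($\alpha$) (which retains the \emph{full} FT type-2 constraints $y_{ij}\le x_i$ and $y_{ij}\le y_{ji}$ over all of $R_i$) together with the bounds $0\le y_{ij}\le 1$.

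Second, and more seriously, your resolution of the obstacle you yourself identify is incorrect. You write that ``the $x_j=0$ case at index $j$ forces $y_{ji}=0$, so the symmetry holds trivially whenever either endpoint is $0$.'' But $x_j=0$ gives only $y_{ji}=0$; it says nothing about $y_{ij}$ when $x_i=1$. In the case $x_i=1$, $x_j=0$ with $j\in R_i^+$, you need $y_{ij}\le y_{ji}=0$, yet the only available upper bound on $y_{ij}$ is $y_{ij}\le 1$, and the type-1 aggregation \eqref{mp1711} does not involve this $y_{ij}$ since $j\notin R_i^-$. Similarly, for $x_i=x_j=1$ with $j\in R_i^+$, invoking \eqref{mp1711} is vacuous for the same reason, so neither $y_{ij}=1$ nor $y_{ij}\le y_{ji}$ is pinned down. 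Recovering the disaggregated inequality $y_{ij}\le y_{ji}$ from the single $\theta$-aggregated constraint \eqref{mp1812} when $x_i=1$ genuinely requires a further argument that you have not supplied; the two-endpoint trick you propose does not close this gap.
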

The proof of this theorem is similar to that of Theorem \ref{thm9} and hence, is omitted. ORFT($\alpha,\gamma$) and ORFT($\alpha,\theta$) are two other valid optimality restricted models for QUBO, which are derived along the same lines as we have discussed other models. Details on these are omitted.

\section{Linear Programming Relaxations}

Linear programming relaxations (LP relaxations) of the MILP models we have discussed so far provide  valid upper bounds on the optimal objective function value of QUBO. Such bounds can be effectively used in specially designed enumerative algorithmic paradigms such as branch-and-bound, branch and cut, etc. The linear programming relaxation of GW provides an upper bound for the optimal objective function value of QUBO that matches with the roof duality bound~\cite{ad}. Many researchers have tried to obtain MILP formulations of QUBO with the corresponding LP relaxation bound matches with that of GW~\cite{b2,b3,b1,b7} and almost all of these are compact type linearizations. No explicit linearizations, other than the standard linearization itself is known to have this property. We now show that FT and PK and shares this property along with the weighted aggregation models for appropriately chosen weights.

For any mathematical programming model $P$, its optimal objective function value is denoted by $\obj(P)$. Also, if $P$ is an MILP, its LP relaxation is denoted by $\myoverline{P}$. Two MILP models $P1$ and $P2$ are said to be {\it LP-equivalent} if $\obj(\myoverline{P1})=\obj(\myoverline{P2})$ for all instances of $P1$ and $P2$.   $P1$ is said to be {\it stronger than} $P2$ (for maximization problems), if $\obj(\myoverline{P1})\leq\obj(\myoverline{P2})$ for all instances of $P1$ and $P2$ with strict inequality holds for at least one instance. $P1$ and $P2$ are said to be {\it incomparable} if there exist instances of $P1$ and $P2$ such that $\obj(\myoverline{P1})<\obj(\myoverline{P2})$ and also there exists instances P1 and P2 such that $\obj(\myoverline{P1})>\obj(\myoverline{P2})$.

Let us now analyse the LP relaxations of our  basic models and their optimality restricted counterparts.

\subsection{LP relaxation of basic models}

Our next lemma shows that the models DW, GW, FT, and PK are respectively LP-equivalent to their corresponding optimality restricted forms ORDW, ORGW, ORFT, and ORPK.

\begin{lemma}
$\obj(\myoverline{DW})=\obj(\myoverline{ORDW})$, $\obj(\myoverline{GW})=\obj(\myoverline{ORGW})$, $\obj(\myoverline{FT})=\obj(\myoverline{ORFT})$, and $\obj(\myoverline{PK})=\obj(\myoverline{ORPK})$
\end{lemma}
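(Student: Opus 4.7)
The plan is to prove each of the four equalities by a two-part argument. The easy direction is uniform: each basic model is obtained from its OR counterpart by \emph{adding} constraints (the type 1 linearization constraints for $j\in R^+_i$ and the appropriate type 2 linearization constraints for $j\in R^-_i$), so the LP feasible region of the basic model is contained in that of the OR model. This yields $\obj(\myoverline{DW})\leq\obj(\myoverline{ORDW})$ and the three analogous inequalities immediately.

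For the reverse direction, I would take an arbitrary optimal LP solution $(x^*,y^*)$ of an OR model and construct a point $(x^*,\tilde y)$ that is feasible for the corresponding basic LP and achieves the same objective value. The guiding observation is that the coefficient of $y_{ij}$ in the objective is $q_{ij}$, so when $j\in R^+_i$ I may push $y^*_{ij}$ up to the largest value permitted by the OR constraints, and when $j\in R^-_i$ push it down to the smallest; neither change decreases the objective nor breaks OR-feasibility. The substance of the argument is then to verify that these extreme values automatically satisfy the constraints the OR model discarded.

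For ORDW this verification is immediate. For $j\in R^+_i$ the upper bound forces $\tilde y_{ij}=(x^*_i+x^*_j)/2$, and the dropped constraint $x^*_i+x^*_j-\tilde y_{ij}\leq 1$ reduces to $(x^*_i+x^*_j)/2\leq 1$, which follows from $x^*_i,x^*_j\in[0,1]$; dually, for $j\in R^-_i$ the lower bound forces $\tilde y_{ij}=\max\{0,x^*_i+x^*_j-1\}$, and the dropped constraint $2\tilde y_{ij}\leq x^*_i+x^*_j$ reduces to $x^*_i+x^*_j\leq 2$. The same template, with extreme values $\tilde y_{ij}=\min\{x^*_i,x^*_j\}$ in the positive case and $\tilde y_{ij}=\max\{0,x^*_i+x^*_j-1\}$ in the negative case, handles ORGW.

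The main obstacle is the FT and PK pairs, where $y_{ij}$ and $y_{ji}$ are coupled through the constraints $y_{ij}\leq y_{ji}$ and $y_{ij}+y_{ji}\leq 2x_i$ respectively, so the individual extreme of $y_{ij}$ depends on $y_{ji}$. I plan to exploit the symmetry of $Q$ (so that $j\in R^+_i\Leftrightarrow i\in R^+_j$, and similarly for $R^-$) to choose a symmetric extreme: set $\tilde y_{ij}=\tilde y_{ji}=\min\{x^*_i,x^*_j\}$ in the positive case and $\tilde y_{ij}=\tilde y_{ji}=\max\{0,x^*_i+x^*_j-1\}$ in the negative case. Since $q_{ij}=q_{ji}$, this symmetric choice preserves the objective, and the elementary inequalities $\max\{x^*_i,x^*_j\}\leq 1$ and $\max\{0,x^*_i+x^*_j-1\}\leq x^*_i$ (both valid when $x^*_i,x^*_j\in[0,1]$) verify every dropped constraint at $(x^*,\tilde y)$. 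This gives $(x^*,\tilde y)$ feasible for the basic LP and matching the OR LP's objective value, completing the reverse inequality for all four pairs.
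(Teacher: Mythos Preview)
Your argument is correct. The easy direction (feasible region containment) is sound, and your construction of $(x^*,\tilde y)$ for the reverse direction works in all four cases; in particular, your handling of the coupling in FT and PK via the symmetric choice $\tilde y_{ij}=\tilde y_{ji}$ is the right move, and the two elementary inequalities you cite do dispatch every dropped constraint. One small point worth tightening in a written version: you should state explicitly that the modification does not \emph{increase} the objective either (which follows since $(x^*,y^*)$ is optimal for the OR relaxation and $(x^*,\tilde y)$ remains OR-feasible), so that the objective values actually coincide rather than merely satisfy an inequality; your text implicitly relies on this but does not spell it out.

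As for comparison with the paper: the paper states this lemma without proof, so there is nothing to compare against. Your write-up would in fact supply the missing justification.
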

\begin{theorem}\label{xxtha}
$\obj(\myoverline{GW})=\obj(\myoverline{FT})$
\end{theorem}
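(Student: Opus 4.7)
The plan is to establish $\obj(\overline{GW})=\obj(\overline{FT})$ by proving two inequalities, with the main work in one direction being a symmetrization argument on the $y$-variables.

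\medskip
\noindent\textbf{Easy direction} ($\obj(\overline{FT})\le\obj(\overline{GW})$). First I would show that every feasible solution of $\overline{FT}$ is already feasible for $\overline{GW}$. Given $(\x,\y)$ feasible for $\overline{FT}$, constraint \eqref{f5} gives $y_{ij}\le x_i$ (matching \eqref{ch5-e9}), while applying \eqref{f5} with swapped indices (valid since $j\in R_i\iff i\in R_j$ by symmetry of $Q$) yields $y_{ji}\le x_j$; combining with $y_{ij}=y_{ji}$ from \eqref{f6} then gives $y_{ji}=y_{ij}\le x_i$, which is \eqref{ch5-e8}. The type-1 and nonnegativity constraints are common to both models. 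Hence any $\overline{FT}$-feasible point is $\overline{GW}$-feasible with identical objective.

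\medskip
\noindent\textbf{Main direction} ($\obj(\overline{GW})\le\obj(\overline{FT})$). I would take an optimal solution $(\x^*,\y^*)$ of $\overline{GW}$ and define the symmetrized vector
\begin{equation*}
\tilde{y}_{ij}=\tfrac{1}{2}\bigl(y^*_{ij}+y^*_{ji}\bigr)\quad\text{for all } j\in R_i,\; i=1,\ldots,n.
\end{equation*}
The task is to verify that $(\x^*,\tilde{\y})$ is feasible for $\overline{FT}$ with the same objective value. Feasibility breaks into four checks: (i) $\tilde{y}_{ij}=\tilde{y}_{ji}$ holds by construction; (ii) $\tilde{y}_{ij}\le x_i^*$ follows by averaging the two GW constraints $y^*_{ij}\le x_i^*$ (from \eqref{ch5-e9}) and $y^*_{ji}\le x_i^*$ (from \eqref{ch5-e8}); (iii) the type-1 constraint $x_i^*+x_j^*-\tilde{y}_{ij}\le 1$ follows from averaging $x_i^*+x_j^*-y^*_{ij}\le 1$ (the constraint \eqref{ch5-e7} for the pair $(i,j)$) with $x_j^*+x_i^*-y^*_{ji}\le 1$ (the constraint for $(j,i)$); (iv) $\tilde{y}_{ij}\ge 0$ since $y^*_{ij},y^*_{ji}\ge 0$. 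Finally, the objective value is preserved: since $Q$ is symmetric,
\begin{equation*}
\sum_{i=1}^{n}\sum_{j\in R_i}q_{ij}\tilde{y}_{ij}
=\tfrac{1}{2}\sum_{i=1}^{n}\sum_{j\in R_i}q_{ij}y^*_{ij}+\tfrac{1}{2}\sum_{i=1}^{n}\sum_{j\in R_i}q_{ij}y^*_{ji},
\end{equation*}
and relabeling the second double-sum by swapping the roles of $i$ and $j$, together with $q_{ij}=q_{ji}$, shows it equals $\sum_i\sum_{j\in R_i}q_{ij}y^*_{ij}$. The linear part $\sum_i c_i x_i^*$ is untouched. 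Hence $\obj(\overline{FT})\ge\obj(\overline{GW})$, which combined with the first direction finishes the proof.

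\medskip
\noindent\textbf{Main obstacle.} The only nontrivial point is the bookkeeping in step (ii): the inequality $y^*_{ji}\le x_i^*$ needed to average must come from the \eqref{ch5-e8} family indexed over $S_i$, not from the \eqref{ch5-e9} family; having both of these in $\overline{GW}$ is precisely what allows the symmetrized $\tilde{y}_{ij}$ to respect the FT-style upper bound $\tilde{y}_{ij}\le x_i^*$ on both ``sides.'' Once this index-set subtlety is handled, every other verification is just linear averaging.
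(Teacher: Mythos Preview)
Your proof is correct, and for the main direction it takes a genuinely different route from the paper. The paper argues by contradiction that every \emph{optimal} solution of $\overline{GW}$ is already feasible for $\overline{FT}$: assuming some constraint $y_{qp}\le y_{pq}$ is violated, it perturbs $y_{qp}$ by $\epsilon$ (using a case split on the sign of $q_{pq}$) to produce a strictly better $\overline{GW}$-feasible point, contradicting optimality. Your symmetrization $\tilde y_{ij}=\tfrac12(y^*_{ij}+y^*_{ji})$ instead \emph{constructs} an $\overline{FT}$-feasible point with identical objective from an arbitrary (not necessarily optimal) $\overline{GW}$-feasible point. This is more direct: it avoids the sign-based case analysis, makes no appeal to optimality, and in fact establishes the slightly stronger statement that the $\overline{GW}$- and $\overline{FT}$-feasible regions have the same projection onto the objective hyperplane. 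The paper's argument, on the other hand, yields the structural fact that optima of $\overline{GW}$ automatically satisfy $y_{ij}=y_{ji}$, which your averaging approach does not give for free.
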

\begin{proof}
For any indices $i$ and $j$, the constraint $\eqref{f5} $ and $\eqref{f6}$ of $\overline{FT}$ are given by
\begin{align*}
&y_{ij}\leq x_{i},
y_{ji}\leq x_{j},
y_{ij}\leq y_{ji},\mbox{ and }
y_{ji}\leq y_{ij}.
\end{align*}
Thus $y_{ji}= y_{ij}$ and hence
$y_{ij}\leq x_{i} \mbox{ and } y_{ij}\leq x_{j}$.
Thus, every feasible solution of $\myoverline{FT}$ is feasible for $\myoverline{GW}$ establishing that $\obj(\myoverline{GW})\geq \obj(\myoverline{FT})$

On the other hand, we will show that every optimal solution of $\overline{GW}$ must be also a feasible solution of $\overline{FT}$ Suppose not, then there exist an optimal solution, say $(\overline{x}, \overline{y})$ with objective value $v(\overline{GW}_{(\overline{x},\overline{y})})$, such that at least one constraint of $\overline{FT}$ is violated. As the given solution is a feasible solution of $\overline{GW}$, assume that for some pair $(p,q)$ constraint \eqref{f6} is violated:
\begin{align*}
&\overline{y}_{qp} <\overline{ y}_{pq}
\end{align*}
Since we assume the $Q$ matrix to be symmetric. Without loss of generality, assume that $q_{ij}=q_{ji}>0$. Then, we can construct other  solution $(x,y)$ for $\overline{GW}$ such that $x_{i}=\overline{x}_{i} \; \forall i=1,\dots n$  and
\begin{equation}
y_{ij}=\begin{cases}
\overline{y}_{qp}+\epsilon&\text{if $(i,j)=(q,p)$}\\
\overline{y}_{ij}&\text{otherwise}
\end{cases}
\end{equation}
Clearly, for a small $\epsilon>0$ this is a feasible solution as
\begin{equation}
x_{i}+x_{j}-1\leq y_{qp}=\overline{y}_{qp}+ \epsilon<\overline{ y}_{pq}\leq\text{min}\{{x_{p},x_{q}}\}
\end{equation}
The optimal value of this solution will be $v(\overline{GW}_{(\overline{x},\overline{y})})+q_{ij}\epsilon$ which contradicts the fact that $(\overline{x},\overline{y})$ is an optimal solution of $\overline{GW}$. Thus, every optimal solution of $\overline{GW}$ must be a feasible solution of $\overline{FT}$. Hence $v(\overline{GW})\leq v(\overline{FT})$.
\end{proof}
Punnen, Pandey, and  Friesen~\cite{ppf} studied the effect of LP relaxations of MILP models of QUBO under different equivalent representations of the matrix $Q$. Recall that we assumed that $Q$ is a symmetric matrix. Interestingly, when $Q$ is not necessarily symmetric, FT could provide a stronger LP relaxation. To see this, consider the instance of QUBO (EX7) where
 $$Q=\begin{pmatrix}
0&3\alpha\\
-\alpha&0
\end{pmatrix}\mbox{ and } \vb c\tran=(-\alpha,-\alpha)$$ where $\alpha \geq 1$. Then, $\myoverline{FT}$ has an optimal solution $x_{1}=x_{2}=y_{12}=y_{21}=0$ with objective function value 0, whereas    $x_{1}=x_{2}=y_{12}=0.5,\;y_{21}=0$ is an optimal solution to $\myoverline{GW}$ with objective function value $\alpha/2.$ This shows that  $\obj(\myoverline{GW})$ could be arbitrarily bad compared to $\obj(\myoverline{FT})$.
We now show that $GW$ is stronger than $DW$.
\begin{theorem}
$\obj(\myoverline{GW})\leq\obj(\myoverline{DW})$. Further, $\obj(\myoverline{DW})$  could be arbitrarily bad compared to $\obj(\myoverline{GW})$.
\end{theorem}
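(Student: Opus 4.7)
The plan is to prove the two parts separately. For the first inequality $\obj(\myoverline{GW})\leq\obj(\myoverline{DW})$, I would show that every feasible solution of $\myoverline{GW}$ is feasible for $\myoverline{DW}$, so the feasible region of $\myoverline{GW}$ is contained in that of $\myoverline{DW}$ and hence its maximum is no larger. The type 1 linearization constraints $x_i+x_j-y_{ij}\leq 1$ are common to both models, and the variable bounds $0\leq x_i\leq 1$ match; the GW bound $y_{ij}\geq 0$ together with $y_{ij}\leq x_i\leq 1$ also supplies the implicit $y_{ij}\leq 1$ used in $\myoverline{DW}$. The only new point is the type 2 constraint $2y_{ij}-x_i-x_j\leq 0$ of DW, which follows immediately by adding the two GW constraints $y_{ij}\leq x_i$ and $y_{ij}\leq x_j$ (i.e., \eqref{ch5-e9} and \eqref{ch5-e8}).

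For the arbitrarily-bad part, I would exhibit a small parametric family of instances on $n=2$ with $q_{12}=q_{21}=\alpha>0$, $c_1=0$, $c_2=-\tfrac{3}{2}\alpha$, and compute both relaxations in closed form. In $\myoverline{DW}$, the fractional point $x_1=1,\ x_2=0,\ y_{12}=y_{21}=\tfrac12$ is feasible (all type 1 and type 2 constraints hold with slack or equality) and achieves objective value $\alpha$, so $\obj(\myoverline{DW})\geq\alpha$. In $\myoverline{GW}$, the bounds $y_{ij}\leq\min\{x_i,x_j\}$ force $y_{12}+y_{21}\leq 2x_2$, so the objective is at most $2\alpha x_2-\tfrac{3}{2}\alpha x_2=\tfrac{\alpha}{2}x_2\leq \tfrac{\alpha}{2}$, and this bound is attained at $x_1=x_2=1$, $y_{12}=y_{21}=1$. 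Hence $\obj(\myoverline{DW})-\obj(\myoverline{GW})\geq \alpha/2$, which tends to infinity with $\alpha$.

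The main obstacle is simply identifying the right instance that separates the two relaxations; the inclusion argument is immediate from combining the two disaggregated GW bounds. The separating instance exploits the fact that $\myoverline{DW}$ allows $y_{ij}$ to be as large as $(x_i+x_j)/2$ even when one of $x_i,x_j$ is zero, so inserting a linear penalty only on $x_2$ rewards the fractional point that GW's tighter upper bound $\min\{x_i,x_j\}$ forbids. I would therefore present the inclusion argument first, then introduce the example, verify feasibility in $\myoverline{DW}$, and compute the tight upper bound for $\myoverline{GW}$ by substituting $y_{ij}\leq x_2$ into the objective.
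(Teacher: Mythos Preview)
Your proposal is correct and follows essentially the same approach as the paper: the inclusion $\obj(\myoverline{GW})\leq\obj(\myoverline{DW})$ is obtained exactly as in the paper by summing the two disaggregated constraints $y_{ij}\leq x_i$ and $y_{ij}\leq x_j$ to recover the DW type~2 constraint, and the ``arbitrarily bad'' part is shown via a parametric $2\times 2$ instance exploiting that $\myoverline{DW}$ only enforces $y_{ij}\leq(x_i+x_j)/2$ rather than $y_{ij}\leq\min\{x_i,x_j\}$. Your specific example ($q_{12}=q_{21}=\alpha$, $c_1=0$, $c_2=-\tfrac{3}{2}\alpha$) differs from the paper's ($q_{12}=q_{21}=\alpha/2$, $c_1=-\alpha$, $c_2=1$), but the mechanism is identical, and your explicit upper-bound computation for $\myoverline{GW}$ is in fact more rigorous than the paper's bare assertion of optimality.
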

\begin{proof}
Note that constraint \eqref{pe2} in DW is the sum of the constraints \eqref{ch5-e4} and \eqref{ch5-e5}. Thus, every feasible solution of $\myoverline{GW}$ is also feasible to $\myoverline{DW}$. Thus, $\obj(\myoverline{GW})\leq\obj(\myoverline{DW})$. Now consider the example (EX8) where $$Q=\begin{pmatrix}0&\alpha/2\\ \alpha/2&0\end{pmatrix} \mbox{ and } \vb c\tran=(-\alpha,1), \mbox{ where } \alpha > 0.$$ Then, $x_1=x_2=y_{12}=y_{21}=1$ is an optimal solution to $\myoverline{GW}$ with objective function value $1$ whereas for $\myoverline{DW}$, $x_1=0, x_2=1, y_{12}=y_{21}=0.5$ is an optimal solution with objective function value $1+\alpha$.
\end{proof}
Recall that any solution to $\myoverline{PK}$ can be represented by the ordered pair $(\vb x, Y)$ where $\vb x =(x_1,x_2,\ldots ,x_j)$ is a vector and $Y$ is an ordered list consisting of $y_{ij}, i=1,2,\ldots n, j\in R_i$.
\begin{lemma}\label{lm10}
Let $(\bar{\vb x},\bar{ Y})$ be an optimal solution to $\myoverline{PK}$. If $\bar{y}_{uv}\neq \bar{y}_{vu}$ for some indices $u$ and $v$ then $\bar{y}_{uv}+\bar{y}_{vu}=2\bar{x}_{u}$ and $\bar{y}_{vu}+\bar{y}_{uv}=2\bar{x}_{v}$.
\end{lemma}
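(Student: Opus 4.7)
The plan is a contradiction-by-perturbation argument. Assume $\bar y_{uv}\neq \bar y_{vu}$ at the LP optimum, and suppose without loss of generality $\bar y_{uv}>\bar y_{vu}\ge 0$, so in particular $\bar y_{uv}>0$. My goal is to show that if either of the two type 2 constraints $y_{uv}+y_{vu}\le 2x_u$ or $y_{uv}+y_{vu}\le 2x_v$ has positive slack, then one can perturb $(\bar{\vb x},\bar Y)$ so as to strictly improve the objective, contradicting optimality.

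The first step is a key observation: the type 1 constraint $x_u+x_v-y_{uv}\le 1$ cannot be tight at $(\bar{\vb x},\bar Y)$. For if it were, then $\bar y_{uv}=\bar x_u+\bar x_v-1$, but the companion type 1 constraint arising from $(i,j)=(v,u)$ gives $\bar y_{vu}\ge \bar x_u+\bar x_v-1=\bar y_{uv}$, contradicting $\bar y_{uv}>\bar y_{vu}$. Using this, I would then analyse two elementary one-variable perturbations holding everything else fixed. Perturbation A sends $y_{uv}\mapsto \bar y_{uv}-\varepsilon$: it is feasible for small $\varepsilon>0$ (type 1 $(u,v)$ retains its slack, both type 2 left-hand sides only decrease, and $\bar y_{uv}>0$), and it shifts the objective by $-q_{uv}\varepsilon$. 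Perturbation B sends $y_{vu}\mapsto \bar y_{vu}+\varepsilon$: this is feasible iff both type 2 constraints have slack, and shifts the objective by $+q_{vu}\varepsilon=+q_{uv}\varepsilon$ using the symmetry $q_{uv}=q_{vu}$. Since $v\in R_u$ forces $q_{uv}\neq 0$, one of A or B is an improving direction whenever its feasibility condition holds. This immediately disposes of two cases: $q_{uv}<0$ (via A) and $q_{uv}>0$ together with both type 2 constraints slack (via B).

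The hard part is the remaining configuration, where $q_{uv}>0$ and exactly one type 2 constraint is tight while the other is slack; neither simple perturbation alone improves the objective. My plan here is to couple the increase $y_{vu}\mapsto\bar y_{vu}+\varepsilon$ with a small calibrated increase of the $x$-variable whose type 2 constraint is currently tight (so as to preserve that tight constraint), and then propagate the adjustment through any type 1 constraints at that $x$-variable that happen to be tight. Verifying that all remaining constraints stay satisfied---using the slack of the other type 2 constraint, the strict inequality from the first observation, and the fact that the relevant $x$-value is strictly less than $1$---while producing a net positive change in the objective is the main technical hurdle. Once this is worked out, both type 2 constraints $(u,v)$ and $(v,u)$ must be tight at the optimum whenever $\bar y_{uv}\neq \bar y_{vu}$, yielding the two equalities $\bar y_{uv}+\bar y_{vu}=2\bar x_u$ and $\bar y_{vu}+\bar y_{uv}=2\bar x_v$ claimed in the lemma.
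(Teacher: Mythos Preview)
Your handling of the two easy cases ($q_{uv}<0$ via Perturbation~A, and $q_{uv}>0$ with both type~2 constraints slack via Perturbation~B) is correct and matches the paper's reasoning. The ``hard case'', however, cannot be closed by any perturbation argument, because the lemma as stated is in fact \emph{false}. Take $n=3$ with
\[
Q=\begin{pmatrix}0&1&-2\\1&0&0\\-2&0&0\end{pmatrix},\qquad \vb c\tran=(2,\,0,\,100).
\]
The LP relaxation $\myoverline{PK}$ has optimal value $100$, attained (among others) at $\bar{\vb x}=(0.3,\,0.7,\,1)$ with $\bar y_{12}=0.6$, $\bar y_{21}=0$, $\bar y_{13}=\bar y_{31}=0.3$. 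Here $\bar y_{12}\neq\bar y_{21}$, yet $\bar y_{12}+\bar y_{21}=0.6=2\bar x_1$ while $0.6\neq 1.4=2\bar x_2$: only one of the two type~2 constraints is tight, contradicting the lemma's conclusion. If you run your proposed coupled perturbation on this instance (raise $y_{21}$ by $\varepsilon$, raise $x_1$ by $\varepsilon/2$, then propagate through the tight type~1 constraints at $x_1$, forcing $y_{13}$ and $y_{31}$ up by $\varepsilon/2$ each), every constraint is respected but the net objective change is $q_{21}\varepsilon+c_1\tfrac{\varepsilon}{2}+q_{13}\tfrac{\varepsilon}{2}+q_{31}\tfrac{\varepsilon}{2}=\varepsilon+\varepsilon-\varepsilon-\varepsilon=0$ --- as it must be, since the starting point is already optimal.

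The paper's own proof shares this gap: in the case $q_{uv}>0$ it increases the smaller of $\bar y_{uv},\bar y_{vu}$ without checking the \emph{other} type~2 constraint $y_{uv}+y_{vu}\le 2\bar x_v$, and the closing ``by symmetry'' does not apply because the roles of $u$ and $v$ are not interchangeable once one has fixed which of $\bar y_{uv},\bar y_{vu}$ is smaller. What \emph{is} true at every $\myoverline{PK}$-optimum (and what your easy cases already establish) is the weaker statement $\bar y_{uv}+\bar y_{vu}=2\min(\bar x_u,\bar x_v)$ whenever $q_{uv}>0$. That is enough for the downstream equality $\obj(\myoverline{GW})=\obj(\myoverline{PK})$: from any $\myoverline{PK}$-optimum, replacing each pair $(\bar y_{ij},\bar y_{ji})$ by their common average $\tfrac{1}{2}(\bar y_{ij}+\bar y_{ji})$ preserves the objective (since $q_{ij}=q_{ji}$), satisfies all type~1 constraints, and yields $y_{ij}\le\min(\bar x_i,\bar x_j)$, hence a $\myoverline{GW}$-feasible solution of the same value.
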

\begin{proof}
Let $S=\{(i,j) : i=1,2,\ldots ,n, j\in R_i,$ and $ (i,j)\neq (u,v),(v,u)\}$. The the objective function of $\myoverline{PK}$ can be written as
$$q_{uv}y_{uv}+q_{vu}y_{vu}+\sum_{(i,j)\in S}q_{ij}y_{ij}+\sum_{j=1}^nc_jx_j.$$
Without loss of generality assume that $\bar{y}_{uv}< \bar{y}_{vu}$. Thus, $\bar{y}_{vu}\neq 0$ and $\bar{x}_u\neq 0, \bar{x}_v\neq 0$. Since $Q$ is symmetric, note that $q_{uv}=q_{vu}\neq 0$. If possible, assume that $\bar{y}_{uv}+\bar{y}_{vu}<2\bar{x}_{u}$. If $q_{uv} > 0$ we can increase the value of $\bar{y}_{uv}$ to $\bar{y}_{uv}+\epsilon$ for arbitrarily small $\epsilon > 0$ to yield a feasible solution to $\myoverline{PK}$ with objective function value larger than that of $(\bar{\vb x},\bar{Y})$, a contradiction. Likewise, if $q_{uv} < 0$ we can decrease the value of $\bar{y}_{vu}$ to $\bar{y}_{vu}-\epsilon$ to yield a feasible solution with objective function value larger than that of $(\bar{\vb x},\bar{Y})$, a contradiction. Thus, $\bar{y}_{uv}+\bar{y}_{vu}=2\bar{x}_{u}$. The equality $\bar{y}_{vu}+\bar{y}_{uv}=2\bar{x}_{v}$ follows by symmetry.
\end{proof}
\begin{lemma}\label{lm10-1}
If $(\bar{\vb x},\bar{Y})$ is an optimal solution to $\myoverline{PK}$ and $\bar{y}_{uv}\neq \bar{y}_{vu}$ for some indices $u$ and $v$ then $\bar{x}_u=\bar{x}_v$. Further,  $(\vb x, Y)$ is also an optimal solution to $\myoverline{PK}$ where $\vb x= \bar{\vb x}$, $y_{uv}=y_{vu}=\bar{x}_u=\bar{x}_v$ and $y_{ij}=\bar{y}_{ij}$ for all $(i,j)\neq (u,v),(v,u)$.
\end{lemma}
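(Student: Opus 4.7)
The plan is to deduce the first claim immediately from Lemma \ref{lm10} and then verify the second claim by direct substitution, showing feasibility and invariance of the objective. For the first part, Lemma \ref{lm10} asserts that if $\bar y_{uv}\neq \bar y_{vu}$, then both $\bar y_{uv}+\bar y_{vu}=2\bar x_u$ and $\bar y_{vu}+\bar y_{uv}=2\bar x_v$ hold simultaneously. Comparing these two equations immediately yields $\bar x_u=\bar x_v$, which settles the first conclusion with essentially no work.

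For the second part, I would construct $(\vb x,Y)$ as specified and check that it lies in $\myoverline{PK}$ while preserving the objective value. Setting $t:=\bar x_u=\bar x_v$, the only entries that change are $y_{uv}=y_{vu}=t$, so only constraints of $\myoverline{PK}$ that mention these two variables need to be revisited. Namely, for the type 1 constraint \eqref{pe5} with $(i,j)\in\{(u,v),(v,u)\}$, I would evaluate $x_u+x_v-y_{uv}=2t-t=t\leq 1$, using the LP bound $x_u\leq 1$. For the type 2 constraint \eqref{pe6} with $i=u,\,j=v$ (and symmetrically with $i=v,\,j=u$), the left-hand side becomes $y_{uv}+y_{vu}-2x_u=2t-2t=0$, which satisfies the inequality. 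The nonnegativity constraint \eqref{pe8} holds since $t=\bar x_u\ge 0$. No other constraint in $\myoverline{PK}$ contains $y_{uv}$ or $y_{vu}$, so the remaining inequalities continue to hold because $(\vb x,Y)$ agrees with $(\bar{\vb x},\bar Y)$ on every other coordinate.

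To conclude optimality, I would compare the two objective values. Since $Q$ is symmetric, $q_{uv}=q_{vu}$, so the contribution from the pair $(u,v)$ under $(\bar{\vb x},\bar Y)$ is $q_{uv}(\bar y_{uv}+\bar y_{vu})$, which by Lemma \ref{lm10} equals $2tq_{uv}$. Under $(\vb x,Y)$ the same contribution is $q_{uv}(t+t)=2tq_{uv}$, and all other terms are identical. Hence the objective values coincide, and $(\vb x,Y)$ is optimal. The entire argument is routine bookkeeping once Lemma \ref{lm10} is in hand; the only mild obstacle is making sure that the type 1 constraint $x_u+x_v-y_{uv}\leq 1$ survives the substitution, which is where the LP bound $x_u\leq 1$ is invoked, rather than any inequality between $\bar y_{uv}$ and $\bar y_{vu}$.
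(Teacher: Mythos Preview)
Your proof is correct and follows essentially the same approach as the paper: the first part is read off from Lemma~\ref{lm10} by equating the two identities, and the second part is a direct verification that the symmetrized solution is feasible with unchanged objective. Your write-up is in fact more explicit than the paper's, which simply asserts that ``it can be verified'' that $(\vb x,Y)$ is feasible and has the same objective value; your constraint-by-constraint check (including the observation that the type~1 bound $x_u+x_v-y_{uv}=t\le 1$ uses only the LP box constraint) fills in exactly those details.
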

\begin{proof}
The proof of the first part of this lemma follows from Lemma~\ref{lm10}. Let us now consider the proof of the second part. Since $\bar{y}_{uv}\neq \bar{y}_{vu}$,
by Lemma ~\ref{lm10}, $\textoverline{y}_{uv}+\textoverline{y}_{vu}=2\textoverline{x}_{u}$ and $\bar{y}_{vu}+\bar{y}_{uv}=2\bar{x}_{v}$. Also, $\bar{x}_u=\bar{x}_v$.
Thus, it can be verified that  $(\vb{x},Y)$ and $(\bar{\vb x},\bar{Y})$ have the same objective function value and $(\vb{x},Y)$ is a feasible solution to $\myoverline{PK}$. Thus $(\vb{x},Y)$ is  an optimal solution to $\myoverline{PK}$.
\end{proof}
 The proof of Lemma~\ref{lm10-1} implicitly used the fact that $Q$ is symmetric.
\begin{theorem}\label{q99}
$\obj(\myoverline{GW})=\obj(\myoverline{PK})$. 
\end{theorem}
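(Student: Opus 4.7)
My plan is to prove the two inequalities $\obj(\myoverline{GW})\leq\obj(\myoverline{PK})$ and $\obj(\myoverline{PK})\leq\obj(\myoverline{GW})$ separately, using the symmetrization lemmas \ref{lm10} and \ref{lm10-1} for the nontrivial direction.

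For the first inequality, observe that the PK linearization constraint \eqref{pe6}, namely $y_{ij}+y_{ji}\leq 2x_i$, is precisely the sum of the two GW constraints $y_{ij}\leq x_i$ and $y_{ji}\leq x_i$ (i.e., \eqref{ch5-e9} and \eqref{ch5-e8} with the same $i$). Since the type 1 constraints and the bounds on $x_i, y_{ij}$ are identical in the two relaxations, every feasible solution to $\myoverline{GW}$ is feasible to $\myoverline{PK}$, giving $\obj(\myoverline{GW})\leq\obj(\myoverline{PK})$.

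For the reverse direction, I would start with an arbitrary optimal solution $(\bar{\x},\bar{Y})$ of $\myoverline{PK}$ and convert it into a feasible solution of $\myoverline{GW}$ with the same objective function value. If $\bar{y}_{uv}=\bar{y}_{vu}$ for every pair $(u,v)$ with $q_{uv}\neq 0$, then using the PK constraint $\bar{y}_{uv}+\bar{y}_{vu}\leq 2\bar{x}_u$ we obtain $\bar{y}_{uv}\leq \bar{x}_u$; by symmetry $\bar{y}_{uv}\leq \bar{x}_v$. Together with the type 1 constraint $\bar{x}_u+\bar{x}_v-\bar{y}_{uv}\leq 1$ and $\bar{y}_{uv}\geq 0$, this exhibits $(\bar{\x},\bar{Y})$ as a feasible solution of $\myoverline{GW}$. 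If instead $\bar{y}_{uv}\neq \bar{y}_{vu}$ for some pair, I would repeatedly apply Lemma~\ref{lm10-1} to replace both $\bar{y}_{uv}$ and $\bar{y}_{vu}$ by the common value $\bar{x}_u=\bar{x}_v$. Each such replacement preserves feasibility and the objective function value in $\myoverline{PK}$, and strictly decreases the number of asymmetric pairs. After finitely many steps I obtain an optimal solution of $\myoverline{PK}$ whose $y$-components are symmetric, and the argument above then places it in the feasible region of $\myoverline{GW}$. Hence $\obj(\myoverline{PK})\leq\obj(\myoverline{GW})$.

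Combining both directions yields the claimed equality. The only delicate step is showing that the symmetrization procedure terminates and remains optimal throughout. Since each application of Lemma~\ref{lm10-1} resolves one pair $(u,v)$ without introducing any new asymmetries (the modification only touches coordinates $y_{uv}$ and $y_{vu}$), a finite number of applications suffices; this is the part of the argument I would take most care to state cleanly.
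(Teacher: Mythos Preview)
Your proposal is correct and follows essentially the same route as the paper: the easy direction is obtained by noting that \eqref{pe6} is the sum of \eqref{ch5-e9} and \eqref{ch5-e8}, and the reverse direction is obtained by repeatedly applying Lemma~\ref{lm10-1} to symmetrize an optimal solution of $\myoverline{PK}$, after which it becomes feasible for $\myoverline{GW}$. Your explicit observation that symmetry of the $y$-components together with \eqref{pe6} already yields the GW constraints, and your remark that each symmetrization step touches only one pair and hence cannot create new asymmetries, make the termination argument slightly cleaner than in the paper.
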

\begin{proof}
The constraint \eqref{pe6} in $\myoverline{PK}$ is the sum of the constraints \eqref{ch5-e9} and \eqref{ch5-e8} in $\myoverline{GW}$. Thus every feasible solution $\myoverline{GW}$ is also feasible to $\myoverline{PK}$ and hence $\obj(\myoverline{GW})\leq \obj(\myoverline{PK})$. Now, let  $(\bar{\vb x},\bar{Y})$ be an optimal solution to $\myoverline{PK}$. If $(\bar{\vb x},\bar{Y})$ is a feasible solution to $\myoverline{GW}$, the proof is over. So, suppose that $(\bar{\vb x},\bar{Y})$ is not a feasible solution to $\myoverline{GW}$.   Let $T=\{(i,j) : i=1,2,\ldots, j\in R_i\}$ such that the  solution $(\bar{\vb x},\bar{Y})$ violates at least one of the constraints pair
\begin{align}
\label{ta11}&y_{ij}\leq x_{i}\\
\label{ta12}&y_{ji}\leq x_{i}
\end{align}
for all $(i,j)\in T$. Note that for any pair $(i,j)\in T$ both of these constraints cannot be violated simultaneously. Thus, precisely one of these constraints is violated.  By assumption, $T\neq \emptyset$. Let $(u,v)\in T$ and hence $y_{uv}\neq y_{vu}$.
By Lemma ~\ref{lm10-1}, we can construct an alternative optimal solution $(\vb x, Y)$ to  $\myoverline{PK}$ such that constraints~\eqref{ta11} and \eqref{ta12} are satisfied for all $(i,j)\notin T-\{(u,v)\}$.
Further, $(\vb x, Y)$ is also an optimal solution to $\myoverline{PK}$ with the property that the number of constraints violated in $\myoverline{GW}$ is reduced by one.  Repeating this process, we can construct a feasible solution in $\myoverline{GW}$ which is an optimal solution to $\myoverline{PK}$. Thus $\obj(\myoverline{GW})\geq \obj(\myoverline{PK})$ and the result follows.
\end{proof}
Theorem ~\ref{q99} need not be true if the underlying matrix $Q$ is not symmetric. . This can be illustrated using the same example as the one constructed for a corresponding result for FT. Here $\myoverline{PK}$ gives an optimal solution $x_{1}=x_{2}=0.5,\; y_{12}=0,\; y_{21}=1$, with objective function value $2\alpha$. Thus, $\myoverline{GW}$ gives tighter bound than $\myoverline{PK}$ in this case.
\subsection{LP relaxations of weighted aggregation based models}

 In the preceding section, we have shown that the LP relaxations of GW, FT, and PK yield the same objective function value. Weighted aggregation models are expected to produce weaker bounds. We now show that careful choices of the multipliers used in some of our weighted aggregation models can provide a LP relaxation bound matching the LP relaxation bound of GW. Before discussing these results, let us prove a property from linear programming duality which is related surrogate duality~\cite{dyer, glover2, glover3}.

Consider the linear program
 \begin{align*}
\text{P:\quad Maximize}\hspace{11pt} &\vb c\tran\x \\
\mbox{Subject to: }~~& A\x \leq \vb b, \x\geq \vb 0
\end{align*}
where the matrix $A$ and vectors $\vb b,\vb c\tran$ and $\x$ are  of appropriate dimensions. Let $\vb w^*$ be a given non-negative row vector. Now consider the continuous knapsack problem obtained from $P$ as
 \begin{align*}
\text{CKP($\vb w^*$):\quad Maximize}\hspace{11pt} &\vb c\tran\x  \\
\mbox{Subject to: }~~& \vb w^*A\x \leq \vb w^*\vb b, \x\geq \vb 0
\end{align*}
\begin{theorem}\label{th2}
If P has an optimal solution with an optimal dual solution $\vb w^0$, then CKP($\vb w^0$) also has an optimal solution. Further, the optimal objective function values of P and CKP($\vb w^0$) are the same.
\end{theorem}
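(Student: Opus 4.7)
The plan is to combine LP strong duality with the observation that aggregating constraints by a dual-feasible multiplier preserves both primal feasibility of an original optimum and a matching dual certificate. The skeleton of the argument is to exhibit a primal feasible solution and a dual feasible solution of $\text{CKP}(\vb w^0)$ with common objective value $\obj(P)$; this will simultaneously prove existence of an optimum of $\text{CKP}(\vb w^0)$ and the value equality.

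First I would write the LP dual of $P$, treating $\vb w$ as a row vector to match the statement: $\min \vb w \vb b$ subject to $\vb w A \geq \vb c\tran$ and $\vb w \geq \vb{0}$. By hypothesis $\vb w^0$ is optimal for this dual, so LP strong duality gives $\obj(P) = \vb w^0 \vb b$ and dual feasibility supplies the key inequality $\vb w^0 A \geq \vb c\tran$, which I will reuse twice. For the inequality $\obj(\text{CKP}(\vb w^0)) \geq \obj(P)$, I would take any primal-optimal $\x^{*}$ of $P$; from $A\x^{*} \leq \vb b$ and $\vb w^0 \geq \vb{0}$ left-multiplication yields $\vb w^0 A \x^{*} \leq \vb w^0 \vb b$, so $\x^{*}$ is feasible for $\text{CKP}(\vb w^0)$ with objective $\vb c\tran \x^{*} = \obj(P)$.

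For the matching upper bound I would pass to the LP dual of $\text{CKP}(\vb w^0)$; since its primal has a single aggregated constraint, the dual is the one-variable program $\min (\vb w^0 \vb b)\,\lambda$ subject to $\lambda\,(\vb w^0 A) \geq \vb c\tran$ and $\lambda \geq 0$. The choice $\lambda = 1$ is feasible precisely because $\vb w^0 A \geq \vb c\tran$, and has objective value $\vb w^0 \vb b$, so weak duality forces $\obj(\text{CKP}(\vb w^0)) \leq \vb w^0 \vb b = \obj(P)$. Combined with the previous inequality this settles the theorem, and the exhibited primal and dual pair certify attainment of the optimum. The one point requiring care — rather than a genuine obstacle — is to keep the row/column bookkeeping consistent and to notice that the scalar $\lambda = 1$ reproduces exactly the dual-feasibility inequalities already satisfied by $\vb w^0$; the nonnegativity $\vb w^0 \geq \vb{0}$ is what makes left-multiplication preserve the inequality direction in the primal-feasibility step, and it is essential that $\vb w^0$ be dual-feasible (not merely nonnegative) for the dual step.
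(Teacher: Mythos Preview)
Your argument is correct. The paper itself omits the proof of this theorem, deferring it to a forthcoming thesis, so there is no proof in the paper to compare against; the argument you give is the standard one via LP duality---aggregation with nonnegative multipliers relaxes the primal feasible region, and dual feasibility of $\vb w^0$ makes $\lambda=1$ a feasible dual solution for the aggregated problem with value $\vb w^0\vb b=\obj(P)$, so weak duality closes the gap.
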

Now, consider the linear program
 \begin{align*}
\text{$P^*$:\quad Maximize}\hspace{11pt} &\vb c\tran\x  \\
\mbox{Subject to: }~~& A\x \leq \vb b, A^1\x\leq \vb b^1,A^2\x\leq \vb b^2,\ldots ,A^p\x\leq \vb b^p,  \x\geq \vb 0
\end{align*}
where $\vb b^i\in R^{m_i}$ for $i=1,2,\ldots ,p$. Let $\vb w^i\in R^{m_i}$ be a given non-negative row vectors in $R^{m_i}$ , for $i=1,2,\ldots ,p$. Now consider the new linear program $P^w$ obtained from $P^*$ using weighted aggregation of constraints $A^i\x\leq \vb b^i, i=1,2,\ldots ,n$. Then $P^w$ can be written as
 \begin{align*}
\text{$P^w$:\quad Maximize}\hspace{11pt} &\vb c\tran\x  \\
\mbox{Subject to: }~~& A\x \leq \vb b, \vb w^1A^1\x\leq \vb w^1\vb b^1,\vb w^2A^2\x\leq \vb w^2\vb b^2,\ldots ,\vb w^pA^p\x\leq \vb w^p\vb b^p,  \x\geq \vb 0
\end{align*}
\begin{theorem}\label{th3}
When $\vb w^i$ is the part of an optimal dual solution $\vb w$ of $P^*$ that is associated with the constraint block $A^i\x\leq \vb b^i, i=1,2,\dots ,p$, the optimal objective function values of $P^*$ and $P^w$ are the same.
\end{theorem}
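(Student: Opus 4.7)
The plan is to establish the equality $\obj(P^*)=\obj(P^w)$ via two inequalities, using LP duality, essentially mimicking the argument for Theorem~\ref{th2} but keeping the untouched block $A\x\leq \vb b$ explicit.

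First, I would observe that $P^w$ is a relaxation of $P^*$. Indeed, since each $\vb w^i\geq \vb 0$, any $\x$ satisfying $A^i\x\leq \vb b^i$ also satisfies $\vb w^i A^i\x\leq \vb w^i\vb b^i$. Hence every feasible solution of $P^*$ is feasible for $P^w$, and (since $P^*$ is a maximization) $\obj(P^w)\geq \obj(P^*)$.

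For the reverse inequality, I would invoke LP duality. Write the dual of $P^*$: let $\vb u\geq \vb 0$ be the dual variables associated with $A\x\leq \vb b$ and $\vb w^i\geq \vb 0$ (for $i=1,\ldots ,p$) the dual variables associated with $A^i\x\leq \vb b^i$. By strong duality, an optimal dual solution $(\vb u,\vb w^1,\ldots ,\vb w^p)$ satisfies $\vb u A+\sum_{i=1}^p \vb w^i A^i\geq \vb c\tran$ and attains objective value $\vb u\vb b+\sum_{i=1}^p \vb w^i \vb b^i=\obj(P^*)$. Now consider the dual of $P^w$: its variables are $\vb u\geq \vb 0$ together with scalars $\lambda_i\geq 0$, and feasibility requires $\vb u A+\sum_{i=1}^p \lambda_i (\vb w^i A^i)\geq \vb c\tran$, with objective value $\vb u\vb b+\sum_{i=1}^p \lambda_i(\vb w^i\vb b^i)$. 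Setting $\lambda_i=1$ for every $i$ and taking $\vb u$ to be the component of the optimal dual of $P^*$ produces a feasible dual solution of $P^w$ whose objective value equals $\obj(P^*)$. By weak duality for $P^w$, $\obj(P^w)\leq \obj(P^*)$.

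Combining the two inequalities gives $\obj(P^w)=\obj(P^*)$. The step that needs the most care is the correspondence between the dual of $P^*$ and the dual of $P^w$: the substitution $\lambda_i=1$ is only meaningful because the aggregated rows are literally $\vb w^i A^i$ and $\vb w^i \vb b^i$, so that the dual feasibility inequality of $P^w$ reduces to the dual feasibility inequality of $P^*$ evaluated at the chosen $(\vb u,\vb w^1,\ldots,\vb w^p)$. Once this identification is made, both the feasibility check and the objective-value computation are immediate. Note also that Theorem~\ref{th2} follows as the special case $p=1$ with the block $A\x\leq \vb b$ absent.
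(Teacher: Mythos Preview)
Your proof is correct. The paper itself omits the proof of this theorem (referring to a forthcoming thesis), so there is no explicit argument to compare against; however, your LP-duality approach---showing $P^w$ relaxes $P^*$ for one inequality and then exhibiting the feasible dual solution $(\vb u,\lambda_1,\ldots,\lambda_p)=(\vb u,1,\ldots,1)$ for $P^w$ built from the optimal dual of $P^*$ for the reverse inequality---is the standard surrogate-duality argument the paper alludes to, and is almost certainly the intended proof.
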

The proof of the theorem~\ref{th2} and theorem~\ref{th3} are skipped due to space constraints. The proofs can be found in the upcoming thesis\cite{nk}

Let us now analyze the quality of LP relaxations of the MILP models generated by weighted aggregations.

\begin{theorem}\label{th5}If the multipliers in the weighted aggregation models is used as  corresponding optimal dual variables associated with those constraints then the resulting aggregated model will have the same objective function value as the corresponding original model. In particular,
\begin{enumerate}[label={(\roman*)},itemindent=1em]
\item If $\alpha_{ij}$ in the type 1 aggregation models is selected as the dual variable associated with the corresponding aggregated constraint in the corresponding models (DW,GW,FT, and PK), then $\obj(\myoverline{DW})=\obj(\myoverline{DW}(\alpha))$, $\obj(\myoverline{GW})=\obj(\myoverline{GW}(\alpha))=\obj(\myoverline{FT}(\alpha))=\obj(\myoverline{PK}(\alpha))$.
\item  If $\beta_{ij}$, $\gamma_{ij}$ and $\delta_{ij}$  in the type 2 aggregation models are selected as the dual variable associated with the corresponding aggregated constraint in the corresponding models (GW,FT, and PK), $\obj(\myoverline{GW})=\obj(\myoverline{GW}(*,\gamma))=\obj(\myoverline{GW}(*,\delta))=\obj(\myoverline{GW}(*,\gamma+\delta))=\obj(\myoverline{FT}(*,\gamma))=\obj(\myoverline{FT}(*,\theta))=\obj(\myoverline{PK}(*,\beta))$.
\item  If $\alpha$, $\beta_{ij}$, $\gamma_{ij}$ and $\delta_{ij}$  in the simultaneous aggregation models are selected as the dual variable associated with the corresponding aggregated constraint in the corresponding models (GW,FT, and PK), $\obj(\myoverline{GW})=\obj(\myoverline{GW}(\alpha,\gamma))=\obj(\myoverline{GW}(\alpha,\delta))=\obj(\myoverline{GW}(\alpha,\gamma+\delta))=\obj(\myoverline{FT}(\alpha,\gamma))=\obj(\myoverline{FT}(\alpha,\theta))=\obj(\myoverline{PK}(\alpha,\beta))$.
\end{enumerate}
\end{theorem}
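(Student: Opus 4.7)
The plan is to reduce all three parts to Theorem~\ref{th3} together with the already established chain of equalities $\obj(\myoverline{GW})=\obj(\myoverline{FT})=\obj(\myoverline{PK})$ from Theorem~\ref{xxtha} and Theorem~\ref{q99}. For each aggregated model, I would identify precisely which block(s) of linearization constraints are being collapsed and view the original LP relaxation as an instance of $P^*$ in Theorem~\ref{th3}, with the constraints targeted for aggregation playing the role of $A^i\x\leq \vb b^i$ and the remaining constraints (together with $0\leq y_{ij}\leq 1$ and the LP-relaxed version of $0\leq x_i\leq 1$) playing the role of $A\x\leq \vb b$. Since the LP relaxations are bounded and feasible, a primal-dual optimal pair exists; picking the multipliers from the block of dual variables corresponding to the aggregated constraints is exactly the hypothesis needed for Theorem~\ref{th3} to conclude that the aggregated LP has the same optimal value.

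For part (i), I would apply this construction separately to each of $\myoverline{DW},\myoverline{GW},\myoverline{FT},\myoverline{PK}$, taking the type 1 block to be the constraints of the form $x_i+x_j-y_{ij}\leq 1$. Theorem~\ref{th3} with $p=1$ then gives $\obj(\myoverline{DW})=\obj(\myoverline{DW}(\alpha))$ and the analogous equalities for the other three models; combining with Theorems~\ref{xxtha} and~\ref{q99} yields the chain of equalities asserted. For part (ii), the same recipe is applied to the block of type 2 linearization constraints in each of GW, FT, PK: inequalities~\eqref{ch5-e9}--\eqref{ch5-e8} in GW, \eqref{f5} (and separately~\eqref{f6}) in FT, and~\eqref{pe6} in PK. Here one must be slightly careful for the GW case because the aggregation variant $\myoverline{GW}(*,\gamma+\delta)$ collapses both type 2 blocks into a single constraint per index $i$; this still fits Theorem~\ref{th3} by treating the two blocks together as one block $A^1$ and using the concatenation $(\gamma,\delta)$ of the corresponding dual components as multipliers.

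For part (iii), I would take $p=2$ in Theorem~\ref{th3}, with $A^1\x\leq \vb b^1$ the type 1 linearization constraints and $A^2\x\leq \vb b^2$ the type 2 linearization constraints (or the two type 2 blocks separately, for $p=3$ in the GW variants with $\gamma$ and $\delta$). Picking the multipliers $\alpha,\beta,\gamma,\delta$ from the corresponding blocks of an optimal dual solution of the original LP relaxation, Theorem~\ref{th3} again gives $\obj(\myoverline{M})=\obj(\myoverline{M(\alpha,\cdots)})$ for each simultaneous aggregation model $M$. Composing with the equalities from Theorems~\ref{xxtha} and~\ref{q99}, every expression in the statement reduces to $\obj(\myoverline{GW})$.

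The main obstacle is largely bookkeeping rather than substance: one has to verify that each aggregation variant in the statement really does fit the template of Theorem~\ref{th3}, which requires checking that the non-aggregated constraints (including the added upper bounds $y_{ij}\leq 1$ in those models where they appear) are placed in the $A\x\leq \vb b$ block while the aggregated ones are placed in the $A^i\x\leq \vb b^i$ blocks, and that an optimal dual solution whose components on the aggregated block are non-negative indeed exists; the latter is automatic since all aggregated constraints are $\leq$ inequalities in a maximization LP. Once this verification is done uniformly, parts (i)--(iii) follow as immediate corollaries of Theorem~\ref{th3}.
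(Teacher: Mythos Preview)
Your approach is exactly the paper's: its proof is the single sentence ``follows from Theorem~\ref{xxtha}, Theorem~\ref{q99}, and Theorem~\ref{th3}'', and you have spelled out how those three results combine. One small correction to your bookkeeping: the aggregation in \eqref{pe9}, \eqref{pe12}, \eqref{f9}, \eqref{pe14}, etc., is performed \emph{per index $i$}, so in the notation of Theorem~\ref{th3} you need $p=n$ blocks (the $i$-th block being $\{x_i+x_j-y_{ij}\leq 1 : j\in R_i\}$, respectively the type~2 constraints indexed by $i$), not $p=1$; with $p=1$ the aggregated model would have a single surrogate constraint rather than the $n$ constraints that define $DW(\alpha)$, $PK(*,\beta)$, and so on. Once you index the blocks by $i$ (and, for part~(iii), take $p=2n$ or $p=3n$ as appropriate), the rest of your argument goes through verbatim.
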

\begin{proof}
The proof of this theorem follows from Theorem~\ref{xxtha}, Theorem~\ref{q99}, and  Theorem~\ref{th3}.
\end{proof}

It may be noted that when the multipliers ($\alpha_{ij}, \beta_{ij}, etc.$) are chosen as the corresponding dual variables, as discussed in Theorem~\ref{th5}, it could invalidate the corresponding  MILP models since these multipliers could be zero for some $i$ and $j$. Note that the validity of these MILP models is guaranteed only when the multipliers are strictly positive. Thus, if any of the multipliers are zero we need to correct it to an appropriately selected value  $\epsilon > 0$ or find an alternative optimal dual solution (if exists) which is not zero. The choice of $\epsilon > 0$ needs to be made by taking into consideration numerical issues associated with the solver used and the tightness of the LP relaxation bound required. In practice, when the optimal dual variables used have value zero, it needs to be adjusted to appropriate positive values as discussed earlier, to maintain the validity of the models. Depending on the level of adjustments, the LP relaxation values may deteriorate a little bit.

%
%
%

\section{Experimental analysis}

In addition to the theoretical analysis of our MILP models for QUBO, we have also carried out extensive experimental analysis to assess the relative merits of these models. The objectives of the experiments were two fold. First we wanted we assess the ability of the models to solve the problems optimally, with a given time threshold. Another objective is to assess the heuristic value of the models. That is, given an upper threshold for running time, identify if any of the models consistently produced better solutions. We are not comparing the models with best known specially designed QUBO solvers as it will be an unfair comparison. However, we believe the insights gained from our experiments can be used to introduce further refinements to existing QUBO solvers or design new ones with improved capabilities.

All of our computational experiments were carried out on different PCs with same configurations. That is, with Windows 10 Enterprise 64-bit operating system with Intel(R) Core(TM) i7-3770 3.40GHz processor and 16 GB memory. Gurobi 9.5.1 was used as the MILP solver on Anaconda's spyder IDE  with Gurobipy interface. Parameter for 'Presolve' and 'Cuts' was set to 0 throughout these experiments. This is to eliminate bias coming from adding well-known cuts generated based on the Boolean quadric polytope~\cite{{pad}}.  We considered three classes of test instances. This include the well-known Beasley instances\cite{biqmac} and Billionet and Elloumi instances\cite{biqmac}, in addition to instances which we generated which we call balanced data set. To limit the experimental runs, we did not test some simultaneous aggregations of FT, that are FT($\alpha,\gamma$), FT($\alpha,\delta$), ORFT($\alpha,\gamma$) and ORFT($\alpha,\delta$) since they have similar behaviour as GW models.

We used the notational convention MU($\alpha, \beta$) to represent unweighted aggregation of the weighted model M($\alpha, \beta$) where M represents the model and $\alpha $ and $\beta$ represents weights/parameters used for the aggregation. For the experiments, on the weighted versions, the weights are taken as optimal dual variables for the respective constraints in the LP relaxation of the model. All of the optimal dual variables with zero value are replaced by 1. For unweighted versions, all the multipliers are taken as 1.

\subsection{Balanced Data Set}
\begin{figure}[h!]
\centering
\includegraphics[width=16cm, scale=1]{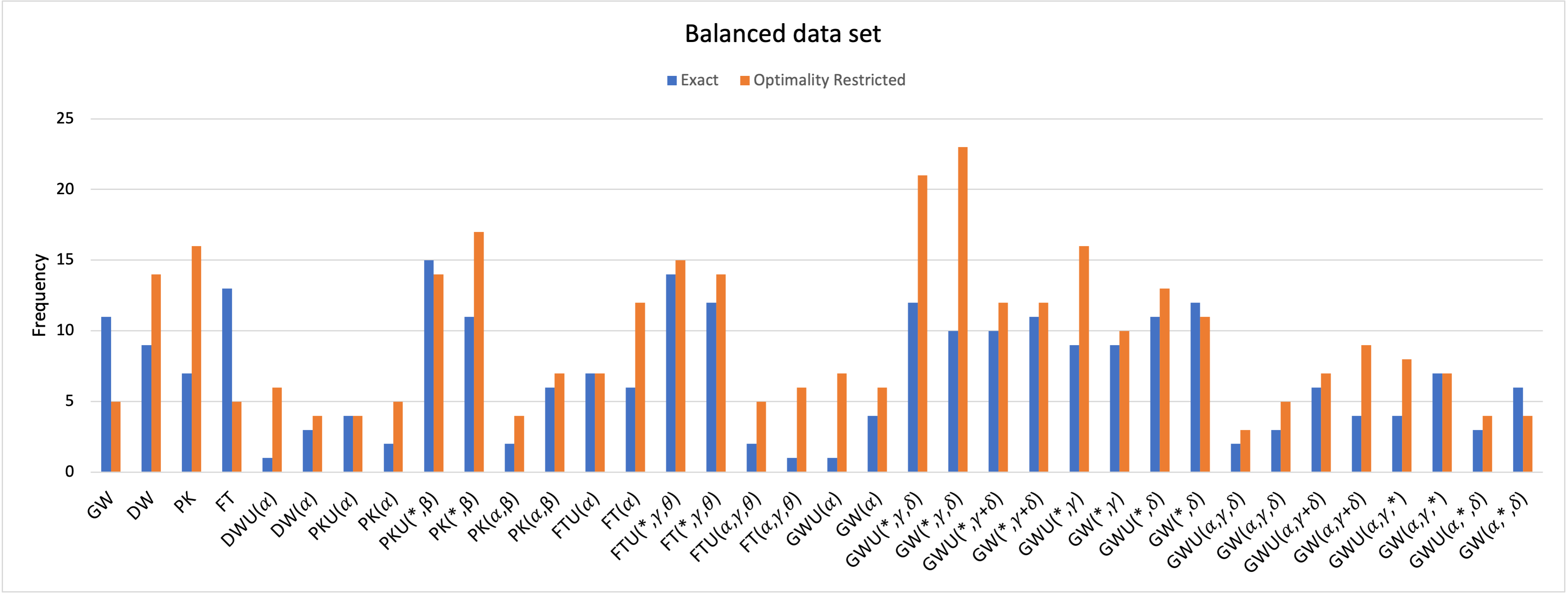}
\caption{Experimental analysis for Balanced data set}
\label{pic1}
\end{figure}
 This data set is generated with the following characteristics: $c_i\in [-10,10]$ for all i, $q_{ij}\in [-20,20]$ for all $i,j$ with $Q$ symmetric and having diagonal elements zero. Ten different problem sizes $n=10,20,30,\ldots ,90,100$ are considered.
After generating a pair $(Q,\vb c)$, we solved the corresponding $\myoverline{GW}$. If all $x_i$ variables turned out to be half-integral, the generated model is accepted into the data set; otherwise the model is discarded. For each problem size (i.e. $n$) five different instances are constructed. The balanced data set we generated will be made available for researchers for future experimental analysis. The time limit was set to 30 minutes for $n\leq60$ and 1 hour for all other instances in this set.

Figue \ref{pic1} shows the experimental results corresponding to the Balanced Data Set for the models discussed in Section 2,3 and 4. For this data set, it is observed that for the smaller instances ($n$$\leq$50), most of the models reached optimality, however, for $n$$\geq$60, for all the instances time limit was reached and the best objective function value is used for the comparison. It can be noticed that among the basic models discussed in section 2, the optimality restricted version of PK model was the best model.

Overall, the type two aggregations of PK, both unweighted and weighted, performed well here.
In general, it can be noted that all the models where type 1 constraints are aggregated never worked better as compared to the aggregations based on Type 2 constraints. For the aggregations of GW, particularly, exact type 2 aggregations performed better than the others.
Hansen and Meyer~\cite{b7}) indicated that replacing constraints \eqref{ch5-e9} and  \eqref{ch5-e8} in GW  by
   \begin{align}\label{pe15}
\sum_{j\in R_i}y_{ij}&\leq |R_i|x_i, \mbox{ for } i=1,2,\ldots ,n.\\
\label{pe16}y_{ij} &\leq 1, \mbox{ for } i=1,2,\ldots ,n.
\end{align}
yields a valid formulation for QUBO and stated it as the Glover-Woolsey aggregation. They showed that the LP relaxation of this model yields very weak upper bound and discarded it from their further experiments. We first observe that the model discussed in~\cite{b7} as Glover-Woolsey aggregation is not precisely the Glover-Woolsey aggregation and that model need not produce an optimal solution for QUBO. For example consider the QUBO with $Q =
\begin{pmatrix}
0 & 1 \\
1 & 0
\end{pmatrix}$ and $\vb c\tran =(-2,0) $. An optimal solution to this problem by this formulation is $x_1=0,\;x_2=1,\; y_{12}=0, \;y_{21}=1$ with objective function value 2 where as $x_1=0,\;x_2=1,\; y_{12}=0, \;y_{21}=0$ is a better solution with optimal objective function value 1 as the former does not give an optimal solution to the original QUBO because $x_1x_2\neq y_{21}$. However, the correct Glover-Woolsey aggregation (with amendments from Goldman~\cite{goldman}) in fact performs computationally better, as can be observed from bar chart \ref{pic1}.

Overall, when we compare all models together, optimality restricted version of GW($*,\gamma,\delta$) turned out to be better formulation with weighted version having a frequency of 23 and unweighted version with frequency 21 for this data set. For the basic models, exact and optimality restricted models performed the same for this data set.
While for the aggregations, the optimality restricted models performed better. For the exact models, unweighted aggregations were better but
for the optimality restricted models, weighted aggregations showed better results. Overall, optimality restricted versions of the aggregated and basic models both outperformed other models for this data set. As mentioned earlier, the bar chart clearly shows most of the type 1 constraints based aggregations and some simultaneous aggregations did not perform well in this experiment, so we discarded PK($\alpha,*$), GW($\alpha,*$), GW($\alpha,\gamma,\theta$), GW($\alpha,\delta$), GW($\alpha,\gamma$), FT($\alpha,*$) and their respective unweighted and optimality restricted versions from next experiments.

\subsection{Beasley Data Set\cite{biqmac}}

The data for this experiment is from OR Library\cite{orlib}. More details of data set can be found in Angelika\cite{biqmac} and Beasley\cite{beasley}.
 We selected instances, n=50,100, 250, 500 and all the coefficients are uniformly distributed integers in [-100,100].
The time limit was set to 30 minutes for n=50 and 1 hour for all other instances.

For n=50 and 100, most of the models produced an optimal solution before reaching the time limit and hence comparison is done based on the run times.

For n=250 and 500, optimality is not reached so we look for the models that give the tightest bounds. Hence, we categorize the comparison on basis of run time and best bounds.
\begin{figure}[h!]
\centering
\includegraphics[width=16cm, scale=1]{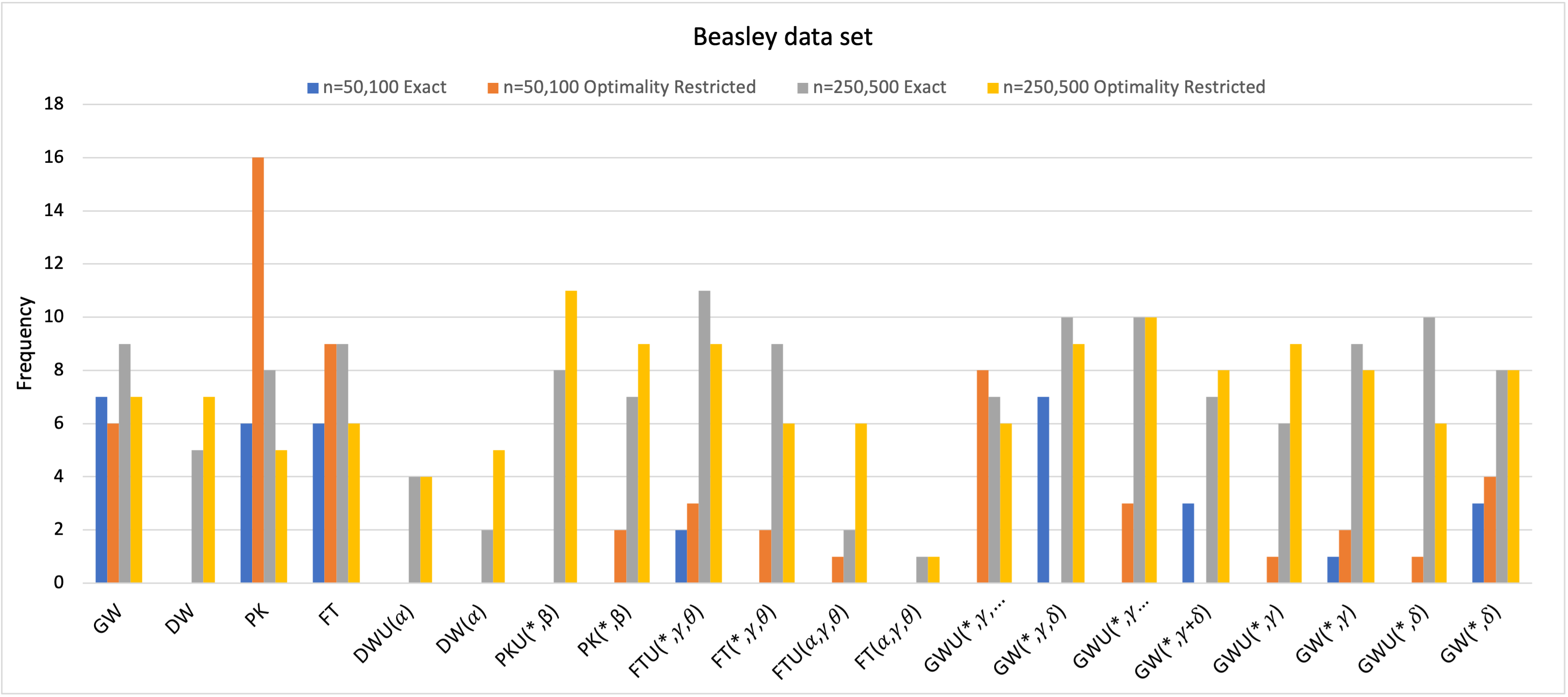}
\caption{Experimental analysis for Beasley data set}
\label{pic2}
\end{figure}

From the figure \ref{pic2} for smaller instances(n=50,100), it is evident that PK model is the best most number of times overall but among
the aggregations optimality restricted version of GW($\beta,\gamma$) was the best. Further, it may be noted that
for this data set, optimality restricted models have better run times.\\
For n=250 and 500 ,unweighted version of the optimality restricted version of PK($*,\beta$) and exact version of FT($*,\gamma,\delta$) gave the
best bound often.

Overall, ORPK and GW($*,\beta,\gamma$) were the best with
frequency 21 and 17  respectively out of 40.
For the basic models, for n=50,100 optimality restricted models
performed better whereas for n=250,500 the exact models were better.
For the aggregations, for n=50,100 optimality restricted models
performed better while for n=250,500 exact and optimality restricted
models were the same.
Also for the aggregations, overall unweighted models were better but
for exact models for n=50,100 weighted models were better.

\subsection{Billionet and Elloumi instances\cite{biqmac}}
\begin{figure}[ht!]
\centering
\includegraphics[width=16cm, scale=1]{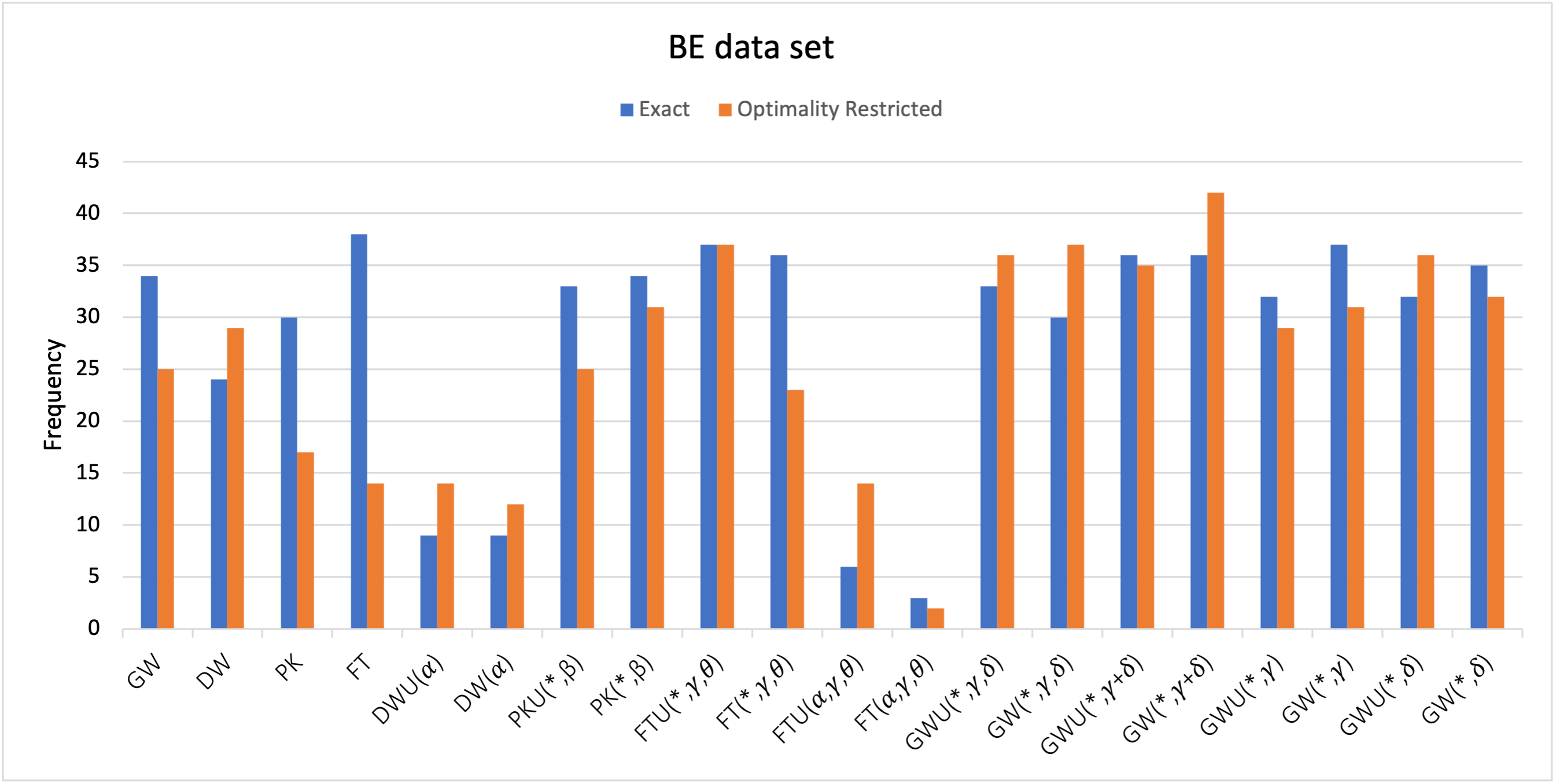}
\caption{Experimental analysis for BE data set}
\label{pic3}
\end{figure}
The third data set we used is Billionet and Elloumi instances~\cite{biqmac,pardalos}. Overall there are 80 instances in this data set with different $n$ and density. We have selected the problems with
n=100,120,150,200,250 where diagonal coefficients are from [-100,100] while off-diagonal
coefficients are from [-50,50]. The density of instances is taken as 1 for n=100, 0.3 and 0.8 for
n=120,150,200 and 0.1 for n=250. The time limit was set 1 hour for all of the instances. For this data set, all the instances reached the time limit. So the comparison is done with respect to value of heuristic solutions.

Overall, from figure \ref{pic3} optimality restricted version of GW($*,\gamma+\delta$) is the best with 42 frequency followed by basic model FT with 38 frequency. It is worth noting that all the aggregations of GW and PK gave good solutions and results are either
better than or at least comparable to GW.
Also, unweighted formulations performed better than weighted formulations for this data set. For the basic models, exact models were better. For the weighted aggregations, exact and optimality restricted models behaved the same but for unweighted aggregations optimality restricted models were better. Also for the aggregations, in general unweighted models were better for this data set.

On considering the analysis for all the data sets it is evident that type 2 aggregations of GW and PK performed better overall.

\section*{Conclusion} In this paper, we have presented various MILP formulations for QBOP.  This includes some basic models and models obtained by  the selective aggregation of constraints of the basic models. Unlike the aggregation based models studied in the literature for general integer programs, our models provide viable alternatives to practical problem solving. Theoretical and experimental analysis on the models has been carried out to assess the relative strength of the model. We also developed  new benchmark instances which will be made available through Github for future experimental work. The applicability of the basic strategies we used in this paper goes beyond QUBO to other quadratic models ( e.g. \cite{meijer1,wu1}) as well as models that are not explicitly quadratic. More details about this will be reported in a sequel. Our computational results disclose that aggregation based models, particularly the type 2 aggregation based models, performed well along with the advantage of having same LP relaxation bound as their corresponding Basic models when choosing the multipliers as optimal dual variables. The aggregation methods suggested in this paper can be used in other integer programming models. For example, the stable set problem, vertex cover problem, facility location, among others. For details on such applications, we refer to the forthcoming Ph.D. thesis~\cite{nk} and our followup papers.

\section*{Acknowledgments} This work was supported by an NSERC discovery grant awarded to Abraham P. Punnen. A preliminary reduced version of this paper was released as a technical report of the mathematics department, Simon Fraser University.

\end{document}